\documentclass[11pt]{amsart}
\usepackage{geometry,verbatim,cite}
\geometry{letterpaper}                   
\usepackage{graphicx}
\usepackage{amssymb}
\usepackage{epstopdf}
\usepackage{amsmath,amscd}
\allowdisplaybreaks
\usepackage{mathrsfs}
\usepackage{caption}
\usepackage{subcaption}
\usepackage{color}
\usepackage{enumerate}
\usepackage{subcaption}
\usepackage{ulem}
\usepackage{url}

\newtheorem{theorem}{Theorem}[section]
\newtheorem{lemma}{Lemma}[section]

\newtheorem{definition}{Definition}[section]

\newtheorem{example}{Example}[section]
\newtheorem{assumption}{Assumption}

\makeatletter
\makeatother

\newcommand{\MSspace}{\mathcal{X}}
\newcommand{\RR}{\mathbb{R}}
\newcommand{\R}{\mathcal{R}}
\newcommand{\Z}{\mathbb{Z}}
\newcommand{\M}{\rm{M}}
\newcommand{\N}{\mathbb{N}}
\newcommand{\C}{\mathbb{C}}
\newcommand{\CC}{\mathcal{C}}
\newcommand{\E}{\mathcal{E}}
\newcommand{\B}{\mathcal{B}}
\newcommand{\F}{\mathcal{F}}
\newcommand{\tK}{\tilde{K}}

\newcommand{\G}{\mathcal{G}}
\newcommand{\I}{\mathcal{I}}
\newcommand{\A}{\mathcal{A}}

\newcommand{\ms}{\text{ms}}
\newcommand{\Tr}{\text{Tr}}
\newcommand{\TrLim}{\underline{\Tr}}
\newcommand{\gauss}{\delta_\varepsilon}

\newcommand{\bw}{\tilde{\psi}}
\newcommand{\eH}{\widehat{H}}
\newcommand{\ER}{B_{\Sigma+\eta}}

\newcommand{\mBZ}{\Gamma_{\rm{M}}^*}
\newcommand{\mm}{\Theta}
\newcommand{\mR}{\R_{\rm{M}}^*}
\newcommand{\mG}{G_{\rm{M}}}
\newcommand{\iip}{\mathfrak{m}}
\newcommand{\ijp}{\mathfrak{n}}
\newcommand{\bthop}{\tilde{h}^{jj}_{\sigma\sigma'}}
\newcommand{\interh} {\hat{h}^{12}_{\sigma\sigma'}}
\newcommand{\GG}{{\widetilde{G}}}
\newcommand{\mE}{\widetilde{\mathcal{E}}}

\newcommand{\ex}{\widetilde{\chi}_r}

\newcommand{\eig}{\epsilon_i(q)}
\newcommand{\eigtp}{\epsilon_i(q,\iip,\ijp)}
\newcommand{\trcH}{\eH_r^{(\tau)}}
\newcommand{\tpH}{\eH_r^{(\iip,\ijp,\tau)}}
\newcommand{\op}{{\rm op}}
\newcommand{\htr}{d_\tau}
\newcommand{\bmH}{\widetilde{H}^{(\iip,\ijp,\tau)}}
\newcommand{\Jrl}[2]{J_{{#1} \leftrightarrow{#2}}}
\newcommand{\Rrl}[2]{R_{{#1} \leftrightarrow{#2}}}
\newcommand{\Hrl}[2]{H_{{#1} \leftrightarrow{#2}}}
\newcommand{\HT}{\mathbb{T}}
\newcommand{\Gjm}{\mathfrak{G}_{\M}}
\newcommand{\Gmj}{\mathfrak{G}_{j}}
\newcommand{\cP}{\widetilde{P}}
\newcommand{\cU}{\widetilde{U}}
\newcommand{\LS}{L_{\rm BZ}}


\title[Incommensurate Continuum Models]{Construction and Accuracy of Electronic Continuum Models of Incommensurate Bilayer 2D Materials}
\author{Xue Quan}
\address[Xue Quan]{Beijing Normal University, Beijing, China}
\email{xuequan@mail.bnu.edu.cn}
\author{Alexander B. Watson}
\address[Alexander B. Watson]{University of Minnesota, Minneapolis, Minnesota, U.S.A.}
\email{abwatson@umn.edu}
\author{Daniel Massatt}
\address[Daniel Massatt]{Louisiana State University, Baton Rouge, Louisiana, U.S.A.}
\email{dmassatt@lsu.edu}

\date{\today}

\begin{document}

\maketitle






\begin{abstract}
    Single-particle continuum models such as the popular Bistritzer-MacDonald model have become powerful tools for predicting electronic phenomena of incommensurate 2D materials and developing many-body models aimed at modeling unconventional superconductivity and correlated insulators. In this work, we introduce a procedure to construct continuum models of arbitrary accuracy relative to tight-binding models for moir\'{e} incommensurate bilayers. This is done by recognizing the continuum model as arising from Taylor expansions of a high accuracy momentum space approximation of the tight-binding model. We apply our procedure in full detail to two models of twisted bilayer graphene and demonstrate both admit similar Bistritzer-MacDonald models as the leading order continuum model, while higher order expansions reveal qualitative spectral differences.
    
\end{abstract}

\section{Introduction}

\begin{figure}[htb!]
\centering
\includegraphics[scale=.165]{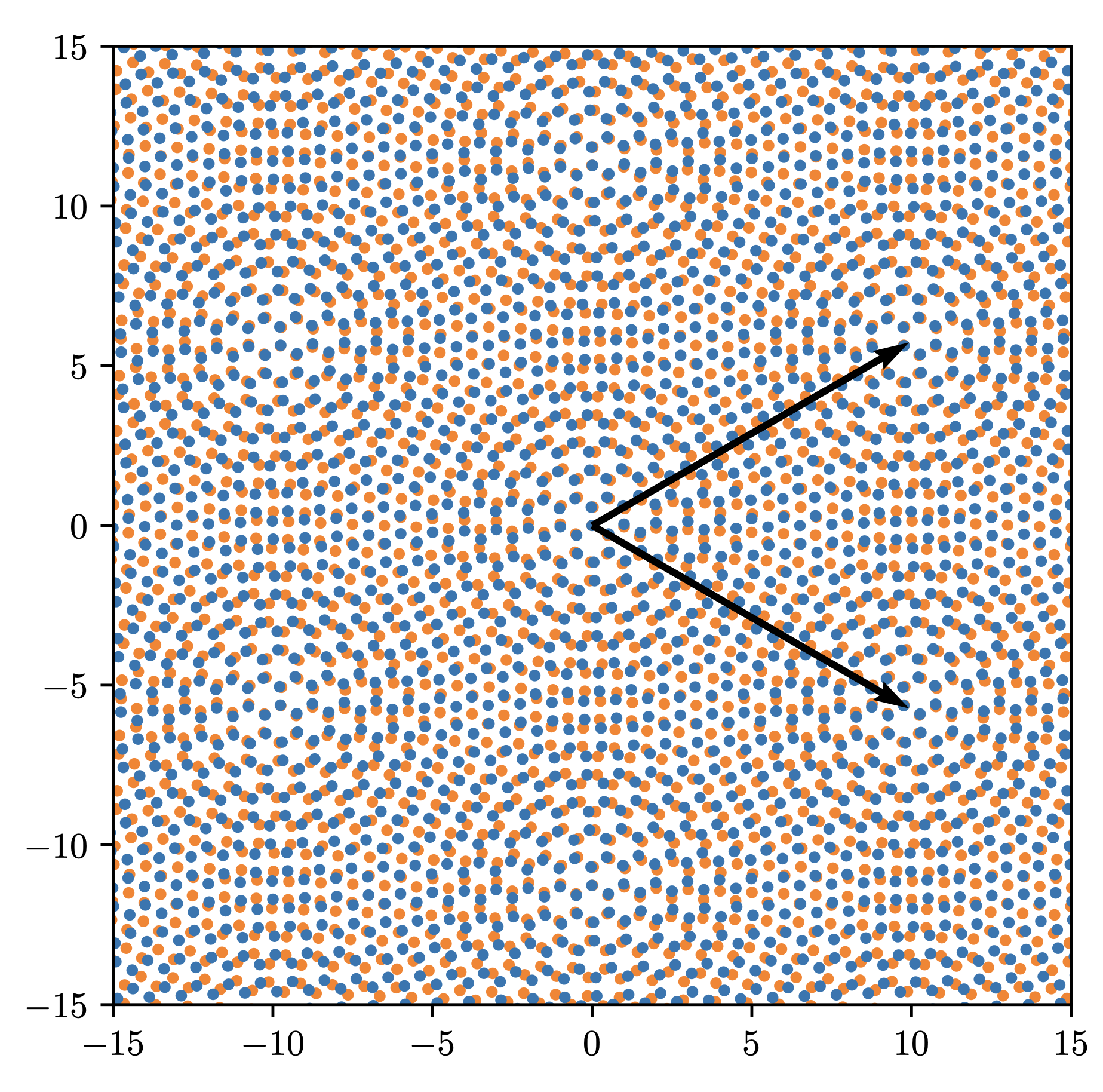}  
\includegraphics[scale=.26]{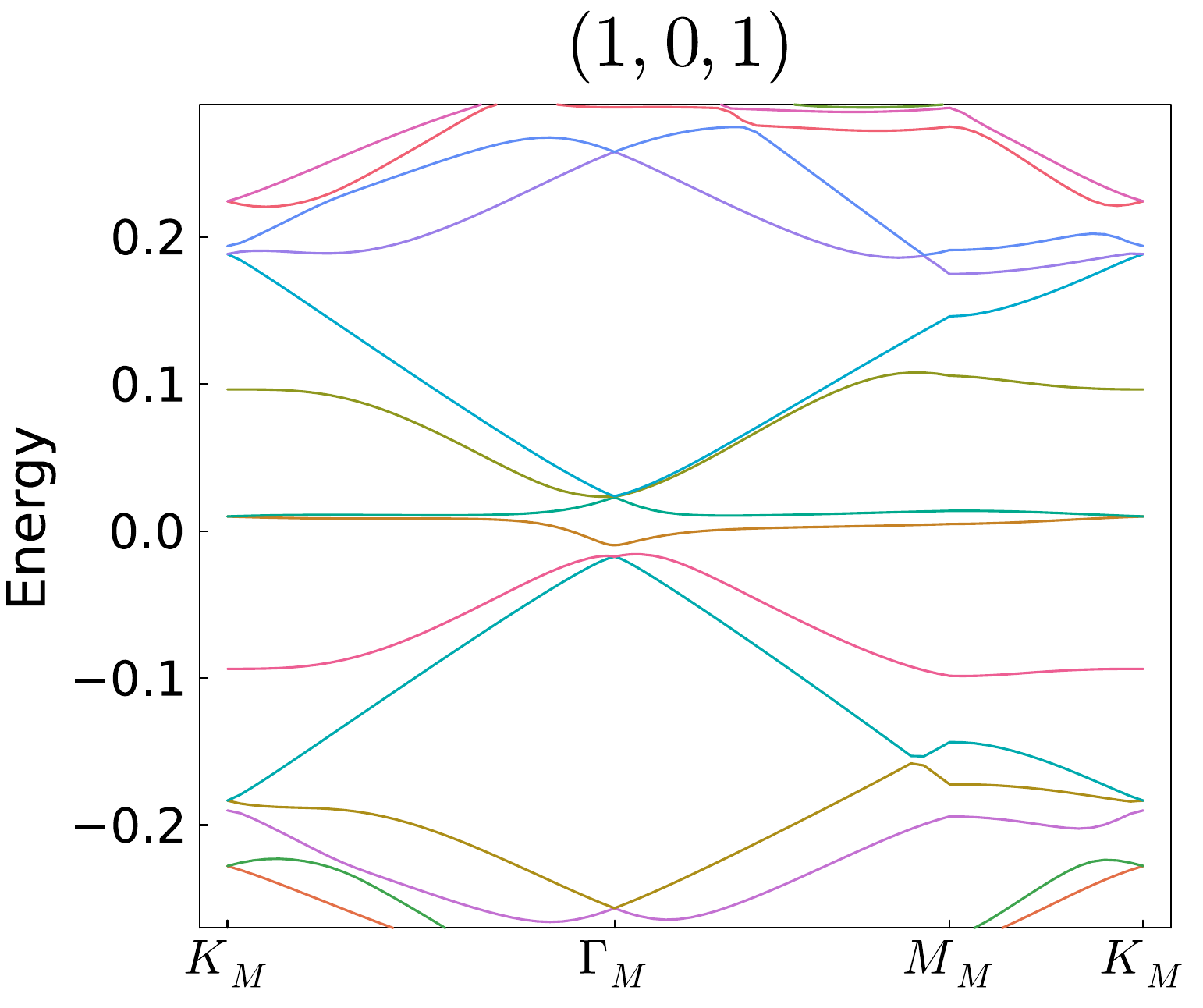}
\caption{
Left: atomic structure of stacked bilayer graphene with interlayer twist $5^\circ$, viewed from above. Atoms in the lower graphene sheet are colored orange and in the upper graphene sheet in blue. Moir\'e lattice vectors which generate the moir\'e lattice are shown in black. The continuum models we construct in the present work have coefficients which are periodic with respect to this lattice. Right: band structure of the Bistritzer-MacDonald continuum model \cite{Bistritzer2011} at the magic twist angle $1.1^\circ$, with parameter values taken from the ab initio tight-binding model proposed in \cite{Fang2016}. For more details, see Figure \ref{fig:bandstruct}; this figure is the same as the middle right figure there. Many-body models of twisted bilayer graphene typically model electrons in the flat bands, the two bands nearest to zero energy which are approximately flat over most of the moir\'e Brillouin zone, interacting via the Coulomb force. Two of the key inputs to such models are the flat band eigenvalues and associated eigenfunctions. The goal of the present work is to construct moir\'e-scale single-particle continuum models of twisted bilayer graphene's electronic properties with essentially arbitrary accuracy relative to tight-binding models, in order to provide more accurate flat band eigenvalues and eigenfunctions for many-body models.
}
\label{fig:TBG_struc}
\end{figure}

Research on incommensurate stackings of 2D materials has grown rapidly since the discovery of correlated insulator and ``unconventional'' superconducting states in twisted bilayer graphene (TBG) at the ``magic angle'' $\approx 1.1^\circ$ in 2018 \cite{Cao2018,Cao2018a,kim2021,ke2021}. To understand these phenomena, which arise from electron-electron correlations, there is a strong push to construct accurate yet tractable many-body models of such materials \cite{Bernevig2021, Romanova2022, Faulstich2023}. Many-body models are generally constructed by the following recipe. First, construct an accurate model for the single-particle electronic properties of the material. Then, extract a basis of energetically relevant states from this model. Finally, project the electron-electron Coulomb interaction to this basis. Hence a careful derivation of an accurate single-particle basis is fundamental.

A key feature of incommensurate stackings of 2D materials at relatively small twist angles is the emergence of approximate periodicity in the atomic structure known as the moir\'e pattern (see Figure \ref{fig:TBG_struc}). The moir\'e pattern allows for constructing relatively simple single-particle models which are, in particular, periodic with respect to the bilayer moir\'e pattern. 
Such models admit a band structure: their spectrum is the union of intervals (bands) swept out by real functions $k \mapsto E_n(k)$, where $n \in \mathbb{Z}$ is known as the band index, and $k$ varies over the moir\'e Brillouin zone. The associated eigenfunctions of these eigenvalues are Bloch functions $e^{i k \cdot r} u_n(r;k)$, where $u_n$ is moir\'e-periodic. When constructing many-body models, the energetically relevant single-particle states are usually taken to be the set of Bloch functions associated to the single-particle bands near to the Fermi level.


The most famous moir\'e-periodic single-particle model for twisted bilayer graphene is the Bistritzer-MacDonald (BM) model \cite{Bistritzer2011}. This model comprises a system of continuum Dirac equations, characteristic of the electronic structure of monolayer graphene, coupled through a matrix-valued moir\'e-periodic potential modeling interlayer hopping depending on the local stacking configuration. This model has been mathematically rigorously justified as an accurate model for electron dynamics under assumptions which are expected to hold in twisted bilayer graphene at the magic angle \cite{watson2023}: specifically, that the twist angle $\theta \ll 1$ (in radians), and that the typical interlayer hopping energy is much less than the typical monolayer hopping energy. For other mathematical derivations, see \cite{quinn2025higherordercontinuummodelstwisted,cances2023,cances2023semiclassical,cances2021secondorder}. The Bistritzer-MacDonald model played an important role in identifying TBG's magic angles, twist angles where the BM model's Bloch bands near to the Fermi level become relatively flat. This phenomenon is thought to be the reason why magic angle TBG exhibits such a rich many-body phase diagram.

The results of the present work are as follows. First, we introduce a systematic procedure for constructing moir\'e-periodic single-particle continuum models of TBG in the same regime where the BM model is valid, but to essentially arbitrary accuracy, assuming the validity of an atomic-scale tight-binding model. Second, we rigorously prove convergence of the density of states distribution of the underlying tight-binding model to that of the continuum models we construct. Finally, we compute band structures for the models we construct, comparing the results starting from a simplified tight-binding model as in \cite{watson2023} and a more elaborate model parametrized by density functional theory (DFT) computations \cite{Fang2018}. We find that both models admit similar Bistritzer-MacDonald models as their leading-order approximations, but higher-order expansions reveal qualitative differences. 

The models we construct account for terms which are neglected in the Bistritzer-MacDonald model, such as higher-order dispersion in the intralayer part of the Hamiltonian, and longer range momentum space hopping and derivative terms in the interlayer part. We emphasize that the energies of these terms are comparable to energy differences between competing many-body ground states in interacting models of twisted bilayer graphene \cite{Faulstich2023}, suggesting they may be important in correctly predicting TBG's many-body phase diagram. 

The construction works by building on efficient methods for accurately computing the electronic structure of incommensurate two-dimensional multi-layered materials \cite{Massatt2017,MassattCarrLuskinOrtner2018,Massatt2020,Massatt2023,PhysRevResearch.2.033162,Wang2025}. The idea is as follows. First, as in \cite{MassattCarrLuskinOrtner2018,Massatt2023}, we identify the set of energetically relevant electronic states within each layer; in twisted bilayer graphene these are the states near to the monolayer Dirac points. We then truncate the full tight-binding model to these states, accruing exponentially small error in the truncation threshold as proved in \cite{MassattCarrLuskinOrtner2018}. 
We then further simplify the truncated model by Taylor-expanding about the Dirac points of each layer. Finally, we realize our accurate models as natural extensions of the truncated model to arbitrary wave-numbers, i.e., taking the continuum limit. 

By controlling the error at each step, we obtain error estimates in approximating the density of states distribution in terms of three truncation parameters: the intralayer Taylor-expansion order, interlayer expansion order, and momentum-space hopping distance. In the limit where these parameters become large, we recover the exponential rate of convergence of \cite{MassattCarrLuskinOrtner2018}. These results are summarized in our main result, Theorem \ref{thm:convergence:dos}. This approach can be easily generalized to other two-dimensional material stackings such as layered transition metal dichalcogonides and incommensurate trilayer systems, for which moir\'e-periodic single particle models analogous to the Bistritzer-MacDonald model have been proposed \cite{tmdc2021, koshino2023}.

We finally discuss related mathematical work. In \cite{quinn2025higherordercontinuummodelstwisted}, higher-order corrections to the Bistritzer-MacDonald model have been derived following a different methodology from the present work, and proving different results. Specifically, in that work error estimates are proved for the dynamics generated by a tight-binding Schr\"odinger equation of twisted bilayer graphene by a multiple-scales approach. In \cite{cances2023,cances2023semiclassical,cances2021secondorder}, moir\'e-scale continuum models such as the BM model have been derived as approximations of a DFT-informed incommensurate linear Schr\"{o}dinger model by making use of a semiclassical expansion, with accuracy quantified for the density of states. In \cite{watson2021,BeckerEmbreeWittstenZworski2021,BeckerEmbreeWittstenZworski2022,Humbert2022} insight has been gained into the Bistritzer-MacDonald model's almost flat bands by studying an approximation of this BM model known as the chiral model introduced in \cite{Tarnopolsky2019}. The chiral model also has an important role in studying TBG's many-body electronic properties. Many-body models for TBG's electronic properties have been proposed and solved in \cite{Faulstich2023,becker2023exactgroundstateinteracting,stubbs2024hartreefockgroundstatemanifold}, although these studies generally rely on the Bistritzer-MacDonald model without the corrections we derive here. Another important aspect of the physics of twisted bilayer graphene near the magic angle are the material's mechanical/structural properties. The ground state (relaxed) mechanical properties have been investigated in \cite{2018CarrMassattTorrisiCazeauxLuskinKaxiras,Cazeaux2020,Hott2024}, while phonons have been recently considered in \cite{hott2024mathematicalfoundationsphononsincommensurate}.

\subsection{Structure of paper} The structure of the rest of this paper is as follows. In Section 2, we introduce the momentum space formulation as a transformation of real space tight-binding models and review previous momentum space results we need. In Section 3, we derive continuum models as approximations of momentum space models, and analyze the convergence of the band structure. In Section 4, we verify the convergence results for the simplified TBG and the ab initio Wannierized TBG, and derive high accuracy continuum models for each. The proofs are in the Appendices for clarity of presentation.

\subsection{Acknowledgements}

ABW's research was supported in part by NSF grant DMS-2406981.
DM's research was supported in part by AFRL grant FA9550-24-1-0177.
XQ's research was supported by the National Natural Science Foundation of China (No. 12371431).

\section{Tight-Binding Model}

In this work we consider two tight-binding models of twisted bilayer graphene (TBG): the relatively simple model considered in \cite{watson2023}, and the more elaborate model proposed in \cite{Fang2016} based on ab initio density functional theory calculations. In this section we describe a general tight-binding model of TBG and explain how to obtain the simplified model as a special case of this model.


\subsection{Monolayer graphene model}

 We start by describing a general tight-binding model for an unrotated graphene monolayer. Since graphene is a sheet of carbon atoms arranged in a honeycomb lattice, we introduce Bravais lattice vectors as
\[
a_1 := \frac{a}{2}(1,\sqrt{3})^T, \qquad a_2 := \frac{a}{2}(-1,\sqrt{3})^T, \qquad A := (a_1, a_2),
\]
where $a>0$ is the lattice constant.
The graphene Bravais lattice and its unit cell are defined by
\begin{equation*}
    \R := A\Z^2, \qquad \Gamma := A[0,1)^2.
\end{equation*}
The associated reciprocal lattice and reciprocal lattice unit cell (Brillouin zone) are given by
\begin{equation*}
    \R^* := 2\pi A^{-T}\Z^2, \qquad \Gamma^* := 2\pi A^{-T}[0,1)^2.
\end{equation*}
The Dirac points, high-symmetry points in the Brillouin zone, play a critical role in the electronic properties of graphene and are given by
\begin{equation*}
    K := \frac{4\pi}{3a}(1,0)^{T}, \qquad K':=-K.
\end{equation*}
We further define orbital displacement vectors (with $d := \frac{a}{\sqrt{3}}$):
\begin{equation*}
    \tau_A := (0,0)^T, \qquad \tau_B := (0,d)^T,
\end{equation*}
so that the two atoms in the $R$-th lattice are located at $R+\tau_A$ and $R+\tau_B$.
Then the real space Hamiltonian can be defined via the infinite matrix
\[
H_{R\sigma,R'\sigma'}:=h_{\sigma\sigma'}(R-R'),
\]
where $h_{\sigma\sigma'}:\R\to\C,h_{\sigma\sigma'}\in\ell^1(\R)$.
For $\psi = (\psi_R)_{R\in\R}=(\psi_{RA},\psi_{RB})^T_{R\in\R}\in \ell^2(\R \times \{A,B\})$, $H$ acts as
\begin{equation}
\label{MGham}
(H\psi)_{R\sigma} = \sum_{\sigma'\in\{A,B\}}\sum_{R'\in\R}h_{\sigma\sigma'}(R-R')\psi_{R' \sigma'}.
\end{equation}

\subsection{Twisted bilayer tight-binding model}

Twisted bilayer graphene (TBG) consists of two graphene monolayers with a relative twist. We let 
\[
R(\vartheta) := \begin{pmatrix} \cos \vartheta & -\sin \vartheta \\ \sin \vartheta & \cos\vartheta\end{pmatrix}
\]
be the rotation matrix and $\theta>0$ be the twist angle. The lattice vectors of each layer are defined by
\begin{align*}
    a_{1,j} := R\biggl(-\frac{\theta}{2}\biggr)a_j, \qquad a_{2,j} := R\biggl(\frac{\theta}{2}\biggr)a_j, \qquad A_j := (a_{j,1}, a_{j,2} ).
\end{align*}
We then have
\[\R_j := A_j\Z^2, \qquad\Gamma_j := A_j[0,1)^2,\]
\[\R_j^* := 2\pi A_j^{-T}\Z^2,\qquad\Gamma_j^* := 2\pi A_j^{-T}[0,1)^2.\]
We require the following definition and assumption.
\begin{definition}
    Two Bravais lattices $\mathcal{L}_1$ and $\mathcal{L}_2$ are incommensurate if 
    \[
    \mathcal{L}_1 \cup \mathcal{L}_2 + v = \mathcal{L}_1 \cup \mathcal{L}_2 \qquad \Leftrightarrow \qquad v = \begin{pmatrix} 0 \\ 0\end{pmatrix}.
    \]
\end{definition}
\begin{assumption} \label{as:incommensurate}
We assume from this point on that the lattices $\R_1$ and $\R_2$ are {\em incommensurate}, and that the dual lattices $\mathcal{R}_1^*$ and $\mathcal{R}_2^*$ are also incommensurate.
\end{assumption}
Note that incommensurability of the dual lattices is not equivalent to incommensurability of the real space lattice; see, e.g., \cite{Wang2025}.

The Dirac points and orbital displacements are similarly rotated
\begin{align*}
K_1:=R\biggl(-\frac{\theta}{2}\biggr)K, \qquad K_2:=R\biggl(\frac{\theta}{2}\biggr)K,\qquad K'_j:=-K_j,
\end{align*}

\begin{align*}
    \tau_{1\sigma} := R\biggl(-\frac{\theta}{2}\biggr)\tau_\sigma, \qquad \tau_{2 \sigma} := R\biggl(\frac{\theta}{2}\biggr)\tau_\sigma .
\end{align*}
Let $\A_1:=\{1A,1B\}$ and $\A_2:=\{2A,2B\}$ be the set of indices of orbital displacements for sheet $1$ and $2$ respectively. Let $\Omega_j:=\R_j\times\A_j$, then the full degree of freedom space is $\Omega := \Omega_1\cup\Omega_2$. 
We write the wave functions in TBG by $\psi = (\psi_1,\psi_2)^T\in\oplus_{j=1}^2\ell^2(\Omega_j)$, the incommensurate tight-binding model acts as 
\[
H\psi = \begin{pmatrix} H_{11} & H_{12} \\ H_{12}^\dagger & H_{22}\end{pmatrix}\begin{pmatrix} \psi_1 \\ \psi_2\end{pmatrix}.
\]
Here the intralayer part $H_{jj}$ is similarly given by
\[
(H_{jj})_{R\sigma,R'\sigma'} := h^{jj}_{\sigma\sigma'}(R-R'),\qquad R\sigma,R'\sigma'\in\Omega_j,\qquad j\in\{1,2\},
 \]
with $h^{jj}_{\sigma\sigma'}:\R_j\to\C,h^{jj}_{\sigma\sigma'}\in\ell^1(\R_j)$, and the interlayer part $H_{12}$ is given by 
\[
(H_{12})_{R\sigma,R'\sigma'} := h^{12}_{\sigma\sigma'}(R+\tau_{\sigma}  - R'-\tau_{\sigma'}), \qquad R\sigma\in\Omega_1,~ R'\sigma'\in\Omega_2,
\]
with $h^{12}_{\sigma\sigma'}:\RR^2\to\C,h^{12}_{\sigma\sigma'}\in C(\RR^2)$. We denote the Fourier transform pair by 
\begin{equation}
\label{fourier}
\begin{split}
    &\F h(\xi):=\hat{h}(\xi) := \int_{\RR^2}e^{-i x\cdot \xi} {h}(x)\; dx \\
    &\F^{-1}\hat{h}(x):=\frac{1}{(2\pi)^2}\int_{\RR^2}e^{i x\cdot \xi} \hat{h}(\xi)\; d\xi.
\end{split}
\end{equation}
We will employ the following assumption for interlayer hopping functions:
\begin{assumption}
    \label{assumption:interhop}
    The interlayer hopping functions $h^{12}_{\sigma\sigma'}\in C(\RR^2)$ for $\sigma\in\A_1,\sigma'\in\A_2$ satisfy for some $\gamma_{12},\gamma_{12}'>0$
    \[
|h^{12}_{\sigma\sigma'}(x)|\lesssim e^{-\gamma_{12}|x|} \quad \forall x \in \mathbb{R}^2 \qquad {\rm and}\qquad|\hat{h}^{12}_{\sigma\sigma'}(\xi)|\lesssim e^{-\gamma_{12}'|\xi|} \quad \forall \xi \in \mathbb{R}^2,
\]
where we write $a \lesssim b$ to denote that there exists a constant $C > 0$ independent of parameters appearing on either the left or right-hand side so that $a \leq C b$.
\end{assumption}
For intralayer, we use the following assumption:
\begin{assumption}
\label{assumption:intrahop}
    The intralayer hopping functions $h^{jj}_{\sigma\sigma'}\in\ell^1(\R_j)$ for $\sigma,\sigma' \in \A_j$ satisfy for some $\gamma_j > 0$
    \[
    |h^{jj}_{\sigma\sigma'}(R)| \lesssim e^{-\gamma_j |R|} \quad \forall R \in \mathcal{R}_j.
    \]
\end{assumption}

We next describe the simplified model as a special case of the above structure. The intralayer and interlayer hopping functions arise in practice from overlap integrals of exponentially-decaying Wannier orbitals. A physically reasonable choice of such functions is therefore
\begin{align} \label{eq:toy_hs}
    &h^{jj}_{\sigma \sigma'}(x) = e^{- \alpha |x + \tau_{\sigma} - \tau_{\sigma'}|}, \qquad j \in \{1,2\}, 
    \\[1ex]\label{toy_hs_inter}
    &h^{12}_{\sigma \sigma'}(x) = h_{12}(x)= e^{- \beta \sqrt{ |x|^2 + z^2 }},
\end{align}
for constants $\alpha, \beta, z > 0$. Here, $z$ models the non-zero interlayer displacement. These functions clearly satisfy Assumptions \ref{assumption:interhop} and \ref{assumption:intrahop}; the Fourier transform of the interlayer hopping function is \cite{Bateman1954a}
\begin{equation} \label{eq:toy_hs_FT}
    \hat{h}_{12}(\xi) = 2\pi\frac{\beta e^{- z \sqrt{ |\xi|^2 + \beta^2 } } \left( 1 + z \sqrt{ |\xi|^2 + \beta^2 } \right) }{ ( |\xi|^2 + \beta^2 )^{3/2} }.
\end{equation}
More accurate forms for the functions \eqref{eq:toy_hs}-\eqref{eq:toy_hs_FT} can be obtained from DFT calculations; this is the approach of \cite{Fang2016}, which we refer to as the Wannierized model. We provide computations using this model below.

\subsection{Momentum Space}
\label{subsec:ms}
We now transform the real space Hamiltonian to the momentum space setting. We denote wave functions in the Bloch domain $L^2(\Gamma_j^*;\C^2)$ by $\tilde{\psi}_j=\big(\tilde{\psi}_j(q)\big)_{q\in\Gamma_j^*}=\big((\tilde{\psi}_{j})_{jA}(q),(\tilde{\psi}_{j})_{jB}(q)\big)^T_{q\in\Gamma_j^*}$, and define unitary Bloch transforms in each layer by
\begin{equation}
\begin{split}
    &[{\mathcal{G}}_j {\psi}_j]_{\sigma}(q) := \frac{1}{|\Gamma_j^*|^{1/2}}\sum_{R \in \R_j} e^{-iq\cdot (R + \tau_{\sigma})}(\psi_{j})_{R \sigma} \\
    &[{\mathcal{G}}_j^*{\tilde{\psi}}_j]_{R\sigma} := \frac{1}{|\Gamma_j^*|^{1/2}}\int_{\Gamma_j^*}e^{iq\cdot (R + \tau_{\sigma})}\tilde{\psi}_{j\sigma}(q)\;d q,
\end{split}
\end{equation}
for $R\sigma\in\Omega_j$, $j \in \{1,2\}$.
Note that with this convention, Bloch transforms are quasi-periodic with respect to $\R_j^*$ in the sense that  
\begin{equation}
[\G_j\psi_j]_\sigma(q+G) = e^{-iG\cdot\tau_{\sigma}}[\G\psi_j]_\sigma(q), \qquad G\sigma\in\R_j^*\times\A_j,
\end{equation}
so the functions in $L^2(\Gamma_j^*;\C^2)$ can be defined over $\RR^2$.
We denote momentum space bilayer wavefunctions by
\begin{equation}
    \tilde{\psi} = ( \tilde{\psi}_1 , \tilde{\psi}_2 )^T \in \oplus_{j=1}^2 L^2(\Gamma_j^*;\mathbb{C}^2).
\end{equation}
Let $\G = (\G_1,\G_2)$, the momentum space Hamiltonian is defined via
\begin{equation}
H^\ms\bw := \G H\G^*\bw
 = \begin{pmatrix} \G_1H_{11}\G_1^*\bw_1 +\G_1H_{12}\G_2^*\bw_2 \\[1.5ex]
 \G_2H_{12}^\dagger \G_1^*\bw_1 +\G_2H_{22}\G_2^*\bw_2\end{pmatrix}.
\end{equation}
Up to a phase change in the definition of the Bloch transform, the calculation from previous work \cite{MassattCarrLuskinOrtner2018} gives
\begin{equation}
\begin{split}
     &[\G_jH_{jj}\G_j^*]_ {\sigma\sigma'} =  \bthop(\cdot) :=  \sum_{R_j \in \mathcal{R}_j} e^{- i (\cdot) \cdot (R_j + \tau_{\sigma} - \tau_{\sigma'})} h_{\sigma \sigma'}^{jj}(R_j),\qquad \sigma,\sigma'\in\A_j
\\
     &[\G_1H_{12}\G_2^*]_{\sigma\sigma'} =c_1c_2\sum_{G \in \R_1^*} e^{iG\cdot\tau_{\sigma}}\interh(\cdot+G) T_G,\qquad\sigma\in\A_1,\sigma'\in\A_2
\end{split}  
\end{equation}
where $c_j:=|\Gamma_j|^{-1/2}$, 
and $T_G$ is the translation operator on $L^2(\Gamma_j^*;\C^2)$:
\[
(T_G\bw_j)(q) = \bw_j(q+G).
\]

Recalling that we assume incommensurability of the bilayer real space lattices (Assumption \ref{as:incommensurate}), using Birkhoff's ergodic theorem, a large class of observables associated to the momentum space Hamiltonian $H^\text{ms}$ equal the same observables associated to a discrete ``reciprocal space'' Hamiltonian $\widehat{H}$ describing hopping on the bilayer reciprocal lattices. This perspective turns out to be the most convenient for justifying truncation of the Hamiltonian to a family of finite-dimensional matrices for numerical computation. In deriving the reciprocal space Hamiltonian we have to restrict attention to the class of analytic functions in momentum space, rather than considering the full $L^2$ space, so that we can consider pointwise values of the momentum space wavefunctions. Since when we evaluate observables we consider the trace over plane wave basis functions, which clearly belong to this space, this restriction is not significant.

We now derive the reciprocal space Hamiltonian following \cite{Massatt2023,MassattCarrLuskinOrtner2018}. Let $\MSspace_j$ be the space of analytic periodic functions in $L^2(\Gamma_j^*;\C^2)$, and $\Omega_j^*:=\R_{{\rm F}_j}^* \times \A_j$ with ${\rm F}_1 := 2, {\rm F}_2 :=1$.
We define a map $\E_{q,j}:\MSspace_j\to\ell^\infty(\Omega_j^*)$ that takes functions defined over momenta to a discrete dense sampling of the function 
\begin{align*}
    & \{\E _{q,j}\bw_j\}_{G\sigma}:=\bw_{j\sigma}(q+G).
\end{align*}
Let $\Omega^* =\Omega_1^*\cup\Omega_2^*$ and
\[
\E_q := \begin{pmatrix} \E_{q,1}& 0 \\ 0 & \E_{q,2}\end{pmatrix},
\]
the reciprocal space Hamiltonian $\widehat{H}(q) : \ell^2(\Omega^*)\rightarrow \ell^2(\Omega^*)$ takes the form
\begin{align}
\label{unfold}
\E_q \big(H^\ms\bw\big)=:\eH(q)\E_q\bw,\qquad \bw \in \oplus_{j=1}^2 \MSspace_j.
\end{align}
Likewise, $\eH(q)$ can be denoted in a sheet-wise decomposition
\begin{equation}
\label{recipH}
    \eH(q) = \begin{pmatrix} \eH_{11}(q)& \eH_{12}(q) \\ {\eH_{12}}^\dagger(q) &\eH_{22}(q) \end{pmatrix},
\end{equation}
with intralayer entries  
\begin{align*}
[\eH_{jj}(q)]_{G,G'} :=  \tilde{h}^{jj}(q+G)\delta_{GG'}, \qquad G,G'\in\R_{{\rm F}_j}^*
\end{align*}
and interlayer entries
\begin{align*}
[\eH_{12}(q)]_{G,G'} := c_1c_2\HT_{G,G'}\odot{\hat{h}^{12}}(q+G+G'), \qquad G\in \R_2^*, \;G' \in \R_1^*.
\end{align*}
Here $\HT_{G,G'}^{\sigma\sigma'} := e^{iG'\cdot\tau_{\sigma}} e^{-iG\cdot\tau_{\sigma'}}$ is the phase matrix, and $\odot$ is the Hadamard product operator so that for $2\times 2$ matrices $\HT,\mathbb{S}$ and $\sigma\in\A_1,\sigma'\in\A_2$,
\[
(\HT\odot\mathbb{S})_{\sigma\sigma'}:=(\HT)_{\sigma\sigma'}(\mathbb{S})_{\sigma\sigma'}.
\]

We further express electronic observables via \eqref{recipH}.
The density of states (DoS) is approximated by the Gaussian smearing for the thermodynamic limit trace 
\begin{equation} \label{eq:full_DoS}
D_\varepsilon(E):=\TrLim \; \gauss(E-H):= \lim_{r\to\infty}\frac{1}{\#\Omega_r}\sum_{R\sigma\in\Omega_r} [\delta_\varepsilon(E-H)]_{R\sigma,R\sigma},
\end{equation}
where $\delta_\varepsilon(E-\lambda):=\frac{1}{\varepsilon\sqrt{2\pi}}e^{-(E-\lambda)^2/2\varepsilon^2}$ is the Gaussian centered at $E\in\RR$, and $\Omega_r:=\{R\sigma\in\Omega:|R|<r\}$ is the restricted space with $r>0$.
The DoS can be re-expressed using an ergodic theorem from \cite{MassattCarrLuskinOrtner2018}, which we state for reference: 
 \begin{theorem}
     Assume Assumption \ref{as:incommensurate} and assume $H$ has hopping functions satisfying Assumptions \ref{assumption:interhop} and \ref{assumption:intrahop}. Then we have
\begin{equation}
\label{DOS:ms}
   D_\varepsilon(E) = \nu^* \sum_{j=1}^2\sum_{\sigma \in\A_j} \int_{\Gamma_j^*} [\delta_\varepsilon(E-\widehat{H}(q))]_{0\sigma,0
\sigma}\;dq
\end{equation}
with 
\[
\nu^* = \frac{1}{2} \left(\sum_{j=1}^2|\Gamma_j^*|\right)^{-1}.
\]
 \end{theorem}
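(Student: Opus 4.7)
The plan is to follow the standard ergodic-theoretic route for incommensurate bilayers: decompose the thermodynamic trace over $\Omega_r$ by layer, reduce each layer's spatial average to a disregistry average by exploiting incommensurability, and then identify that disregistry average with a Brillouin zone integral of the unfolded momentum space Hamiltonian $\widehat{H}(q)$ from \eqref{unfold}.

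First I would split the sum over $\Omega_r$ by layer $j$ and orbital $\sigma\in\A_j$ and use standard lattice point counting to verify
\[
\frac{\#\{R\in\R_j:|R|<r\}}{\#\Omega_r} \;\longrightarrow\; \frac{|\Gamma_j^*|}{2\sum_{j'}|\Gamma_{j'}^*|} = \nu^*\,|\Gamma_j^*|,
\]
which produces exactly the claimed $\nu^*$ prefactor once combined with the sum over $\A_j$. For each fixed $(j,\sigma)$, translating the site $R\sigma$ back to the origin exhibits $[\delta_\varepsilon(E-H)]_{R\sigma,R\sigma}$ as a function of the disregistry $b\in\Gamma_{j'}$ giving the position of the opposite layer $j'\neq j$ modulo its own lattice. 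The exponential decay hypotheses (Assumptions \ref{assumption:interhop} and \ref{assumption:intrahop}) make the resolvent/Fourier expansion of $\delta_\varepsilon(E-H)$ absolutely convergent in the hopping amplitudes and the diagonal entry continuous in $b$. By Assumption \ref{as:incommensurate}, the sequence $R\pmod{\R_{j'}}$ equidistributes on $\Gamma_{j'}$ as $R$ varies over $\R_j$, so Weyl's ergodic theorem yields
\[
\frac{1}{\#\{R\in\R_j:|R|<r\}}\sum_{\substack{R\in\R_j \\ |R|<r}}[\delta_\varepsilon(E-H)]_{R\sigma,R\sigma} \;\longrightarrow\; \frac{1}{|\Gamma_{j'}|}\int_{\Gamma_{j'}}[\delta_\varepsilon(E-H^{(b)})]_{0\sigma,0\sigma}\,db,
\]
where $H^{(b)}$ denotes the bilayer Hamiltonian with layer $j'$ rigidly shifted by $b$.

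Finally, I would convert the disregistry integral into a Brillouin zone integral by Bloch theory in layer $j$: the shift of layer $j'$ by $b$ corresponds, after Bloch transforming in layer $j$, to an integral over $q\in\Gamma_j^*$, and the unfolding map $\E_q$ identifies the resulting fibre operator with $\widehat{H}(q)$. The off-diagonal structure of $\widehat{H}(q)$, carrying the phase matrix $\HT_{G,G'}$ and the interlayer translations $T_G$, precisely encodes all interlayer scattering paths between momenta $q+G$ that would otherwise need to be summed by hand. Matching the $(0\sigma,0\sigma)$ diagonal and summing over $j$ and $\sigma$ delivers the stated identity. The main obstacle is the ergodic step, which requires a quantitative equidistribution rate together with uniform control on the $b$-modulus of continuity of the disregistry-indexed diagonal, both relying crucially on the exponential decay of the hopping functions; this is the substance of the corresponding theorem in \cite{MassattCarrLuskinOrtner2018}, which the present statement invokes directly in the TBG setting.
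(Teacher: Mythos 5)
The paper does not prove this result: it restates it verbatim from \cite{MassattCarrLuskinOrtner2018} ``for reference,'' so there is no in-paper proof to compare against, and the following assesses your proposal as a free-standing outline.

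Your first two steps are sound. The lattice-point counting giving $\#\{R\in\R_j:|R|<r\}/\#\Omega_r\to\nu^*|\Gamma_j^*|$ is correct, and using Weyl equidistribution of $R\bmod\R_{j'}$ on $\Gamma_{j'}$ (together with continuity in the disregistry supplied by the exponential decay of Assumptions \ref{assumption:interhop}--\ref{assumption:intrahop}) to turn the per-layer thermodynamic average into a disregistry integral is precisely the real-space ergodic theorem in the cited reference. The gap is your final step, which treats ``convert the disregistry integral into a Brillouin zone integral'' as bookkeeping. It is not: $\Gamma_{j'}$ (the disregistry torus) and $\Gamma_j^*$ (the layer-$j$ Brillouin zone) are different spaces with no canonical measure-preserving identification, and the shifted Hamiltonian $H^{(b)}$ is not unitarily equivalent to the unfolded operator $\widehat{H}(q)$ under any pointwise correspondence $b\leftrightarrow q$. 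In \cite{MassattCarrLuskinOrtner2018} the momentum-space representation \eqref{DOS:ms} is obtained by a separate ergodicity/unfolding argument in reciprocal space, and this is where the \emph{dual-lattice} half of Assumption \ref{as:incommensurate} enters --- the density of $q+\Omega^*$ and the well-posedness of $\E_q$ rest on incommensurability of $\R_1^*,\R_2^*$, which the paper explicitly flags as inequivalent to real-space incommensurability. Your outline invokes only the real-space half (for the equidistribution of $R\bmod\R_{j'}$), so the stated momentum-space identity does not follow from the steps you write; the passage from the disregistry average to the BZ integral is the substantive content of the theorem, not a change of variables.
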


\subsection{Reciprocal Space Truncation}

We next review the algorithm that truncates the reciprocal space formulation \eqref{recipH} as a finite sparse matrix to make the problem computationally tractable and numerically efficient \cite{MassattCarrLuskinOrtner2018,Massatt2023}.
When we are only interested in a specific range of energies, such as a small subset of the monolayer spectra, we can select a finite basis set based on the energy and momenta relationship arising from the monolayer band structure. 
Specifically, we define for an energy region $\mathrm{E}\subset\RR$ and $r>0$, the 
corresponding momenta regions in $\Gamma_j^*$
\begin{equation*}
    \Gamma_j^*(\mathrm{E}):=\Big\{q\in\Gamma_j^*:\sigma_j(q)\cap \mathrm{E}\neq\emptyset\Big\},
\end{equation*}
where $\sigma_j(q)$ is the set of eigenvalues of $\eH_{jj}(q)$, and the corresponding set of basis elements
\[
\Omega_r^*(q,\mathrm{E}):=\Big\{G\sigma\in\Omega_j^*:{\rm{mod}}_j(q+G)\in\Gamma_j^*(\mathrm{E})+B_r(0),\,j=1,2\Big\},
\]
where $B_r(0)$ is a ball in $\RR^2$ of radius $r$ centered at $0$, set addition is defined as $U+V :=\{u+v : u \in U, \; v \in V\}$, and $\text{mod}_j$ denotes values modulo the reciprocal lattice of layer $j$.
The set $\Omega_r^*(q,\mathrm{E})$ will generally be infinite because of incommensurability. 

Here we are interested in the case of narrow energy regions centered at the Dirac energy in graphene, and small twist angles. In this case, for sufficiently small $r > 0$,
$\Omega_r^*$ decomposes into finite subsets centered at the $K$ or $K'$ Dirac points, such that the distance between these subsets and their ``replicas'' arising from the modular arithmetic is proportional to inverse twist angle. This happens because, whenever $G$ is a reciprocal lattice vector of one layer, the value of $G$ modulo the other layer's reciprocal lattice has length proportional to $\theta$.
In order to pick out the region containing the starting momentum $q$, we define the map $I:\B(\Omega^*)\to\B(\Omega^*)$ as the operation that maps a subset of $\Omega^*$ to its isolated degrees of freedom containing the site $0\sigma$ for any $\sigma$, where $\B(\Omega^*)$ is the collection of subsets of $\Omega^*$. 

Assume we are only interested in the energy region around Dirac points $K,\;K'$, i.e., $B_\Sigma := (-\Sigma,\Sigma)$, $\Sigma > 0$. We expand it by $\ER:=(-\Sigma-\eta,\Sigma+\eta)$, with $\eta$ is a little larger than twice the interlayer coupling strength,
\begin{equation}
    \label{couple_strength}
    \eta :=(2+\alpha) \sup_{\mathop{\psi\in\ell^2(\Omega^*)}}\frac{1}{\|\psi^{(1)}\|_2\|\psi^{(2)}\|_2}(\psi^{(1)},0)\eH(q)
    \begin{pmatrix}
        0\\ \psi^{(2)}
    \end{pmatrix},
\end{equation}
where $\alpha > 0$. The basis space corresponding to $B_\Sigma$ is 
\begin{align}
\label{finite basis}
    \Omega_r^*(q) := I\big(\Omega_{r}^*(q,\ER)\big).
\end{align}
Note that for $r=0$, the momenta regions are bounded by the balls $B_{r_{\Sigma}}\big({\rm mod}_j(K_j)\big)$ and $B_{r_{\Sigma}}\big({\rm mod}_j(K_j')\big)$ with
\begin{equation}
    r_\Sigma:= \max_{\substack{q\in\Gamma_j^*(\ER)\\j=1,2}}\min_{\tK_j\in\{K_j,K_j'\}}\big|q-{\rm mod}_j(\tK_j)\big|.
\end{equation}
As an example for the relationship between energy, momenta regions and bounding balls, see Figure \ref{fig:momenta_trunc}.
\begin{figure}[htb!]
\centering
\includegraphics[height=6.5cm]{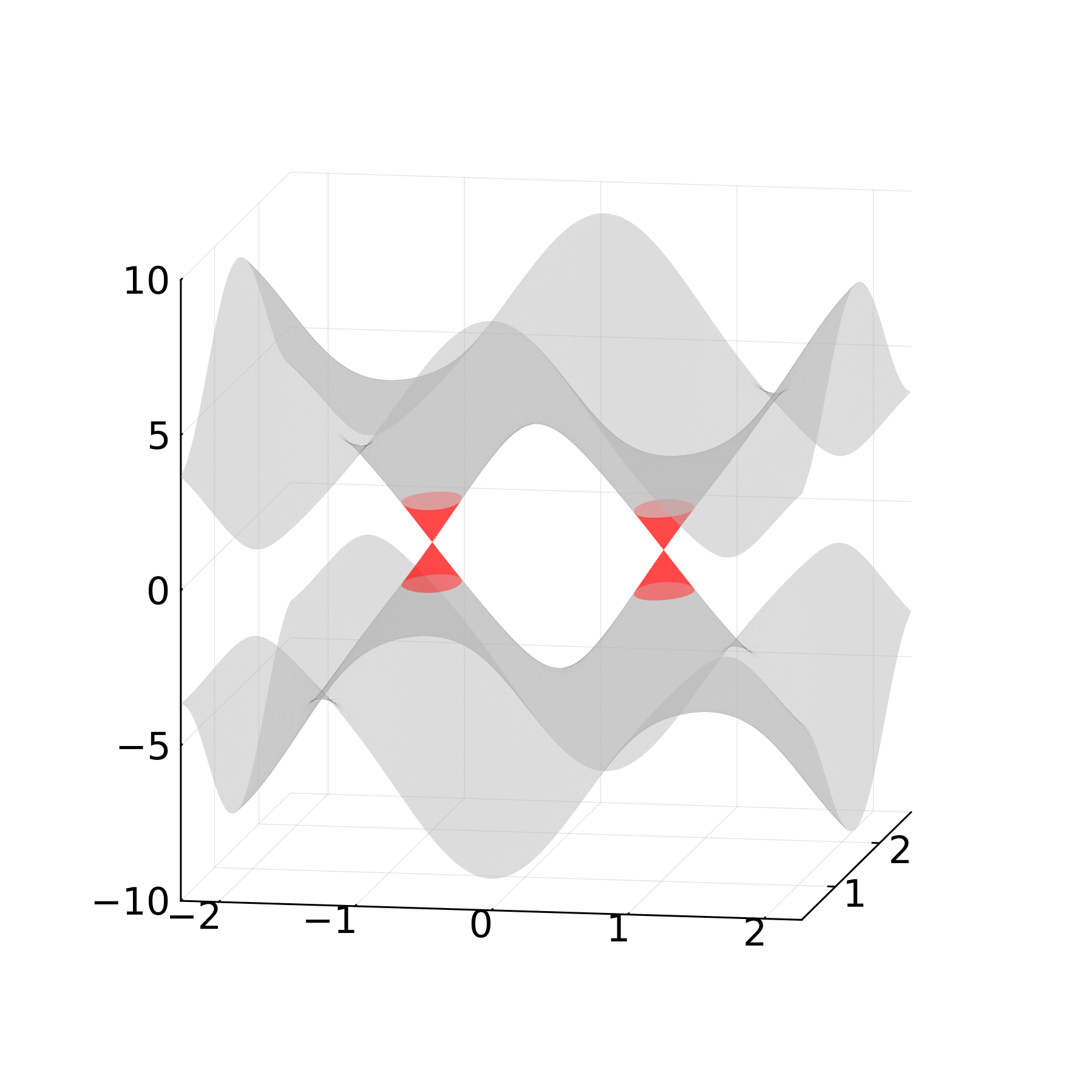} 
\includegraphics[height=5.5cm]{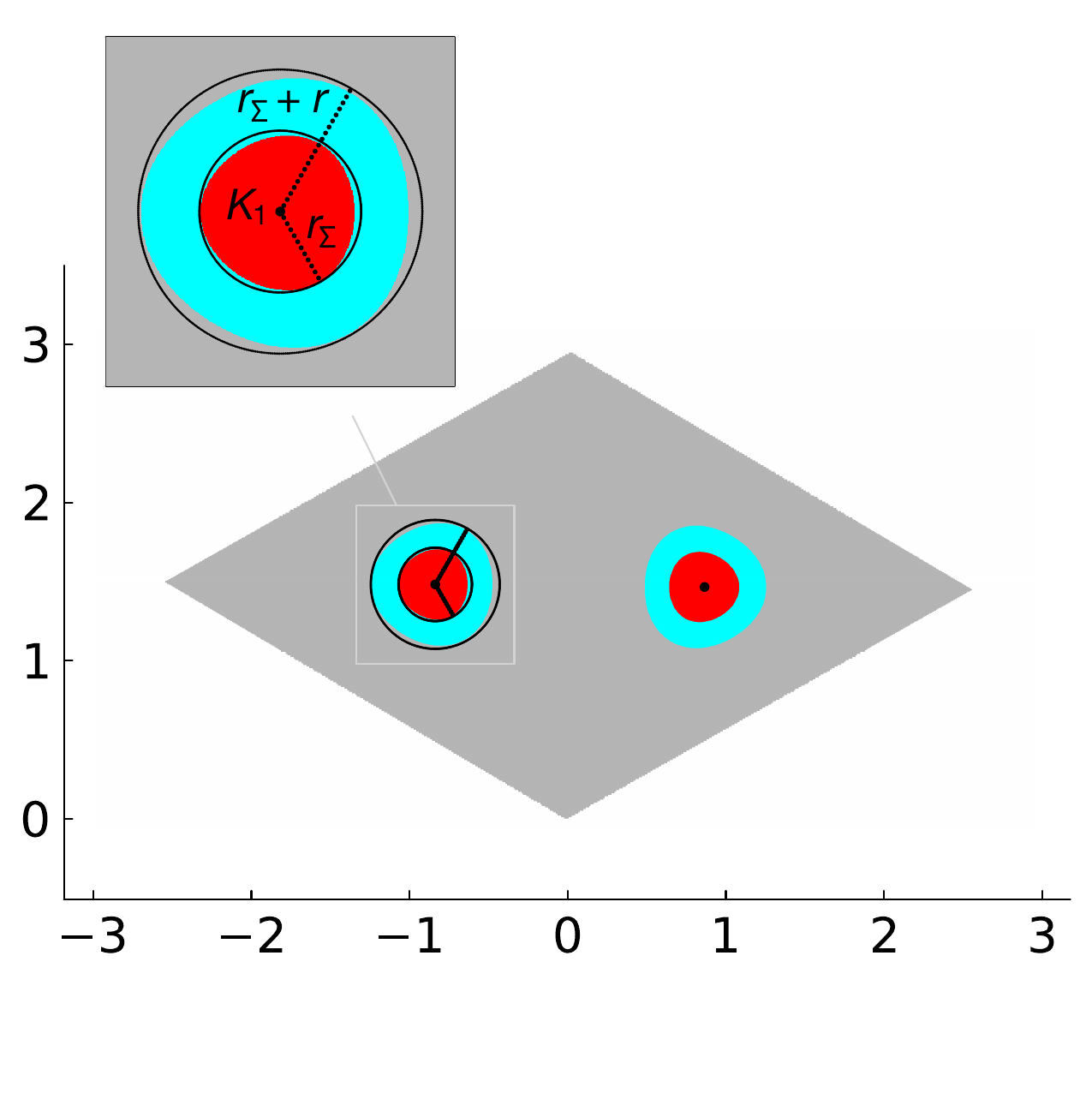}  
 
\setlength{\abovecaptionskip}{0pt} 
\caption{
The left picture shows the rotated monolayer graphene band structure, where the red regions correspond to ${\ER}$. The right picture shows the reciprocal unit cell of the rotated monolayer graphene, where the red regions are $\Gamma_1^*(\ER)$, and by adding the blue regions around them we obtain $\Gamma_1^*(\ER)+B_r(0)$. Here $K_1$ represents ${\rm mod}_1(K_1)$.
}
\label{fig:momenta_trunc}
\end{figure}
Additionally, there exists $r_m>0$ such that for $r+r_\Sigma>r_m$, $\Gamma_j^*(\ER)+B_r(0)$ becomes homotopically non-trivial \cite{MassattCarrLuskinOrtner2018}. Therefore, we should take $r<r_m-r_\Sigma$ to ensure a meaningful numerical scheme. 
It is also important to mention that if $q$ is near $K$ ($K'$), then $ \Omega_r^*(q)$ only corresponds to the $K$\;($K'$) valley, if ${\rm mod}_j(q)\notin\Gamma_j^*(\ER)+B_r(0)$, then $ \Omega_r^*(q)$ is empty. We then define the inclusion map
\begin{equation}
\label{inclusion}
    J_{V\gets U}:\ell^2(U)\to\ell^2(V),\qquad U\subset V\subset\Omega^*,
\end{equation}
and denote the {inclusion of $\Omega_r^*(q)$ into $\Omega^*$ by $J_r(q):=J_{\Omega^*\gets\Omega_r^*(q)}$}. The corresponding finite matrix is $J_r^*(q) \eH(q)J_r(q)$.

We further exploit the exponential decay of the interlayer hopping functions to sparsify the Hamiltonian matrix. Let $\tau\in\Z_+$ be a truncation parameter for hopping distance. 
We define the truncated hopping index set for $\tK\in\{K,K'\}$ valley by
\[
\B_\tau  := \Big\{n\in\Z^2: |\tK_1+2\pi A_1^{-T}n| \text{ is among the $\tau$ smallest magnitudes of } \tK_1+G_1\Big\}.
\]
The size of the index set, $\#\mathcal{B}_\tau$, is $O(\tau^2)$. Interestingly, numerical computations suggest that $\#\mathcal{B}_\tau$ is close to $O(\tau)$, but we are not aware of any proof of this. Additionally, note that $\B_\tau$ for $K'$ valley is the negative of $\B_\tau$ for $K$ valley, since $K_j' = -K_j$.
The restricted interlayer part is then given by
\[
[\eH^{(\tau)}_{12}(q)]_{G,G'} := [\eH_{12}(q)]_{G,G'}\delta_{\I(G)+\I(G')\in \B_\tau},
\]
where $\I:\R_1^*\cup\R_2^*\to\Z^2$ is the index map 
\[\I(G_j):=A_j^TG_j/2\pi,\qquad \G_j\in\R_j^*.\]

Finally, we construct the truncated matrix by
\begin{equation}
\label{finiteH}
    \trcH(q) := J^*_r(q)\begin{pmatrix} \eH_{11}(q)& \eH_{12}^{(\tau)}(q) \\ \Big({{\eH_{12}}^{(\tau)}}(q)\Big)^\dagger &\eH_{22}(q) \end{pmatrix}J_r(q).
\end{equation}
The DoS of $H$ at $E\in B_\Sigma$ can be well approximated by the DoS of $\trcH$, when there is a large moir\'{e} pattern, which occurs at small twist angle $\theta$. We require the following definition for moir\'e-periodic systems.
\begin{definition}
    The moir\'{e} reciprocal lattice and moir\'{e} Brillouin zone are
    \[
    \mR:= 2\pi\mm\Z^2,\qquad \mBZ:=2\pi\mm[0,1)^2
    \]
    with lattice matrix
    $$\mm:=A_2^{-T}-A_1^{-T}.$$
   The moir\'{e} Dirac points are
    \[
    K_{\M}:=K_1,\qquad  K'_{\M}:=K_2.
    \]
\end{definition}


We also define the maps $\Gmj : \mR\to\R_j^*$ and $\Gjm : \R_1^*\cup\R_2^*\to\mR$ as transformations between moir\'{e} reciprocal lattices and monolayer reciprocal lattices
\begin{equation}
\label{jM_trans}
\begin{split}
     &\Gmj(\mG):=A_j^{-T}\mm^{-1}\mG, \qquad\mG\in\mR,\\
    &\Gjm(G_j):=(-1)^j\mm A_j^TG_j,\qquad G_j\in\R_j^*.
\end{split}
\end{equation}
Using {periodicity of $\trcH(\cdot)$ with respect to moir\'{e} reciprocal lattices up to unitary transformation}, the starting momenta in \eqref{DOS:ms} {can be restricted into the much smaller space $\mBZ$} \cite{Massatt2023}. We restate the result for reference:
\begin{theorem}
\label{thm:truncation:dos}
Assume Assumption \ref{as:incommensurate} and assume TBG has hopping functions satisfying Assumptions \ref{assumption:interhop} and \ref{assumption:intrahop}. We define DoS for the truncated Hamiltonian at valley $\tK \in \{K,K'\}$ by
\begin{equation}
\label{DOS:finiteH}
    D_{\varepsilon,r}^{(\tau)}(E;\tK):=\nu^*\int_{\tilde K+\mBZ}{\rm Tr}\;\gauss(E-\trcH(q))\; dq.
\end{equation}
Consider $E\in B_\Sigma$, $\theta \ll 1$, and $\varepsilon\ll 1$.
Let $\htr\in\RR_+$ be the hopping truncation distance depending on $\tau\in\Z_+$.
Then there are constants $\gamma_h$, $\gamma_m$, and $\gamma_g$ corresponding to hopping truncation error, momentum truncation error, and Gaussian decay rates respectively such that 
\begin{equation} 
\label{eq:D_est}
    \Big|D_\varepsilon(E)-\sum_{\tK\in\{K,K'\}}D_{\varepsilon,r}^{(\tau)}(E;\tK)\Big|\lesssim  \varepsilon^{-2}e^{-\gamma_h\htr}+\varepsilon^{-4}e^{{-\gamma_m}r}+\varepsilon^{-1}e^{-\gamma_g\varepsilon^{-2}},
\end{equation}
where $D_\varepsilon(E)$ is defined in \eqref{eq:full_DoS} (equivalently \eqref{DOS:ms}). Further, $\gamma_m$ scales as $\theta^{-1}$.
\end{theorem}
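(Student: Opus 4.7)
The plan is to decompose the error into the three contributions matching the three terms on the right-hand side of \eqref{eq:D_est}, using as the central tool the Fourier representation
\begin{equation*}
\gauss(E-H) = \frac{1}{2\pi}\int_\RR e^{-\varepsilon^2 s^2/2}\, e^{is(E-H)}\,ds,
\end{equation*}
which makes $\gauss(E-H)$ a Gaussian-weighted superposition of unitaries and hence converts operator-norm perturbation bounds on $H$ into pointwise bounds on its diagonal matrix elements. The outer $\varepsilon^{-1}$ factor in the estimate comes from the normalization of $\gauss$, and each further inverse power of $\varepsilon$ below will be produced by a factor $|s|$ from a Duhamel-type expansion integrated against the Gaussian weight. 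The Helffer--Sj\"{o}strand formula is an equivalent alternative that I would fall back on when a Combes--Thomas type resolvent estimate is the cleanest input.

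Starting from the ergodic identity \eqref{DOS:ms}, I would first use the moir\'{e} periodicity of $\eH(\cdot)$ up to unitary conjugation (as developed in \cite{massatt2023}) together with valley separation to rewrite each $\Gamma_j^*$ integral as an integral over $\tK+\mBZ$ for $\tK\in\{K,K'\}$. The contribution from momenta $q$ with $\sigma_j(q)\cap \ER=\emptyset$ is absorbed into the $e^{-\gamma_g\varepsilon^{-2}}$ term, because in those regions $\eH_{jj}(q)$ has all eigenvalues at distance $>\eta$ from any $E\in B_\Sigma$, and the definition \eqref{couple_strength} of $\eta$ ensures the interlayer coupling cannot move the spectrum closer to $E$ by more than $\eta/(2+\alpha)$.

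The momentum truncation step is the main obstacle. I would estimate the difference between $[\gauss(E-\eH(q))]_{0\sigma,0\sigma}$ and $[\gauss(E-J_r^*(q)\eH(q)J_r(q))]_{0\sigma,0\sigma}$ via a Combes--Thomas / Schur-complement argument in the unfolded representation: by construction of $\eta$ the interlayer coupling has operator norm at most $\eta/(2+\alpha)$, while states indexed by $G\sigma\notin\Omega_r^*(q)$ have intralayer energies outside $\ER$, producing a spectral gap of size $\alpha\eta/(2+\alpha)$ between the retained and discarded blocks. This gap converts into exponential off-diagonal decay of the resolvent of $\eH(q)$ along paths of moir\'{e} hops; after applying analytic functional calculus for $\gauss$ and summing over the number of hops needed to reach the boundary of $\Omega_r^*(q)$, one obtains a contribution of order $\varepsilon^{-2}e^{-\gamma_m r}$ (inside the outer $\varepsilon^{-1}$). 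Controlling the combinatorics of hop paths in the interlayer block, where each momentum is coupled to the full discrete set $\R_1^*\cup\R_2^*$ and one must simultaneously exploit the exponential decay of $\hat{h}^{12}$, is the step that requires most care.

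The hopping truncation estimate is then relatively routine: Assumption \ref{assumption:interhop} yields
\[
\|\eH_{12}(q)-\eH_{12}^{(\tau)}(q)\|_{\op}\lesssim e^{-\gamma_{12}'\htr},
\]
and inserting this perturbation via Duhamel into $e^{is\eH(q)}$ produces a contribution of order $|s|\,e^{-\gamma_h \htr}$, which integrates against $e^{-\varepsilon^2 s^2/2}$ to $\varepsilon^{-1}e^{-\gamma_h \htr}$ inside the outer $\varepsilon^{-1}$ prefactor. Finally, truncating the $s$-integral at $|s|\sim\varepsilon^{-1}\sqrt{\gamma_g}$ produces the remaining $e^{-\gamma_g\varepsilon^{-2}}$ term. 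Collecting the three pieces with the outer $\varepsilon^{-1}$ from the normalization of $\gauss$ yields \eqref{eq:D_est}.
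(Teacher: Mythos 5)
First, a clarification: the paper does not actually prove Theorem \ref{thm:truncation:dos}. It is cited from \cite{massatt2023} and explicitly ``restated for reference.'' The natural comparison is therefore with the techniques the paper deploys in Appendix \ref{proof:error:dos} for the analogous error terms of Theorem \ref{thm:convergence:dos}: there the perturbative error ($I_2$) is handled by direct eigenvalue perturbation plus the pointwise bound $\max_\lambda|\gauss'(\lambda)|\lesssim\varepsilon^{-2}$, and the truncation/extension error ($I_3$) is handled by a contour integral $\frac{1}{2\pi}\oint_\CC \gauss(E-z)\bigl[(z-A)^{-1}-(z-B)^{-1}\bigr]dz$ combined with the ring decomposition and iterated Schur complements, using the spectral gap created by the $\eta$-enlargement. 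Your proposal matches this structure closely. Your Fourier representation $\gauss(E-H)=\frac{1}{2\pi}\int e^{-\varepsilon^2s^2/2}e^{is(E-H)}ds$ with Duhamel is a clean alternative for the hopping-truncation piece and yields the same $\varepsilon^{-2}\|\Delta H\|_\op$ bound since $\int|s|e^{-\varepsilon^2s^2/2}ds\sim\varepsilon^{-2}$ (note, though, that in this framework that entire $\varepsilon^{-2}$ comes from the $s$-integral, not from an ``outer $\varepsilon^{-1}$ normalization,'' which is an artifact of how the paper chose to group terms). Your momentum-truncation argument via the $\eta$-gap, Schur complements, and a contour-based functional calculus is essentially the ring decomposition technique the paper uses.

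There is, however, one genuine error: the final sentence. Truncating the $s$-integral at $|s|\lesssim\varepsilon^{-1}\sqrt{\gamma_g}$ produces a tail $\frac{2}{\varepsilon}\int_{\sqrt{\gamma_g}}^\infty e^{-u^2/2}du=O(\varepsilon^{-1})$, which is not small, let alone of order $e^{-\gamma_g\varepsilon^{-2}}$; you would need $|s|\gtrsim\varepsilon^{-2}$ to get that, and in any case the full $s$-integral converges and there is no reason to truncate it. The $e^{-\gamma_g\varepsilon^{-2}}$ term does not come from the time variable at all: it comes from the spectral side, from eigenvalues of $\widehat{H}(q)$ (resp.\ $\trcH(q)$) lying outside $\ER$, where $\gauss(E-\lambda)\lesssim\varepsilon^{-1}e^{-c/\varepsilon^2}$ for $E\in B_\Sigma$. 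You identify this correct source earlier in your argument (the momenta with $\sigma_j(q)\cap\ER=\emptyset$, and more generally the eigenvalue-counting argument that bounds $\#\{i:\lambda_i(q)\in B_{E_k}\setminus B_{E_{k-1}}\}\sim k$ used in Appendix \ref{proof:error:dos}), so the final sentence is a non sequitur and should simply be deleted. With that repair the proposal is a correct and essentially equivalent route to the one in \cite{massatt2023} and Appendix \ref{proof:error:dos}, modulo filling in the hop-counting combinatorics you flag as the hard step.
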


Note that the trace in \eqref{DOS:finiteH} converges trivially since $\widehat{H}_r^{(\tau)}(q)$ is a finite-dimensional matrix. Estimate \eqref{eq:D_est} shows that the error on the right-hand side is small when $d_\tau$ and $r$ are large and $\varepsilon$ is small, as long as $\varepsilon$ does not $\downarrow 0$ too fast. Specifically, as long as
\begin{equation}
    \varepsilon \gg e^{- {\frac{1}{4}\gamma_m r}}, \qquad \varepsilon \gg e^{- {\frac{1}{2} \gamma_h d_\tau}},
\end{equation}
then the error in \eqref{eq:D_est} will $\rightarrow 0$ as $\varepsilon \downarrow 0$ and $r, d_\tau \rightarrow \infty$. It is important to note that increasing $r$ beyond a certain value will result in an infinite-dimensional ``truncated'' matrix, which cannot be diagonalized exactly. However, in practice, the error in \eqref{eq:D_est} becomes very small for small $\varepsilon$ well before this point.


\section{Derivation of continuum models from truncated reciprocal space models}

In this section, we will propose an approximation scheme to obtain the higher order continuum models. As seen in Figure \ref{fig:workflow}, we have obtained the truncated Hamiltonian $\trcH$ by the Bloch transform, an unfolding procedure, and the reciprocal space truncation in Section 2. The next step is to perform the Taylor expansion at the Dirac points of the incommensurate Brillouin zone to obtain a polynomial approximation Hamiltonian $\tpH$. Subsequently, we will use a similar inverse process to derive the continuum model, including an infinite extension to define the Hamiltonian on $\RR^2$, a folding process back to the momentum space, and the inverse Fourier transform to obtain the PDE form.

\begin{figure}[htb!]
\centering
\includegraphics[height=2.8cm]{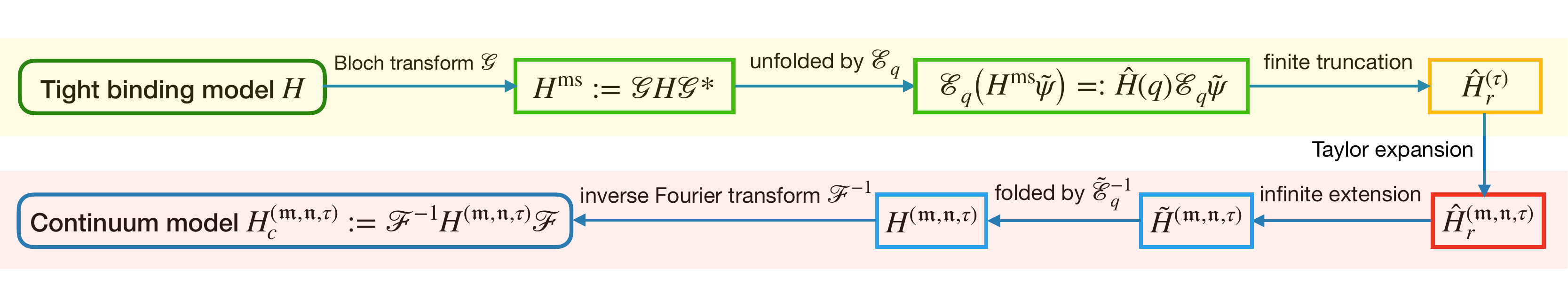}  
\setlength{\abovecaptionskip}{0pt} 
\caption{
Procedure for the construction of the continuum model.
}
\label{fig:workflow}
\end{figure}

\subsection{Taylor Expansion}


We begin with a momenta polynomial approximation of \eqref{finiteH}.
We note that the DoS \eqref{DOS:finiteH} is an integral over the moir\'{e} Brillouin zone $\mBZ$. We hence can only consider the behavior of the Hamiltonian in moir\'{e} reciprocal space. We rewrite \eqref{finiteH} by the moir\'{e} reciprocal lattices and do the Taylor expansion around $\tK\in\{K,K'\}$ valley to obtain the polynomial approximation model.

Before expanding the Hamiltonian, we introduce the 2-dimensional multi-index notations for $\beta\in\N^2$, $\xi\in\RR^2$ and $h:\RR^2\to\C, h\in C(\RR^2)$,
\begin{align*}
    |\beta|:=\beta_1 + \beta_2,\qquad \beta!:=\beta_1 !\beta_2!,\qquad\xi^\beta := \xi_1^{\beta_1}\xi_2^{\beta_2}, \qquad 
    D^\beta h := \frac{\partial^{|\beta|}h}{\partial \xi_1^{\beta_1}\partial \xi_2^{\beta_2}}.
\end{align*}
Let $\iip\in\N$ and $\ijp\in\N$ be the intralayer and interlayer expansion orders, respectively.
For $G\sigma,G'\sigma' \in \Omega_j^*$, we 
do the Taylor expansion for $\tilde{h}^{jj}$ with respect to $q+\Gjm(G)$ around $\tK_j$, 
\begin{align*}
   \bthop(q+G) &=  e^{-{i(G-\Gjm(G))}\cdot(\tau_{\sigma}-\tau_{\sigma'})}\bthop(q+\Gjm(G))\\
    &\approx e^{-{i(G-\Gjm(G))}\cdot(\tau_{\sigma}-\tau_{\sigma'})} P_{j,\sigma\sigma'}^\iip\big(q-\tK_j+\Gjm(G)\big),
\end{align*}
where we denote the $\iip$-order Taylor polynomial approximation of $\bthop$ by 
\begin{equation}
    \label{intra_poly}
    P_{j,\sigma\sigma'}^\iip(\xi) :=  \sum_{|\beta|\leq \iip}\frac{D^\beta\bthop(\tK_j)}{\beta !}\xi^\beta.
\end{equation}
For $G\sigma \in \Omega_1^*$, $G'\sigma' \in \Omega_2^*$, let
\[
{\GG_1} := G+G'-\Gjm(G)\in\R_1^*,
\]
we do the Taylor expansion for $\hat{h}^{12}$ with respect to $q+\Gjm(G)+\GG_1$ around $\tK_1+\GG_1$,
\begin{align*}
   c_1 c_2 {\interh}(q+G+G')\delta_{\I(G)+\I(G')\in \B_\tau}&=  c_1 c_2 {\interh}(q+\Gjm(G)+\GG_1)\delta_{\I(\GG_1)\in \B_\tau}\\
   &\approx U^{\ijp}_{\GG_1,\sigma\sigma'}\big(q-\tK_1+\Gjm(G)\big)\delta_{\I({\GG_1})\in \B_\tau}
\end{align*}
where we denote the $\ijp$-order Taylor polynomial approximation of $c_1c_2\interh$ by
\begin{equation}
    \label{inter_poly}
    U^{\ijp}_{\GG_1,\sigma\sigma'}(\xi):=c_1 c_2 \sum_{|\beta|\leq \ijp}\frac{D^\beta {\interh}(\tK_1+\GG_1)}{\beta !}\xi^\beta.
\end{equation}
Hence, the polynomial approximation Hamiltonian is given by
\begin{equation}
\label{GBM}
  \tpH(q) := J^*_r(q)\begin{pmatrix} \eH^{(\iip)}_{11}(q)& \eH_{12}^{(\ijp,\tau)}(q) \\ \Big({ \eH_{12}^{(\ijp,\tau)}}(q)\Big)^\dagger &\eH^{(\iip)}_{22}(q) \end{pmatrix}J_r(q),
\end{equation}
where
\begin{align*}
    [\eH^{(\iip)}_{jj}(q)]_{G\sigma,G'\sigma'}:=e^{-{i(G-\Gjm(G))}\cdot(\tau_{\sigma}-\tau_{\sigma'})}P_{j,\sigma\sigma'}^\iip\big(q-\tK_j+\Gjm(G)\big)\delta_{GG'},
\end{align*}
and 
\[
[\eH_{12}^{(\ijp,\tau)}(q)]_{G\sigma,G'\sigma'} :=  \HT_{G,G'}^{\sigma\sigma'}U^{\ijp}_{\GG_1,\sigma\sigma'}\big(q-\tK_1+\Gjm(G)\big)\delta_{\I({\GG_1})\in \B_\tau}.
\]

We note that for $q\in \tK + \mBZ$ and $G\sigma\in\Omega_r^*(q)$, we have $|q-\tK_j+\Gjm(G)|\leq r_\Sigma + r$, $G\in\R^*_{{\rm F}_j}$. Since the Taylor expansion error scales as $O\big(|q -\tK_j+ \Gjm(G)|^{\mathfrak{j}+1}\big), \mathfrak{j} = \iip,\ijp$, the expansion accuracy will also be determined by the reciprocal space truncation. The following lemma gives the precise error analysis of the polynomial approximation.
\begin{lemma}
\label{lemma:convergence:eigenvalues}
Assume Assumption \ref{as:incommensurate} and assume TBG has hopping functions satisfying Assumption \ref{assumption:interhop} and \ref{assumption:intrahop}. For $q\in \tilde{K}+\mBZ$, $\tilde{K}\in \{K,K'\}$, $\theta \ll 1$, and $r<\min\{\gamma_1,\gamma_2,\gamma_{12}, r_m\}-r_\Sigma$, there is a constant $C$ independent of $\iip,\ijp,\tau$,
such that the error of the polynomial approximation is 
\begin{equation}
    \label{error:eig}
    \|\trcH(q)-\tpH(q)\|_{\op}\leq C\Big((\iip+1) \big(\max_{j={1,2}}{\gamma_j^{-1}}(r_\Sigma + r)\big)^{\iip+1} + \# \B_\tau(\ijp+1)\big(\gamma_{12}^{-1} (r_\Sigma + r)\big)^{\ijp+1}\Big).
\end{equation}
\end{lemma}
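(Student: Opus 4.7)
The plan is to decompose the error into intralayer and interlayer contributions that match the block structure, bound each block by a pointwise Taylor remainder estimate, and then convert pointwise bounds to operator-norm bounds using the block-diagonal structure in the intralayer case and a Schur-type estimate in the interlayer case.

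First, I would write
\[
\trcH(q)-\tpH(q)=J_r^*(q)\Bigl(\Delta_{11}(q)\oplus\Delta_{22}(q)+\Delta_{12}(q)+\Delta_{12}(q)^\dagger\Bigr)J_r(q),
\]
with $\Delta_{jj}(q)=\eH_{jj}(q)-\eH_{jj}^{(\iip)}(q)$ and $\Delta_{12}(q)=\eH_{12}^{(\tau)}(q)-\eH_{12}^{(\ijp,\tau)}(q)$. Since $J_r(q)$ is a partial isometry, the triangle inequality reduces the lemma to separately controlling $\|\Delta_{jj}\|_{\op}$ and $\|\Delta_{12}\|_{\op}$. Throughout I will use that for $q\in\tK+\mBZ$ and $G\sigma\in\Omega_r^*(q)$ the argument of the Taylor remainder satisfies $|q-\tK_j+\Gjm(G)|\leq r_\Sigma+r$.

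Second, for the intralayer piece, $\eH_{jj}$ is block diagonal in $G\in\R_{F_j}^*$ with $2\times 2$ blocks $c_j^*[\tilde h_j](q+G)$, and the same holds for $\eH_{jj}^{(\iip)}$. Hence $\|\Delta_{jj}\|_{\op}=\sup_G\|[\Delta_{jj}]_{G,G}\|_{2\times 2}$. The Lagrange form of the 2D Taylor remainder gives, for each $\sigma\sigma'$,
\[
\bigl|c_j^*[\tilde h_j]_{\sigma\sigma'}(q+G)-P^\iip_{j,\sigma\sigma'}(q-\tK_j+\Gjm(G))\bigr|\le \sum_{|\beta|=\iip+1}\frac{|D^\beta \tilde h_j(\xi)|}{\beta!}|\xi^\beta|,
\]
where $\xi$ lies on the segment in question. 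Differentiating the Fourier series termwise and invoking Assumption \ref{assumption:intrahop},
\[
|D^\beta \tilde h_j(\xi)|\le \sum_{R\in\R_j}|R+\tau_\sigma-\tau_{\sigma'}|^{|\beta|}|h^{jj}_{\sigma\sigma'}(R)|\lesssim \frac{(|\beta|+1)!}{\gamma_j^{|\beta|+2}}.
\]
Combining with the multinomial identity $\sum_{|\beta|=m}|\xi^\beta|/\beta!\le (|\xi_1|+|\xi_2|)^m/m!\le (\sqrt 2|\xi|)^m/m!$ and taking the sup over $G$ and over $2\times 2$ entries yields the first summand $C(\iip+2)(\gamma_j^{-1}(r_\Sigma+r))^{\iip+1}$.

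Third, for the interlayer piece, each entry $[\Delta_{12}]_{G\sigma,G'\sigma'}$ vanishes unless $\I(\GG_1)\in \B_\tau$, and on the nonzero indices the same Taylor-remainder argument applied to $\hat h^{12}$ gives a pointwise bound. Here the derivative estimate uses the physical-space decay in Assumption \ref{assumption:interhop} via $D^\beta \hat h^{12}(\xi)=\widehat{((-ix)^\beta h^{12})}(\xi)$, so
\[
|D^\beta \hat h^{12}(\xi)|\lesssim \int_{\RR^2}|x|^{|\beta|}e^{-\gamma_{12}|x|}\,dx\lesssim \frac{(|\beta|+1)!}{\gamma_{12}^{|\beta|+2}},
\]
yielding a pointwise bound $C(\ijp+2)(\gamma_{12}^{-1}(r_\Sigma+r))^{\ijp+1}$ on each nonzero matrix entry of $\Delta_{12}$. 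To pass from pointwise to operator-norm estimates I would use a Schur test: for fixed $G\in\R_2^*$, the set of $G'\in\R_1^*$ with $\I(G+G'-\Gjm(G))\in\B_\tau$ is a translate of $\B_\tau$ and so has exactly $\#\B_\tau$ elements, and symmetrically for fixed $G'$. The max row and column $\ell^1$ norms are therefore bounded by $\#\B_\tau$ times the pointwise entry bound (up to an absorbable factor of $4$ for the $\sigma\sigma'$ indices and the phase matrix $\HT$, which has unit-modulus entries). Combining the Schur bound with the pointwise remainder estimate produces the second summand.

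The main technical obstacle is the Taylor-remainder bookkeeping: one needs the factorial $(|\beta|+1)!$ arising from the derivative estimate to cancel cleanly against the $\beta!$ denominators and the $1/m!$ from the multinomial sum, leaving only the prefactor $\iip+2$ (resp.\ $\ijp+2$). A secondary subtlety is verifying that on the restricted index set $\Omega_r^*(q)$ the arguments $q-\tK_j+\Gjm(G)$ really are controlled by $r_\Sigma+r$ uniformly, which uses the definition of $\Omega_r^*(q)$ from Section 2 together with the fact that $\Gjm$ maps into the correct monolayer reciprocal lattice. Once these are in hand, assembling the two summands via the triangle inequality gives \eqref{error:eig}.
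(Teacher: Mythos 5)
Your proposal follows the same overall route as the paper: an entry-wise Taylor remainder bound for the intralayer hoppings $\tilde h_j$ and the interlayer hoppings $\hat h^{12}$ (using the exponential decay from Assumptions~\ref{assumption:interhop}--\ref{assumption:intrahop} to control all derivatives via lattice-sum and Fourier-moment estimates), the observation that $|q-\tK_j+\Gjm(G)|\le r_\Sigma+r$ on the basis set $\Omega_r^*(q)$, and then a row/column-sum argument to convert entry-wise bounds into an operator-norm bound, with $\#\B_\tau$ counting the interlayer neighbors per row. Where the paper applies Gershgorin's circle theorem directly to the Hermitian matrix $\Delta H$, you split off the blocks and use a Schur test; these are interchangeable here and your version is fine.

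There is, however, one step that fails to give the bound exactly as stated. You estimate the derivative uniformly, $\max_{|\beta|=\iip+1}|D^\beta\tilde h_j|\lesssim(\iip+2)!/\gamma_j^{\iip+3}$, and separately bound $\sum_{|\beta|=\iip+1}|\xi^\beta|/\beta!\le(\sqrt2\,|\xi|)^{\iip+1}/(\iip+1)!$. The product of these two acquires an extra factor $(\sqrt2)^{\iip+1}$ (and likewise $(\sqrt2)^{\ijp+1}$ for the interlayer block). Since $C$ in~\eqref{error:eig} must be independent of $\iip$ and $\ijp$, this factor cannot be absorbed, and your argument as written only proves the lemma with $\gamma_j^{-1}$ replaced by $\sqrt2\,\gamma_j^{-1}$ (resp.\ $\gamma_{12}^{-1}$ by $\sqrt 2\,\gamma_{12}^{-1}$) in the base of the power. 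The remedy, which is what the paper's proof does, is to keep the two polynomial factors coupled: move $(\xi-\xi_0)^\beta$ inside the lattice sum (or integral) defining $D^\beta$ and apply the multinomial theorem with Cauchy--Schwarz to the \emph{product} before summing,
\[
\sum_{|\beta|=m}\frac{\bigl|(\xi-\xi_0)^\beta\,(\tau_\sigma-\tau_{\sigma'}+R)^\beta\bigr|}{\beta!}
=\frac{\Bigl(\sum_{i=1,2}|(\xi-\xi_0)_i|\,|(\tau_\sigma-\tau_{\sigma'}+R)_i|\Bigr)^m}{m!}
\le\frac{\bigl(|\xi-\xi_0|\,|\tau_\sigma-\tau_{\sigma'}+R|\bigr)^m}{m!},
\]
and only then sum over $R\in\R_j$ (or integrate over $x$ for $\hat h^{12}$) against the exponential weight, producing $(\iip+2)/\gamma_j^{\iip+3}$ with no extraneous geometric factor. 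With that rearrangement the rest of your argument goes through unchanged.
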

\begin{proof}
    Proof is in Appendix \ref{proof:error:eigs}.
\end{proof}
Estimate \eqref{error:eig} in particular implies convergence of the DoS of the polynomially approximated Hamiltonian $\tpH$ to that of $\trcH$. Combining this result with Theorem \ref{thm:truncation:dos} will allow us to relate the DoS of $\tpH$ with the full DoS \eqref{eq:full_DoS} in our main result Theorem \ref{thm:convergence:dos}.

\subsection{Continuum Model}
We now derive the continuum model from \eqref{GBM}. It is not immediately obvious how to do this, as $\tpH$ is defined on $\Gamma_j^*(\ER)+B_r(0)$ (see the momentum domain in the polynomial expansion), while the continuum model is defined on $\mathbb{R}^2$. However, we find that the density of states of these models can be related as long as we focus on low energies, i.e., energies close to the monolayer Dirac energy. To make this precise, we have to extend the Hamiltonian to all of $\mathbb{R}^2$ in such a way that it remains unchanged for small momenta (i.e., momenta close to the monolayer Dirac point) but becomes invertible for large momenta. We will then apply the inverse Fourier transform to obtain the continuum model.


For $\tK\in\{K,K'\}$, let $\ex(\xi) := \xi\chi_r(\xi)$ with $\chi_r(\xi)$ smooth and 1 on 
$$\bigcup_{j=1,2} \bigg\{ \Big\{\Gamma_j^*(\ER)+B_r(0)+\tK_j-{\rm mod}_j(\tK_j)\Big\}\cap B_{r_\Sigma+r}(\tK_j)\bigg\},$$
and compactly supported on 
$$\bigcup_{j=1,2} \bigg\{\Big\{\Gamma_j^*(\ER)+B_{r'}(0)+\tK_j-{\rm mod}_j(\tK_j)\Big\}\cap B_{r_\Sigma+r'}(\tK_j)\bigg\},$$ with $0<r'-r\ll
1$ so that the momenta extended by $r'$ do not contribute to the DoS. Here $\tK_j-{\rm mod}_j(\tK_j)$ shifts the momenta in the Brillouin zone to the reciprocal lattice unit cell containing $\tK_j$, and the intersection ensures that the momenta correspond to $\tK$ (see Figure \ref{fig:momenta_trunc}). We modify the expansion polynomials by
\begin{align}
\label{polymod}
\cP^{\iip}_j(\xi):= P_j^1(\xi) + (P_j^\iip-P_j^1)(\ex(\xi)),\qquad
\cU^{\ijp}_{\GG_1}(\xi):=
 U^{\ijp}_{\GG_1}(\ex(\xi)),
\end{align}
and denote the corresponding Hamiltonian parts by $\widetilde{H}^{(\iip)}_{jj}$ and ${ \widetilde{H}_{12}^{(\ijp,\tau)}}$. Then the infinite matrix is 
\[
\bmH(q):=\begin{pmatrix} \widetilde{H}^{(\iip)}_{11}(q)& \widetilde{H}_{12}^{(\ijp,\tau)}(q) \\ \Big({ \widetilde{H}_{12}^{(\ijp,\tau)}}(q)\Big)^\dagger &\widetilde{H}^{(\iip)}_{22}(q) \end{pmatrix}.
\]
The reason for modifying the expansion polynomials as in \eqref{polymod} is to ensure that $\tilde{P}_j^\mathfrak{m}$ is invertible for large $|\xi|$, which will be important in the proofs below.

We notice that $\bmH$ could be rewritten by the moir\'{e} reciprocal lattice, i.e., for $\mG\sigma,\mG'\sigma'\in{\Omega^*_{j{\rm M}}}:=\mR\times\A_j$,
\begin{equation}
    \label{intra:GBM}
     [\widetilde{H}^{(\iip)}_{jj}(q)]_{\mG\sigma,\mG'\sigma'}=e^{i\Gmj({\mG})\cdot(\tau_{\sigma'}-\tau_{\sigma})}\cP_{j,\sigma\sigma'}^\iip\big(q-\tK_j+(-1)^{{\rm F}_j}\mG\big)\delta_{\mG\mG'},
\end{equation}
and for $\mG\sigma\in\Omega^*_{1{\rm M}}, \mG'\sigma'\in\Omega^*_{2{\rm M}}$,
\begin{equation}
   \label{inter:GBM}
     [\widetilde{H}^{(\ijp,\tau)}_{12}(q)]_{\mG\sigma,\mG'\sigma'}=\HT_{\mathfrak{G}_2(\mG),\mathfrak{G}_1(\mG')}^{\sigma\sigma'}
     \cU^{\ijp}_{\mathfrak{G}_1(\GG_{\rm M}),\sigma\sigma'}\big(q-\tK_1+\mG\big)\delta_{\I({\GG_{\rm M}})\in \B_\tau}, 
\end{equation}
where $\GG_{\rm{M}}:= \mG+\mG'$ and $\I(\mG) := (2\pi\mm)^{-1}\mG$ for $\mG\in\mR$. We then define a map $\mE_{q,j}:\MSspace_j\to\ell^\infty({\Omega^*_{j{\rm M}}})$ analogous to $\E_q$ but sampled by moir\'{e} reciprocal lattices
\[
\{\mE _{q,j}\bw_j\}_{\mG\sigma}:=e^{-i\Gmj(\mG)\cdot\tau_{\sigma}}\bw_j^\sigma(q-\tK_j+(-1)^{{\rm F}_j}\mG),
\] 
and let
\[
\mE_q := \begin{pmatrix} \mE_{q,1}& 0 \\ 0 & \mE_{q,2}\end{pmatrix}.
\]
The following lemma gives the folded (momentum space) form of the approximation model. 

\begin{lemma}
\label{lemma:continuum model}
    We construct the operator $H^{(\iip,\ijp,\tau)} : \mathcal{F} H^1(\mathbb{R}^2;\mathbb{C}^4) \rightarrow L^2(\mathbb{R}^2;\mathbb{C}^4)$, where $\mathcal{F}$ is the Fourier transform, decomposed sheet-wise as
    \begin{equation}
    \label{MS:HC}
  H^{(\iip,\ijp,\tau)} =\begin{pmatrix} H_{11}^{(\iip)}& H_{12}^{(\ijp,\tau)} \\ \Big(H_{12}^{(\ijp,\tau)}\Big)^\dagger &H_{22}^{(\iip)} \end{pmatrix}
\end{equation}
to be the Hamiltonian such that for $\bw \in \oplus_{j=1}^2 \MSspace_j $,
\[
\mE_q \big(H^{(\iip,\ijp,\tau)}\bw\big)=\bmH(q)\mE_q\bw.
\]
The operator $H^{(\iip,\ijp,\tau)}$ is composed of translation operators and multiplication operators given by
\begin{align}
    &H_{jj}^{(\iip)}=\cP^{\iip}_j(\cdot),\qquad j = 1,2,\\[1ex]
    &H_{12}^{(\ijp,\tau)} = \sum\limits_{\I(\mG)\in\B_\tau}\HT_{\mG} \odot \cU^{\ijp}_{\mathfrak{G}_1(\mG)}(\cdot)T_{s_{\mG}},
\end{align}
where $\HT_{\mG}^{\sigma\sigma'} := e^{i\mathfrak{G}_1(\mG)\cdot\tau_{\sigma}}e^{-i\mathfrak{G}_2({\mG})\cdot\tau_{\sigma'}}$ for $\sigma\in\A_1,\sigma'\in\A_2$ and $s_{\mG}:=\tK_1-\tK_2-\mG$.

\end{lemma}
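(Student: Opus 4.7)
The plan is to verify the identity $\mE_q(H^{(\iip,\ijp,\tau)}\bw) = \bmH(q)\mE_q\bw$ directly by equating matrix entries block-by-block, exploiting the fact that the sampling operator $\mE_q$ simply evaluates $\bw$ at discrete momenta (up to phase), so that matrix-vector products on the right-hand side can be re-organized into pointwise multiplication and translation operators on $L^2(\RR^2)$.

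First I would handle the intralayer blocks. Fix $j\in\{1,2\}$ and $\mG\sigma\in \Omega^*_{jM}$. Using the explicit formula \eqref{intra:GBM} and the Kronecker delta in $\mG$, the sum over $\mG'\sigma'$ collapses to a sum over $\sigma'$ only. Combining the prefactor $e^{i\Gmj(\mG)\cdot(\tau_{\sigma'}-\tau_\sigma)}$ with the phase $e^{-i\Gmj(\mG)\cdot\tau_{\sigma'}}$ coming from $\mE_{q,j}\bw_j$ produces exactly $e^{-i\Gmj(\mG)\cdot\tau_\sigma}$, which is the phase of $\mE_{q,j}$ applied to the vector-valued function $\xi\mapsto \cP^{\iip}_j(\xi)\bw_j(\xi)$ evaluated at $\xi = q-\tK_j+(-1)^{F_j}\mG$. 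Thus $[\bmH_{jj}(q)\mE_{q,j}\bw_j]_{\mG\sigma}=\{\mE_{q,j}[\cP^\iip_j(\cdot)\bw_j]\}_{\mG\sigma}$, identifying $H_{jj}^{(\iip)}$ as the multiplication operator $\cP^\iip_j(\cdot)$ acting in momentum space (equivalently $\cP^\iip_j(\tfrac{1}{i}\nabla)$ on the position side).

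Next I would tackle the interlayer block, which is the substantive step. Fix $\mG\sigma\in\Omega^*_{1M}$ and expand
\[
[\bmH_{12}(q)\mE_{q,2}\bw_2]_{\mG\sigma}=\sum_{\mG'\sigma'}\HT_{\mathfrak{G}_2(\mG),\mathfrak{G}_1(\mG')}^{\sigma\sigma'}\cU^{\ijp}_{\mathfrak{G}_1(\mG+\mG'),\sigma\sigma'}(q-\tK_1+\mG)\,\delta_{\I(\mG+\mG')\in\B_\tau}\,e^{-i\Gmj(\mG')\cdot\tau_{\sigma'}}\bw_2^{\sigma'}(q-\tK_2-\mG').
\]
Introduce the change of variable $\widetilde{\mG} := \mG+\mG'$ (so $\mG'=\widetilde{\mG}-\mG$) and note that $\mathfrak{G}_1$ is linear. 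Splitting the phase $e^{i\mathfrak{G}_1(\widetilde{\mG}-\mG)\cdot\tau_\sigma}$ factors out an overall $e^{-i\mathfrak{G}_1(\mG)\cdot\tau_\sigma}$, which is exactly the sampling phase of $\mE_{q,1}$ at $\mG\sigma$. The remaining factors reorganize as $\HT_{\widetilde{\mG}}^{\sigma\sigma'}\cU^{\ijp}_{\mathfrak{G}_1(\widetilde{\mG}),\sigma\sigma'}(q-\tK_1+\mG)$ times $\bw_2^{\sigma'}\bigl((q-\tK_1+\mG)+s_{\widetilde{\mG}}\bigr)$, with $s_{\widetilde{\mG}}=\tK_1-\tK_2-\widetilde{\mG}$. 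The shift of $\bw_2$ is the action of the translation operator $T_{s_{\widetilde{\mG}}}$ followed by evaluation at the sampling point of $\mE_{q,1}$. Summing over $\widetilde{\mG}$ with $\I(\widetilde{\mG})\in\B_\tau$ and over $\sigma'$ recovers the Hadamard product $\HT_{\widetilde{\mG}}\odot\cU^{\ijp}_{\mathfrak{G}_1(\widetilde{\mG})}$, giving $\{\mE_{q,1}[H^{(\ijp,\tau)}_{12}\bw_2]\}_{\mG\sigma}$ with $H^{(\ijp,\tau)}_{12}$ as claimed.

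The main obstacle is the phase bookkeeping in the interlayer step: verifying that the sampling phases of $\mE_q$, the moir\'{e} phase matrix $\HT$, and the factor $e^{-i\Gmj(\mG')\cdot\tau_{\sigma'}}$ conspire under the substitution $\mG'\mapsto \widetilde{\mG}-\mG$ to produce exactly $\HT_{\widetilde{\mG}}$ on the remaining index $\widetilde{\mG}$ together with a clean overall $\mE_{q,1}$-phase on $\mG\sigma$. Once this is carried out, recognizing the surviving $\widetilde{\mG}$-dependent shift as $T_{s_{\widetilde{\mG}}}$ is immediate. Finally, I would note that the operator $H^{(\iip,\ijp,\tau)}$ is densely defined on $L^2(\RR^2;\C^4)$: $H_{12}^{(\ijp,\tau)}$ is bounded because each $\cU^{\ijp}_{\mathfrak{G}_1(\mG)}(\cdot)=U^{\ijp}_{\mathfrak{G}_1(\mG)}(\ex(\cdot))$ is a polynomial composed with the compactly supported $\ex$ (hence bounded) and the sum over $\I(\widetilde{\mG})\in\B_\tau$ is finite, while $H_{jj}^{(\iip)}$ is a multiplication operator by a symbol growing at most linearly in $|\xi|$ (since $\cP^\iip_j$ equals the linear $P^1_j$ outside the support of $\chi_r$), so it is self-adjoint on its natural Sobolev domain. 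This completes the identification of the operator in the stated form.
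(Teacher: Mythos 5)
Your proof is correct and follows essentially the same route as the paper: you compute the matrix-vector product $\bmH(q)\mE_q\bw$ entry-wise, use the delta in the intralayer block and the change of variable $\widetilde{\mG}=\mG+\mG'$ in the interlayer block, and regroup the phases so that the overall prefactor $e^{-i\mathfrak{G}_1(\mG)\cdot\tau_\sigma}$ is recognized as the $\mE_{q,1}$ sampling phase and the residual $\widetilde{\mG}$-dependent factors assemble into $\HT_{\widetilde{\mG}}\odot\cU^\ijp_{\mathfrak{G}_1(\widetilde{\mG})}T_{s_{\widetilde{\mG}}}$. Your closing remarks on boundedness of $H_{12}^{(\ijp,\tau)}$ and the domain of $H_{jj}^{(\iip)}$ are a useful addition that the paper omits, but the core argument is identical.
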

\begin{proof}
    See Appendix \ref{compt:MS:GBM}.
\end{proof}

Finally, let $D:=-i(\partial_{x_1},\partial_{x_2})$ and $\ex^n(D)f(x):=\frac{1}{(2\pi)^2}\int_{\RR^2}e^{i\xi x}\ex^n(\xi)\hat{f}(\xi)\;d\xi$.
By the inverse Fourier transform, we obtain the continuum model for the $\tK$ valley
\begin{align}
    \label{real:GBM}
  &H^{(\iip,\ijp,\tau)}_c:= \F^{-1}H^{(\iip,\ijp,\tau)}\F\\[1ex]
  \nonumber
  &=
  \begin{pmatrix} \cP^{\iip}_1\big(D\big)& \sum\limits_{\I(\mG)\in\B_\tau}\HT_{\mG} \odot \cU^{\ijp}_{\mathfrak{G}_1(\mG)}\big(D-{s_{\mG}}\big)e^{-i{{s_{\mG}}}\cdot x}\\ \sum\limits_{\I(\mG)\in\B_\tau}\HT^\dagger_{\mG} \odot\overline{\cU^{\ijp}_{\mathfrak{G}_1(\mG)}\big(D-{s_{\mG}}\big)}e^{i{{s_{\mG}}}\cdot x} &\cP^{\iip}_2\big(D\big) \end{pmatrix}.
\end{align}
We refer to this as the $(\iip, \ijp, \tau)$ continuum model.
We mention that the $r$ dependence is omitted in the continuum model Hamiltonian. This is because $r$ is mainly used to improve numerical efficiency, and is always chosen to ensure the momentum truncation converges. Therefore, $r$ does not fundamentally affect the accuracy of the continuum model.  

In the following example, we can see that the BM model is a leading order continuum model. Therefore, we also refer to the $(1,0,1)$ continuum model as the BM model.
\begin{example}
    We consider the simplified TBG model in \cite{watson2023}. This simplified model has three nearest hoppings for the intralayer, and with a radial decay function for the interlayer such as \eqref{toy_hs_inter}. Specifically, the reciprocal space Hamiltonian for this model is
    \begin{align}
    \label{toy_rs}
    &[\eH_{jj}]_{G,G'}  = -t\begin{pmatrix} 0 & F_j(\cdot+G) \\ \overline{F_j(\cdot+G)}& 0 \end{pmatrix}\delta_{GG'}, \quad F_j(q) := e^{iq\cdot(\tau_{jB}-\tau_{jA})}(1+e^{-iq\cdot a_{j,1}}+e^{-iq\cdot a_{j,2}}),\\[1ex]\nonumber
    &[\eH_{12}]_{G,G'} =c_1c_2 {\hat{h}_{12}}(\cdot+G+G')\HT_{G,G'}.
    \end{align}

    We first derive the $(1,0,1)$ continuum model for the $K$ valley. We can easily obtain that
\[
P_j^1(q)= v\sigma_{\theta_j}\cdot q,\quad \sigma_{\theta_j}\cdot q:=\begin{pmatrix} 0& e^{i\theta_j}(q_1-iq_2) \\ e^{-i\theta_j}(q_1+iq_2) & 0\end{pmatrix},\quad v:=\frac{\sqrt{3}}{2}at,\quad \theta_j:=(-1)^j\frac{\theta}{2},
\]
and
\[
U^0_{\GG_1}(q)= w\begin{pmatrix} 1& 1 \\ 1 &1 \end{pmatrix},\qquad  w := c_1c_2\hat{h}_{12}(K_1+\GG_1), \qquad \I({\GG_1})\in \B_1,
\]
where $\B_1 = \{(0,0),\,(0,1),\,(-1,0)\}$.
Here we use $w$ as the interlayer coupling since the three nearest hopping terms have equal strength. 
Then we have 
\[
H^{(1,0,1)}_c =\begin{pmatrix} v\sigma_{-\theta/2}\cdot D& w\sum_{j=1}^3{\HT}_je^{-is_j\cdot x} \\ \overline{w}\sum_{j=1}^3{\HT}^\dagger_je^{is_j\cdot x} &v\sigma_{\theta/2}\cdot D \end{pmatrix},
\]
where the matrices are
\begin{align*}
    {\HT}_1:=\begin{pmatrix} 1 & 1 \\ 1 & 1 \end{pmatrix}, \qquad
    {\HT}_2:=\begin{pmatrix} 1 & e^{-i\phi} \\ e^{i\phi} & 1 \end{pmatrix},\qquad
    {\HT}_3:=\begin{pmatrix} 1 & e^{i\phi} \\ e^{-i\phi} & 1 \end{pmatrix},
\end{align*}
with $\phi=2\pi/3$, and the vectors are
\begin{equation*}
    s_1 := K_1-K_2,\qquad s_2:=K_1-K_2-2\pi\mm(0,1)^T,\qquad s_3:=K_1-K_2-2\pi\mm(-1,0)^T.
\end{equation*}
The three vectors could also be explicitly given by
\[
s_1 = |\Delta K|(0,-1)^{T},\qquad s_2=|\Delta K|\Bigg(\frac{\sqrt{3}}{2},\frac{1}{2}\Bigg)^{T},\qquad s_3=|\Delta K|\Bigg(-\frac{\sqrt{3}}{2},\frac{1}{2}\Bigg)^{T},
\]
where $|\Delta K|:=2|K|\sin(\theta/2)$ is the distance between the Dirac points of the layers. We can see that $H^{(1,0,1)}_c$ for the $K$ valley is highly similar to the BM model given in \cite{Bistritzer2011}.

Observe that $K'=-K$, we can immediately obtain the $(1,0,1)$ continuum model for the $K'$ valley
\[
H^{(1,0,1)}_c =\begin{pmatrix} v\sigma'_{-\theta/2}\cdot D& w\sum_{j=1}^3{\HT'_j}e^{-is'_j\cdot x} \\ \overline{w}\sum_{j=1}^3{\HT'_j}^\dagger e^{is_j'\cdot x} &v\sigma'_{\theta/2}\cdot D \end{pmatrix},
\]
where 
\[
(\sigma'_{\theta_j}\cdot q)_{\sigma\sigma'}:=-\overline{(\sigma_{\theta_j}\cdot q)_{\sigma\sigma'}},\qquad
 (\HT'_j)_{\sigma\sigma'} := \overline{(\HT_j)_{\sigma\sigma'}},\qquad s'_j:=-s_j.
\]
 \end{example}

\subsection{Convergence of the Density of States}
\label{section:convergence}
We finally discuss the convergence of the density of states of the continuum model. Similar with \eqref{DOS:finiteH}, that is approximated by
\begin{equation}
\label{DOS:GBM}
    D^{(\iip,\ijp,\tau)}_{\varepsilon}(E;\tilde{K}):=\nu^*\int_{\tilde{K}+\mBZ}{\rm Tr}\;\gauss(E-\bmH(q))\;dq.
\end{equation}
We can see that the error of $D^{(\iip,\ijp,\tau)}_{\varepsilon}$ comes from three parts: 
truncation error, polynomial approximation error, and smooth extension error. The first one has been given by Theorem \ref{thm:truncation:dos}, the second one can be estimated by Lemma \ref{lemma:convergence:eigenvalues}, and the last one can be analyzed using the ``ring decomposition technique" proposed by \cite{MassattCarrLuskinOrtner2018}. 
\begin{theorem}
    \label{thm:convergence:dos}
Assume Assumption \ref{as:incommensurate} and assume TBG has hopping functions satisfying Assumption \ref{assumption:interhop} and \ref{assumption:intrahop}. Consider $E\in B_\Sigma$, $\theta\ll 1$, and $\varepsilon\ll 1$. 
Let $\htr\in\RR_+$ be the hopping truncation distance depending on $\tau\in\Z_+$. Let $\iip,\ijp\in\N$ be the intralayer and interlayer expansion orders respectively. Then for $r<\min\{\gamma_1,\gamma_2,\gamma_{12},r_m\}-r_\Sigma$, there are constants $\gamma_h$, $\gamma_m$, and $\gamma_g$ corresponding to hopping truncation error, momenta truncation error, and Gaussian decay rates respectively, such that
\begin{align}
    \label{error:dos}
    |D_\varepsilon(E)-\sum_{\tK\in\{K,K'\}}D^{(\iip,\ijp,\tau)}_{\varepsilon}(E;\tK)|\leq  C_{\Sigma,\theta,\tau,r}\left(\varepsilon^{-2}e^{-\gamma_h\htr}+\varepsilon^{-4}e^{{-\gamma_m}r}+\varepsilon^{-1} e^{-\gamma_g\varepsilon^{-2}}\right)
    \\\nonumber
    +\varepsilon^{-2}C_{\Sigma,\theta}\left((\iip+1) \left(\max_{j={1,2}}{\gamma_j^{-1}}(r_\Sigma + r)\right)^{\iip+1} + \#\B_\tau(\ijp+1)\left(\gamma_{12}^{-1} (r_\Sigma+r)\right)^{\ijp+1}\right),
\end{align}
where $C_{\Sigma,\theta,\tau,r}$ is a constant dependent of $\Sigma,\theta,\tau,r$, and $C_{\Sigma,\theta}$ is a constant dependent of $\Sigma,\theta$
\begin{equation}
    C_{\Sigma,\theta} := C\max_{
    \tK\in\{K,K'\}}\#\Omega_{0}^*(\tilde{K}).
\end{equation}
\end{theorem}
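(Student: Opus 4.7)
The plan is to apply the triangle inequality to split the overall DoS error into three contributions, each handled by a distinct ingredient. Introducing the intermediate DoS
\begin{equation*}
    \widetilde{D}_\varepsilon^{(\iip,\ijp,\tau)}(E;\tK) := \nu^* \int_{\tK+\mBZ} \Tr\, \gauss(E - \tpH(q))\, dq
\end{equation*}
for the polynomially approximated but still momentum-truncated Hamiltonian \eqref{GBM}, I would bound
\begin{equation*}
    \Bigl| D_\varepsilon(E) - \sum_{\tK \in \{K,K'\}} D_\varepsilon^{(\iip,\ijp,\tau)}(E;\tK) \Bigr| \leq T_1 + T_2 + T_3,
\end{equation*}
where $T_1 = |D_\varepsilon - \sum_{\tK} D_{\varepsilon,r}^{(\tau)}|$ is the truncation error, $T_2 = \sum_\tK |D_{\varepsilon,r}^{(\tau)} - \widetilde{D}_\varepsilon^{(\iip,\ijp,\tau)}|$ is the polynomial-approximation error on the truncated space, and $T_3 = \sum_\tK |\widetilde{D}_\varepsilon^{(\iip,\ijp,\tau)} - D_\varepsilon^{(\iip,\ijp,\tau)}|$ is the smooth-extension error incurred by passing from the finite-dimensional truncated space to the full moir\'e indexing over $\mR$.

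The bound for $T_1$ is immediate from Theorem \ref{thm:truncation:dos} and accounts for the $\varepsilon^{-2}e^{-\gamma_h \htr}$, part of the $\varepsilon^{-4}e^{-\gamma_m r}$, and $\varepsilon^{-1}e^{-\gamma_g \varepsilon^{-2}}$ subterms in the first line of \eqref{error:dos}. For $T_2$, since $\trcH(q)$ and $\tpH(q)$ act on the same finite space $\ell^2(\Omega_r^*(q))$, I would use the Lipschitz estimate for Hermitian matrices
\begin{equation*}
    |\Tr\, \gauss(E - \trcH(q)) - \Tr\, \gauss(E - \tpH(q))| \leq \#\Omega_r^*(q)\, \|\delta_\varepsilon'\|_\infty \, \|\trcH(q) - \tpH(q)\|_\op.
\end{equation*}
With $\|\delta_\varepsilon'\|_\infty = O(\varepsilon^{-2})$, $\#\Omega_r^*(q) \lesssim \#\Omega_0^*(\tK)$ for $r$ below a fixed threshold, and Lemma \ref{lemma:convergence:eigenvalues} controlling the operator-norm difference, integration over $\mBZ$ and summation over valleys yield the second line of \eqref{error:dos}.

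For $T_3$, which I expect to be the technical heart, I would apply the ring-decomposition technique from \cite{massatt2023}. By construction \eqref{polymod}, $\bmH(q)$ coincides with the zero-extension of $\tpH(q)$ on the truncation region; outside it, the interlayer block vanishes (because $\ex$ is compactly supported) and each intralayer diagonal block reduces to the monolayer Dirac polynomial $P_j^1$. Consequently, for $\mG \in \mR$ with $|\Gmj(\mG)|$ large, the corresponding $2\times 2$ diagonal block of $\bmH(q)$ has eigenvalues of magnitude $\sim |\Gmj(\mG)|$, which lie far from any $E \in B_\Sigma$, making $\gauss(E - \bmH(q))$ negligible on those degrees of freedom. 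Decomposing $\mR$ into annular rings of thickness proportional to $r$ and bounding each ring's contribution to $\Tr\, \gauss(E - \bmH(q)) - \Tr\, \gauss(E - \tpH(q))$ via a resolvent/Gaussian-decay argument produces a geometric sum whose total is controlled by $\varepsilon^{-4} e^{-\gamma_m r}$, supplying the remaining momentum-truncation contribution in the first line of \eqref{error:dos}. The extra $\varepsilon^{-2}$ compared with Theorem \ref{thm:truncation:dos} arises because one derivative of $\gauss$ is spent in the trace perturbation and a second in summing the ring contributions against the Gaussian profile.

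The main obstacle is $T_3$: one must verify that the modified polynomials $\cP_j^\iip$ remain uniformly invertible off the support of $\chi_r$ (clear from \eqref{polymod}, since $\cP_j^\iip = P_j^1$ there and the Dirac polynomial is invertible away from $\tK_j$), ensure the off-diagonal entries $\cU_{\mathfrak{G}_1(\mG)}^\ijp$ decay sufficiently in $|\mG|$ using Assumption \ref{assumption:interhop}, and adapt the ring bookkeeping of \cite{massatt2023} to the $\mR$-indexed setting. Once these are in place, combining $T_1, T_2, T_3$ and absorbing prefactors involving $|\mBZ|$, $\nu^*$, and $\#\Omega_0^*(\tK)$ into $C_{\Sigma,\theta,\tau,r}$ and $C_{\Sigma,\theta}$ yields \eqref{error:dos}.
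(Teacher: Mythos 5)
Your decomposition into $T_1 + T_2 + T_3$ is exactly the paper's $I_1 + I_2 + I_3$, and your handling of $T_1$ (Theorem \ref{thm:truncation:dos}) and $T_3$ (ring decomposition from \cite{massatt2023}, exploiting that $\cP_j^\iip$ reduces to the invertible Dirac polynomial outside the support of $\chi_r$ while the interlayer block vanishes there) is the same route the paper takes, so the proposal is essentially a match.

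The one place worth flagging is $T_2$. Your Lipschitz trace bound with prefactor $\#\Omega_r^*(q)\|\gauss'\|_\infty\|\trcH-\tpH\|_\op$ is correct, but $\#\Omega_r^*(q) \lesssim \#\Omega_0^*(\tK)$ only holds with an $r$-dependent implied constant (the relevant momentum balls have radii $r_\Sigma + r$ versus $r_\Sigma$), so you would obtain $C_{\Sigma,\theta,r}$ rather than the stated $r$-independent $C_{\Sigma,\theta}$. The paper avoids this by partitioning the eigenvalues of $\tpH(q)$ according to how far the corresponding eigenvalues of the intralayer operator $\eH^{(\tau)}_{r,\rm intra}$ sit from the Dirac energy: only the $\sim \#\Omega_0^*(\tK)$ eigenvalues with intralayer energy in $\ER$ are charged the $\varepsilon^{-2}\|\Delta H\|_\op$ Lipschitz cost, while the remainder are summed against the Gaussian tail via $\sum_k k\,e^{-\gamma_g k^2 \varepsilon^{-2}} \lesssim \varepsilon^2 e^{-\gamma_g \varepsilon^{-2}}$, contributing only to the $\varepsilon^{-1}e^{-\gamma_g\varepsilon^{-2}}$ term. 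Since $r < r_m$ is bounded, this is a refinement of constants rather than of asymptotics, but it is precisely what lets the polynomial-approximation constant in \eqref{error:dos} be stated as $C_{\Sigma,\theta}$.
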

\begin{proof}
    Proof is in Appendix \ref{proof:error:dos}.
\end{proof}
We note that the constant for the polynomial approximation error $C_{\Sigma,\theta}$ scales as $O(\Sigma\theta^{-2})$. Although we consider $\theta\ll1$, since we only focus on a very small energy region around Dirac points, $C_{\Sigma,\theta}$ is still a small constant. 
We note as long as the hopping functions do not have asymptotic behavior with $\theta$ as is the case for the simplified and Wannier tight-binding models we consider, then $\gamma_m \sim \theta^{-1}$. Hence we can have $r$ proportional to $\theta$, and $r_\Sigma$ is small as interlayer tunneling is weak in Van der Waals systems. Consequently, small Taylor expansion orders $\iip,\ijp$ are sufficient to ensure convergence. We mention that the inclusion of mechanical relaxation effects alters this asymptotic \cite{Massatt2023}. 

\medskip

\section{Numerical Simulations}

In this section, we shall give the high accuracy continuum models for the simplified TBG and the Wannierized TBG. We will demonstrate the conclusions by convergence results and band structures at the $K$ valley for the magic angle ($\theta=1.1^\circ$). The results are similar for the $K'$ valley.
For the simplified TBG, we consider the model given by \eqref{toy_rs} and use the tight-binding parameters given in \cite{Bistritzer2011} and \cite{watson2023}. For the Wannierized TBG, we consider the monolayer tight-binding hopping parameters up to the fourth nearest neighbor, with values obtained from a previous first-principles study of graphene \cite{Fang2016}, and interlayer hopping coupling is also derived from the first-principles calculations \cite{Fang2016}. As seen in Figure \ref{fig:inter_coup}, the interlayer hopping couplings of these two TBG models have different decay and symmetry. Therefore, we can anticipate that their BM models will exhibit different accuracy on the band structure near the flat band, and their high-order corrections are also different.

\begin{figure}[htb!]
\centering
\includegraphics[height=4cm]{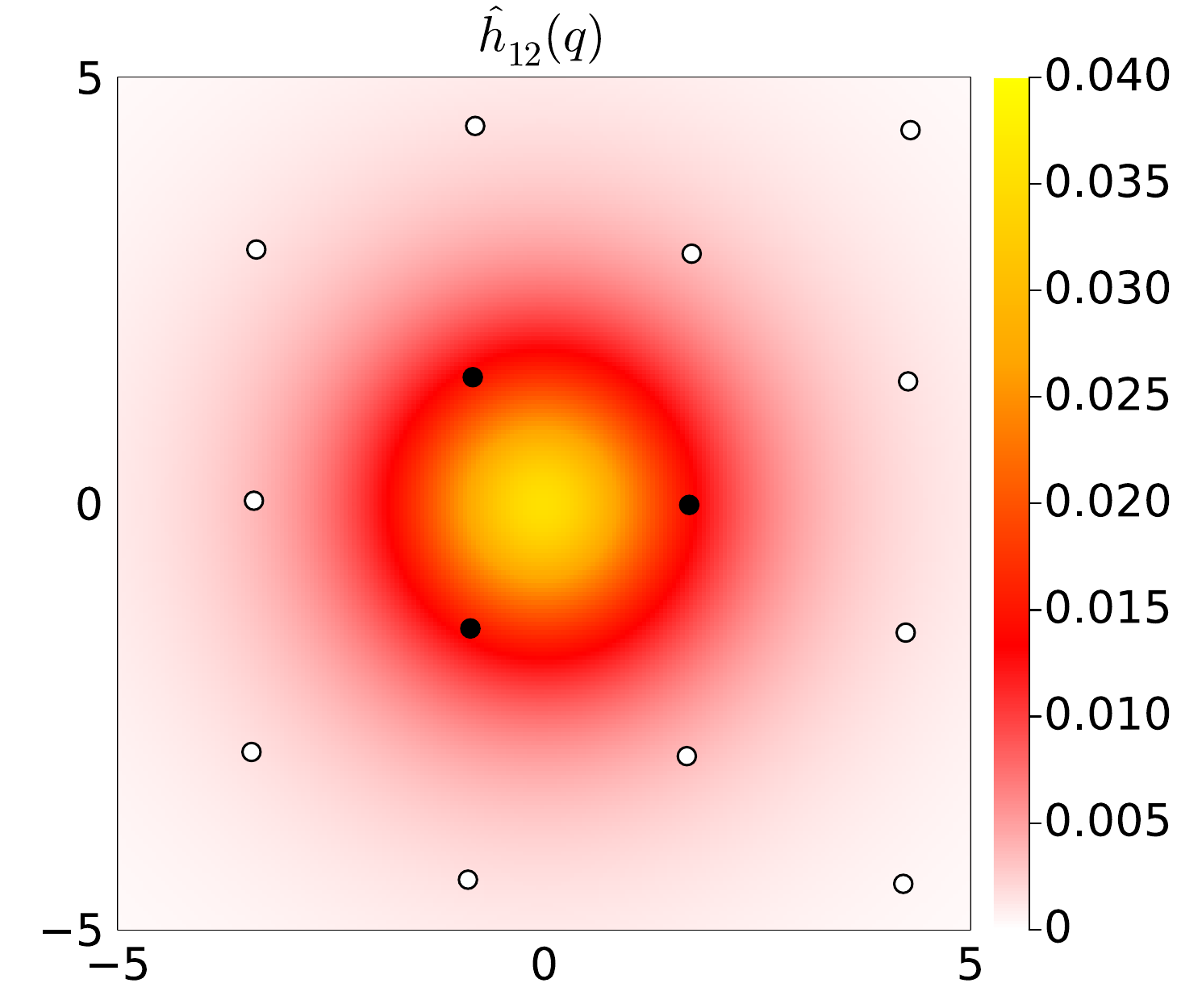} 
\includegraphics[height=4.05cm]{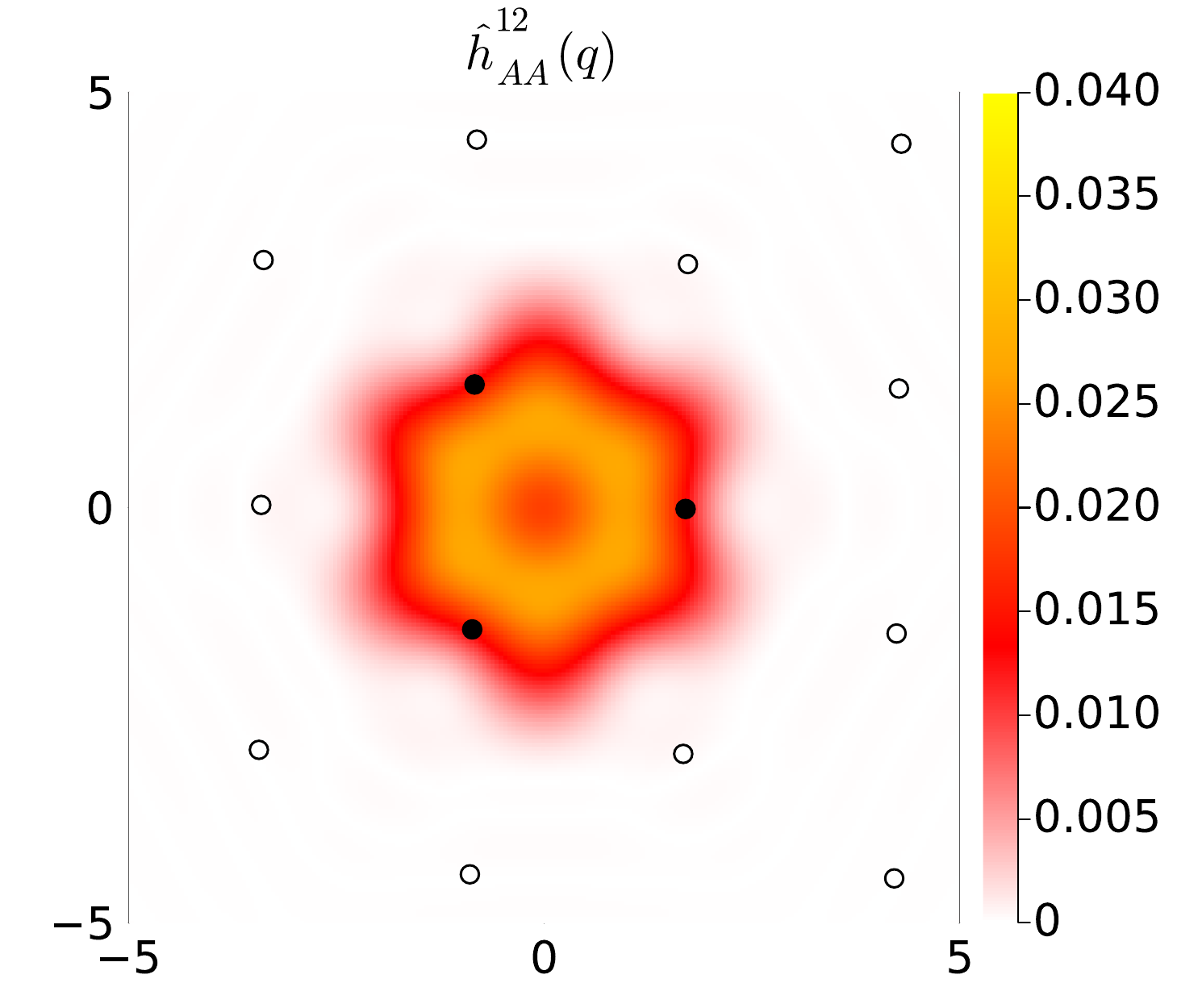}  
\includegraphics[height=4.05cm]{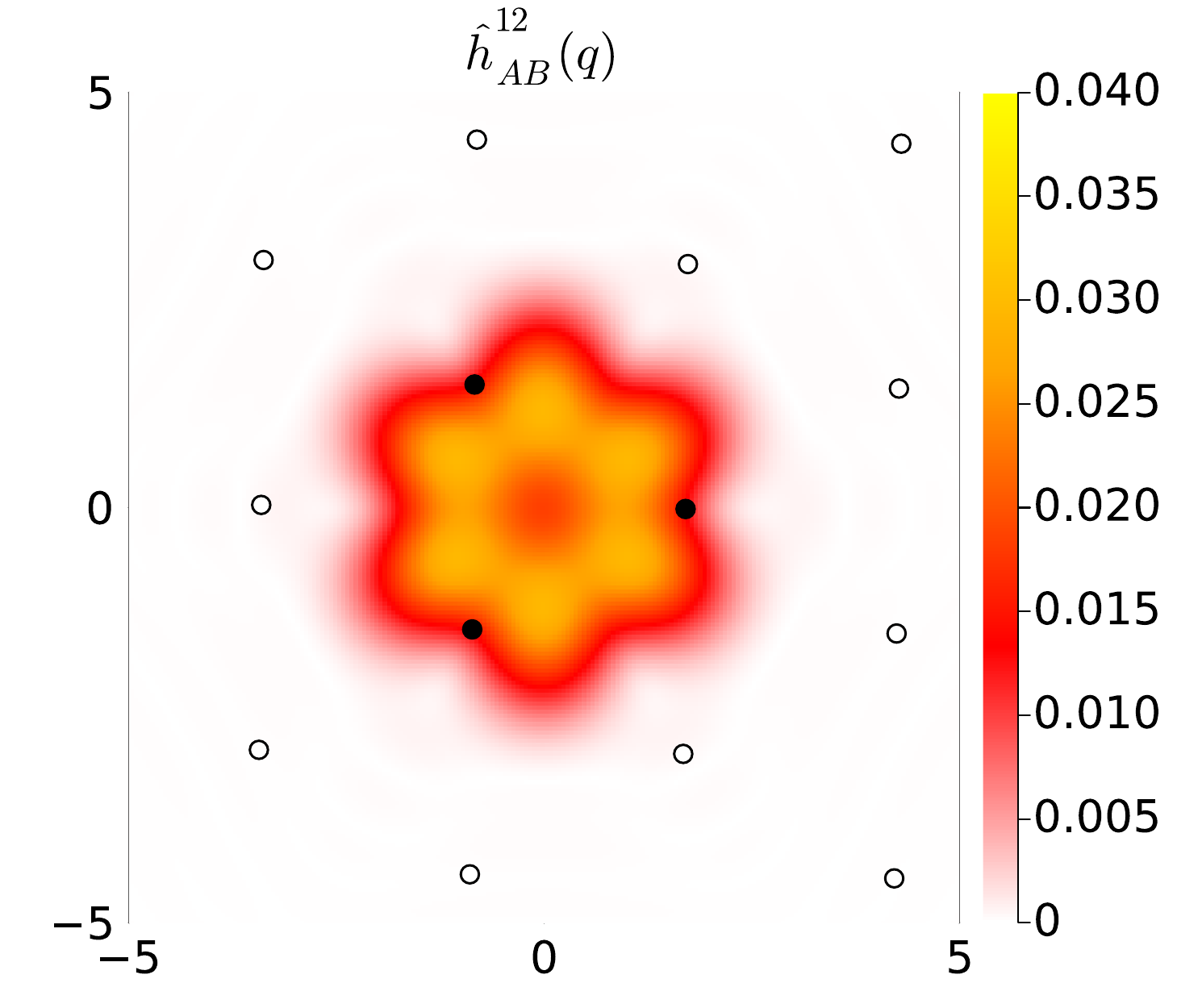}  
\setlength{\abovecaptionskip}{0pt} 
\caption{
Magnitude of interlayer coupling in momentum space for the simplified TBG, AA and AB parts of the Wannierized TBG, respectively. The balls represent all $K_1+G_1$, where the black balls are hoppings corresponding to $\B_1$.
}
\label{fig:inter_coup}
\end{figure}

In particular, we will use a more succinct method to truncate the reciprocal space for TBG. Let $\Lambda>0$ be the new momentum truncation parameter, then the basis set could be described by a fixed truncation radius
\begin{align*}
    \Omega^*_\Lambda:=\{G\sigma\in\Omega^*:|\Gjm(G)|<\Lambda\},
\end{align*}
and the corresponding inclusion is $J_\Lambda:=J_{\Omega^*\gets\Omega_\Lambda^*}$.
The reason for this is that the conical bands of graphene lead to a linear relationship between momentum cutoffs and energy range. We denote the high-symmetry lines of the moir\'{e} Brillouin zone at a single moir\'{e} $K_{\rm M}$ valley by $\LS$ ($K_{\rm M}\to\Gamma_{\rm M}\to M_{\rm M}\to K_{\rm M})$, and define the relative error for momentum truncation along $\LS$ by
\begin{equation*}
    {\rm Err}(\Lambda,\Sigma):=\max_{
    \substack{q\in \LS\\[0.5ex]j:|\epsilon_j(q)|\leq\Sigma}}\big|\epsilon_j(q)-\epsilon_j(\Lambda,q)\big|,
\end{equation*}
where $\epsilon_j(q)$ and $\epsilon_j(\Lambda,q)$ are the $j$-th eigenvalues of $\widehat{H}(q)$ and $J^*_\Lambda\widehat{H}(q)J_\Lambda$, respectively. In Figure \ref{fig:rcut_test}, we show the relative error as a bi-variate function of $\Lambda$ and $\Sigma$ for the Wannierized TBG. We can observe that $\Lambda$ grows linearly with $\Sigma$ when converging, which validates the rational choice of $\Omega^*_\Lambda$. We also note that the error does not decrease continuously with $\Lambda$, this is because the cardinality of $\Omega^*_\Lambda$ increases piecewisely with $\Lambda$.
\begin{figure}[htb!]
\centering
\includegraphics[height=5.2cm]{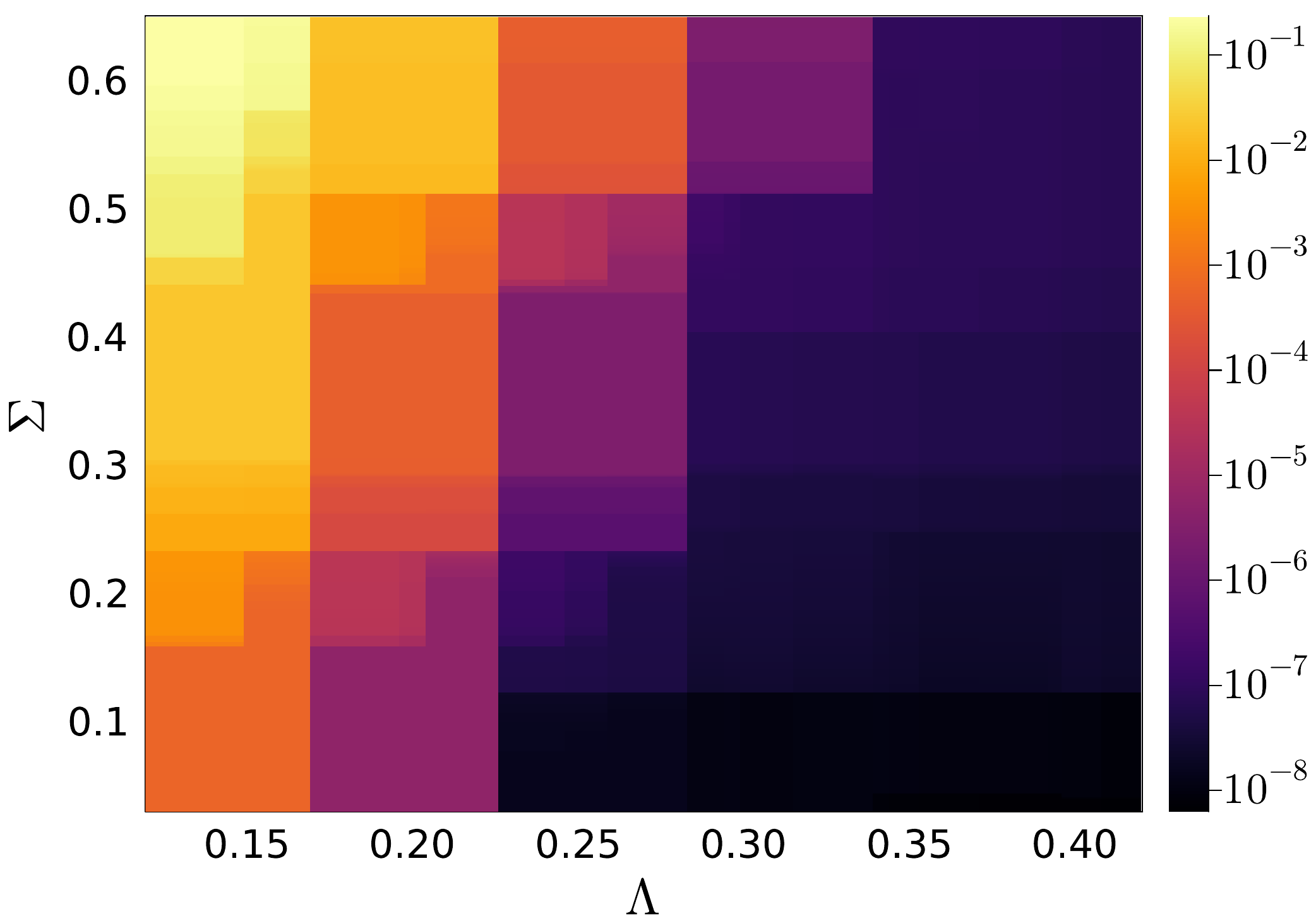 } 
\setlength{\abovecaptionskip}{0pt} 
\caption{
Relative error ${\rm Err}(\Lambda,\Sigma)$ for the Wannierized TBG at $\theta=1.1^\circ$.
}
\label{fig:rcut_test}
\end{figure}
In the following, all the simulations will use $\Omega_\Lambda^*$ as the momentum truncation basis set. Since we are only concerned with the parameters selection for the continuum model, we will omit the subscript of momentum truncation $r/\Lambda$ (chosen to ensure the momentum truncation error converges) to simplify the notation.

To select the ``optimal" parameters for high accuracy models, we need to know the contribution of each parameter to convergence. Therefore, we first use the momentum space model $\eH$ as a reference and compare it with $\eH^{(\tau)}$ to evaluate the hopping truncation error. Next, we fix the hopping truncation parameter $\tau=\tau_0$, and compare $\eH^{(\tau_0)}$ with $\eH^{(\iip,\infty,\tau_0)}$ ($\eH^{(\infty,\ijp,\tau_0)}$) to evaluate the intralayer (interlayer) expansion error. Here, $``\infty"$ indicates that there is no polynomial approximation for the corresponding part.
Figure \ref{fig:error} shows the relative errors for 6 electron eigenvalues closest to the Fermi energy along $\LS$ for three models. We observe that the errors decrease exponentially with $\tau$, $\iip$, $\ijp$, which is consistent with the convergence analysis. Although the model becomes more accurate with larger parameters, the model complexity is also higher. From the pictures, we can see that there are some ``optimal" parameters to balance the trade-off. For example, we can take $\iip=2,\ijp=1,\tau=6$ for the simplified TBG and $\iip=2,\ijp=1,\tau=2$ for the Wannierized TBG, to achieve the desired accuracy without introducing too much complexity.

\begin{figure}[htb!]
\centering
\begin{tabular}{cc}
\includegraphics[width=4.8cm]{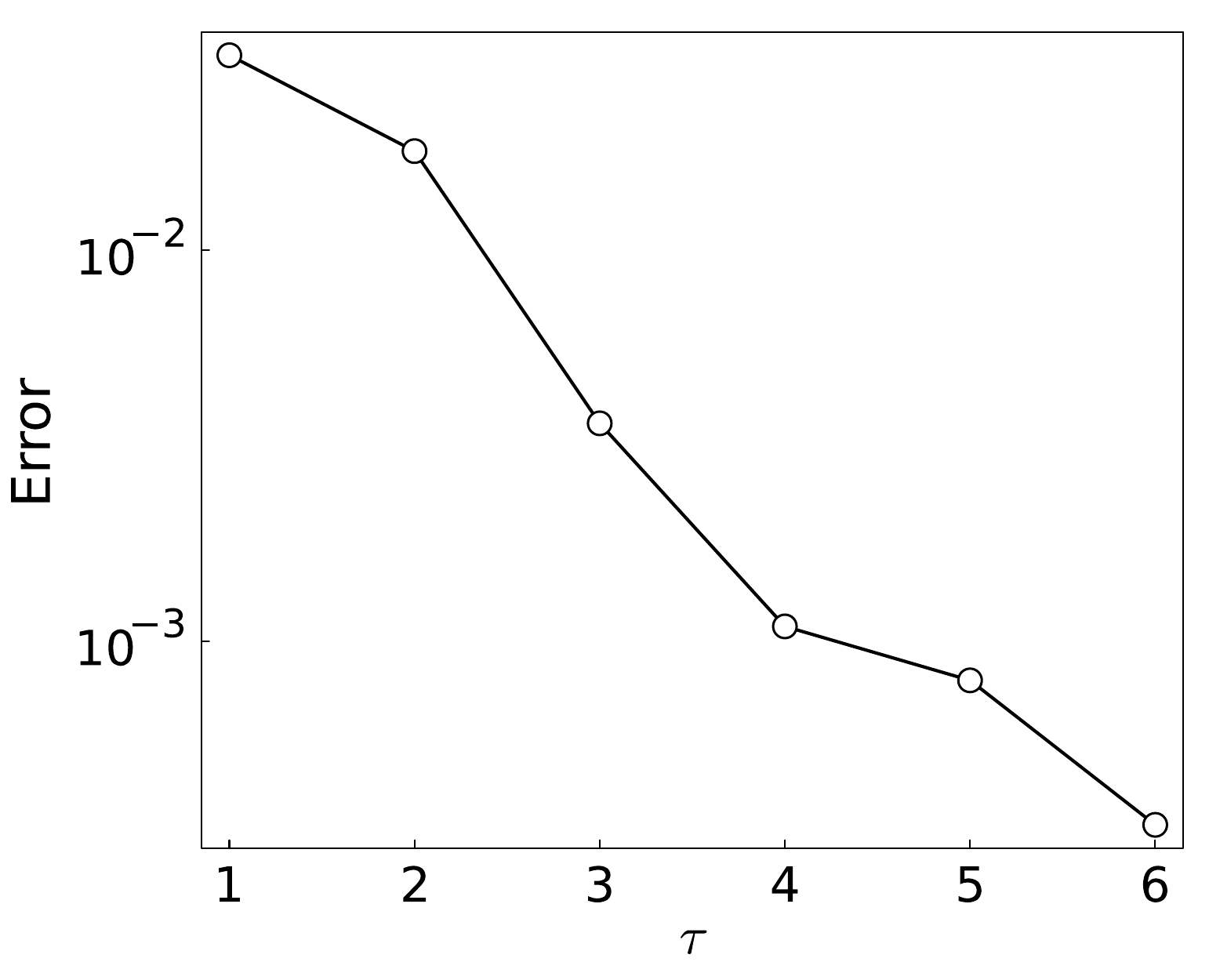}  
\hskip 1cm & 
\includegraphics[width=4.8cm]{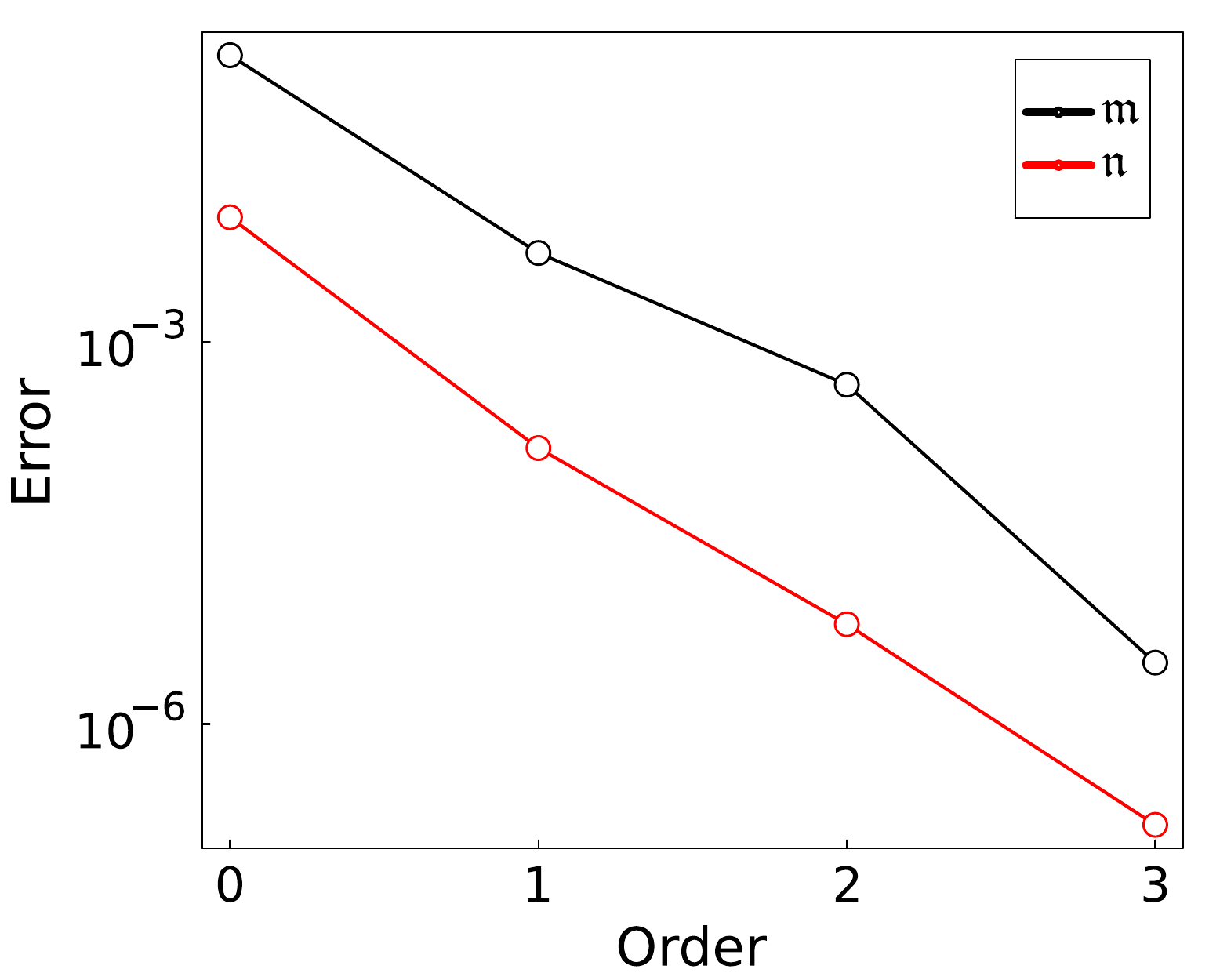} \\
\includegraphics[width=4.8cm]{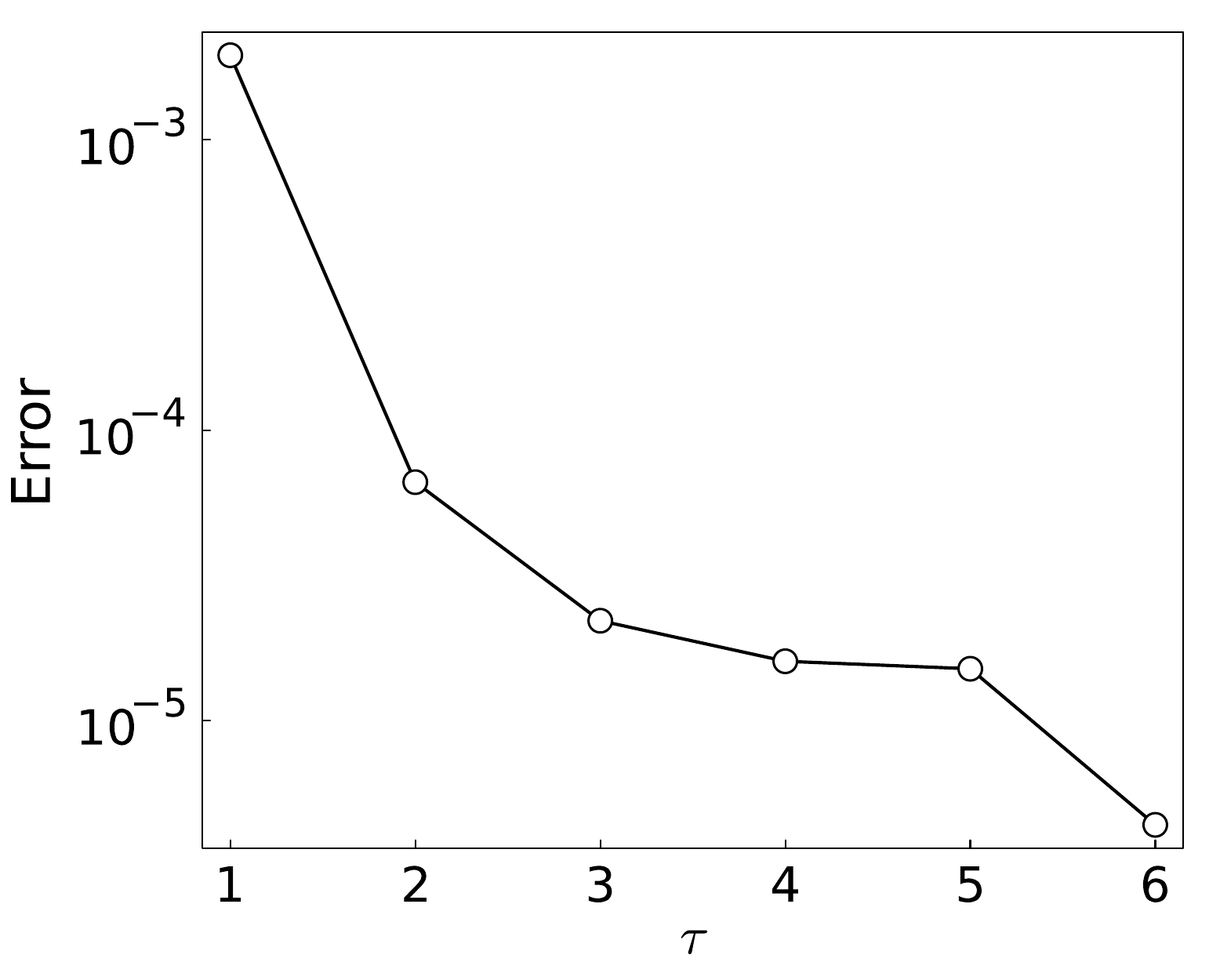} &
\includegraphics[width=4.8cm]{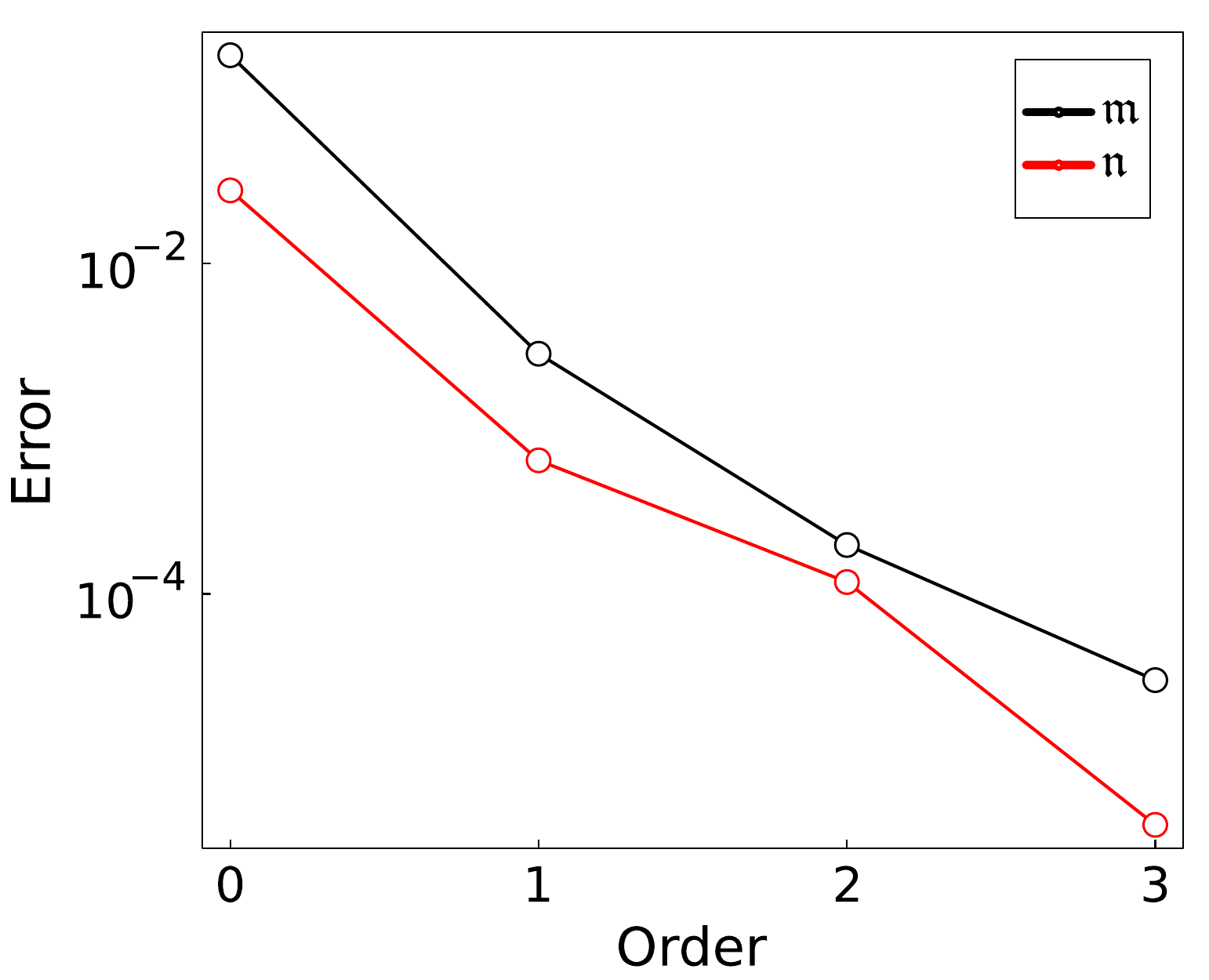} 
\end{tabular}
\setlength{\abovecaptionskip}{0pt} 
\caption{
Relative errors for the 6 electron eigenvalues closest to the Fermi energy along $\LS$. The upper pictures show the errors of the simplified TBG, where the first one is for the convergence in hopping truncation $\tau$, and the second one is for the convergence in expansion orders $\iip,\ijp$ with $\tau_0=6$. Similarly for the lower pictures, but for the Wannierized TBG, and $\tau_0=2$ for the second one.
}
\label{fig:error}
\end{figure}

We further show the electronic band structures along $\LS$ for the momentum space model $\eH$, the BM model $\eH^{(1,0,1)}$, and the high accuracy model $\eH^{(\iip,\ijp,\tau)}$ (with parameters chosen as above) respectively in Figure \ref{fig:bandstruct}. For the simplified TBG, the momentum space model gives an extremely flat band very close to 0 energy. The BM model gives a slightly less flat band with qualitatively different bands from the momentum space model at the $\Gamma$ point. This indicates that the BM model of the simplified TBG can not accurately capture these characteristics at the magic angle. By increasing both the intralayer and interlayer expansion orders by one and the hopping truncation parameter by two, the band structure is almost identical to the exact result. Therefore, the parameter choices for the high accuracy continuum model are reasonable. 
For the Wannierized TBG, the band structure of the BM model is more identical to the exact result except for a slight shift towards 0. This explains why we can use smaller corrections to obtain the accurate mode for the Wannierized TBG compared to the simplified TBG.


\begin{figure}[htb!]
\centering
\begin{tabular}{cc}
{\includegraphics[width=4.8cm]
{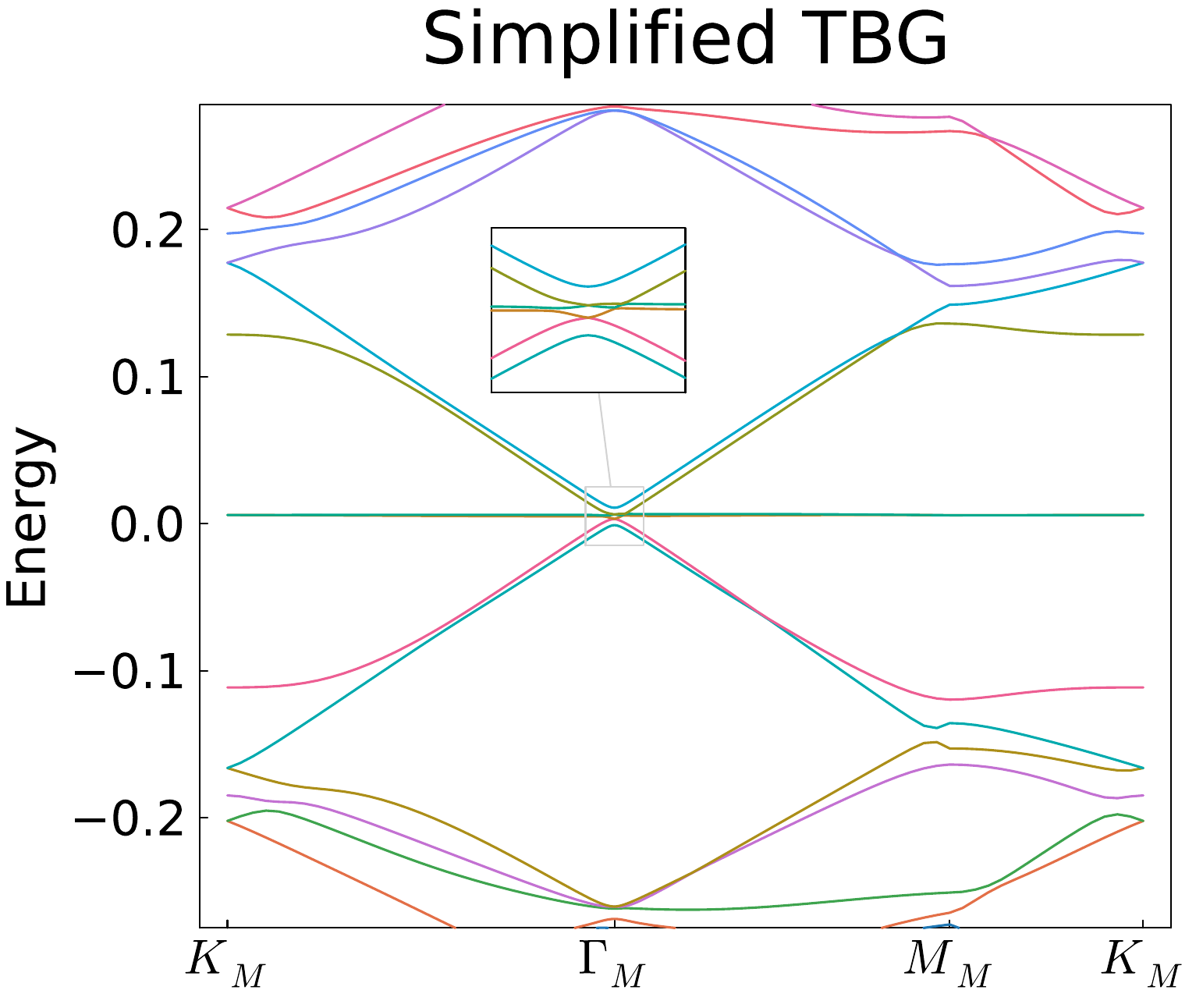}} & 
\includegraphics[width=4.8cm]
{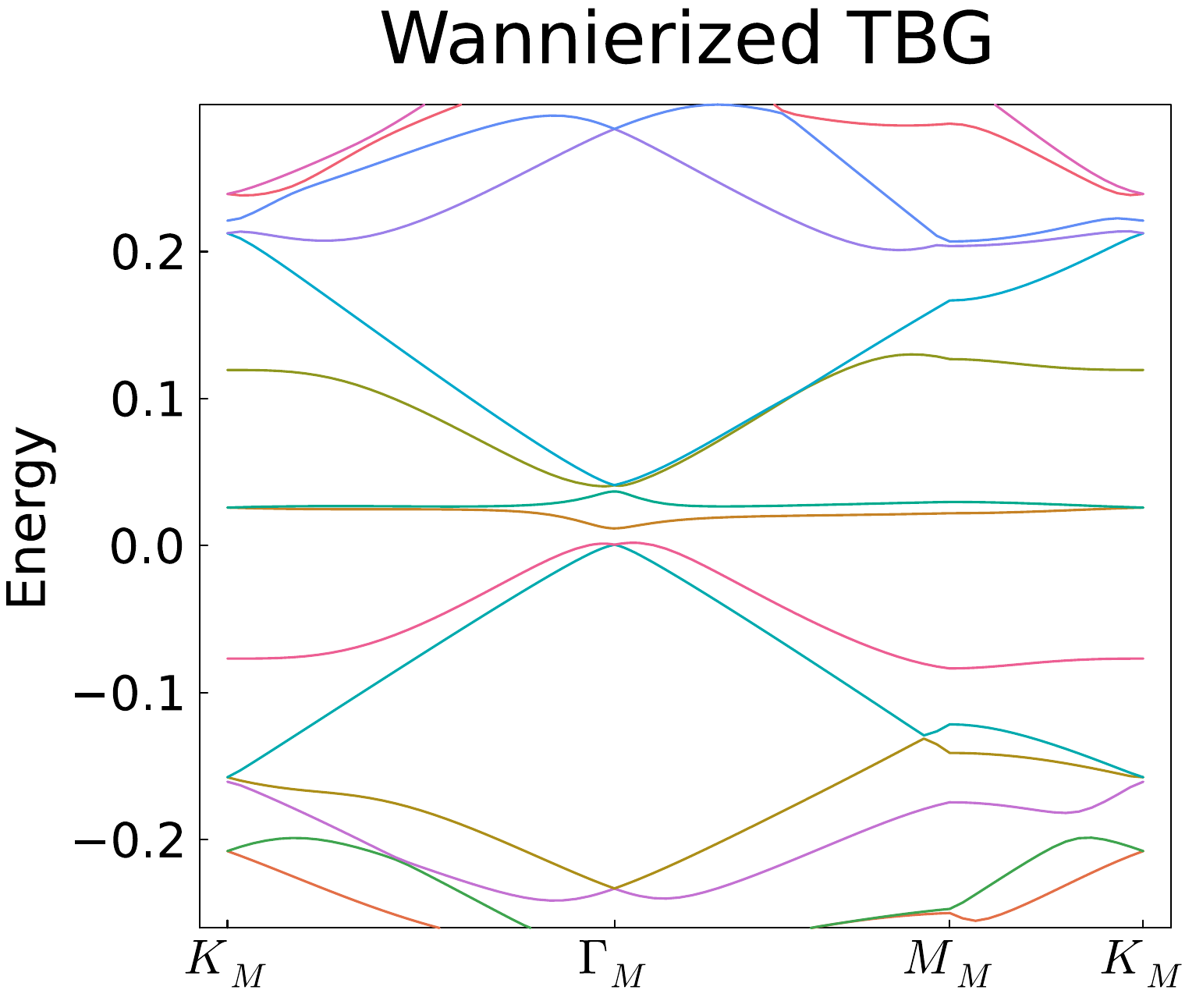} \\
\includegraphics[width=4.8cm]
{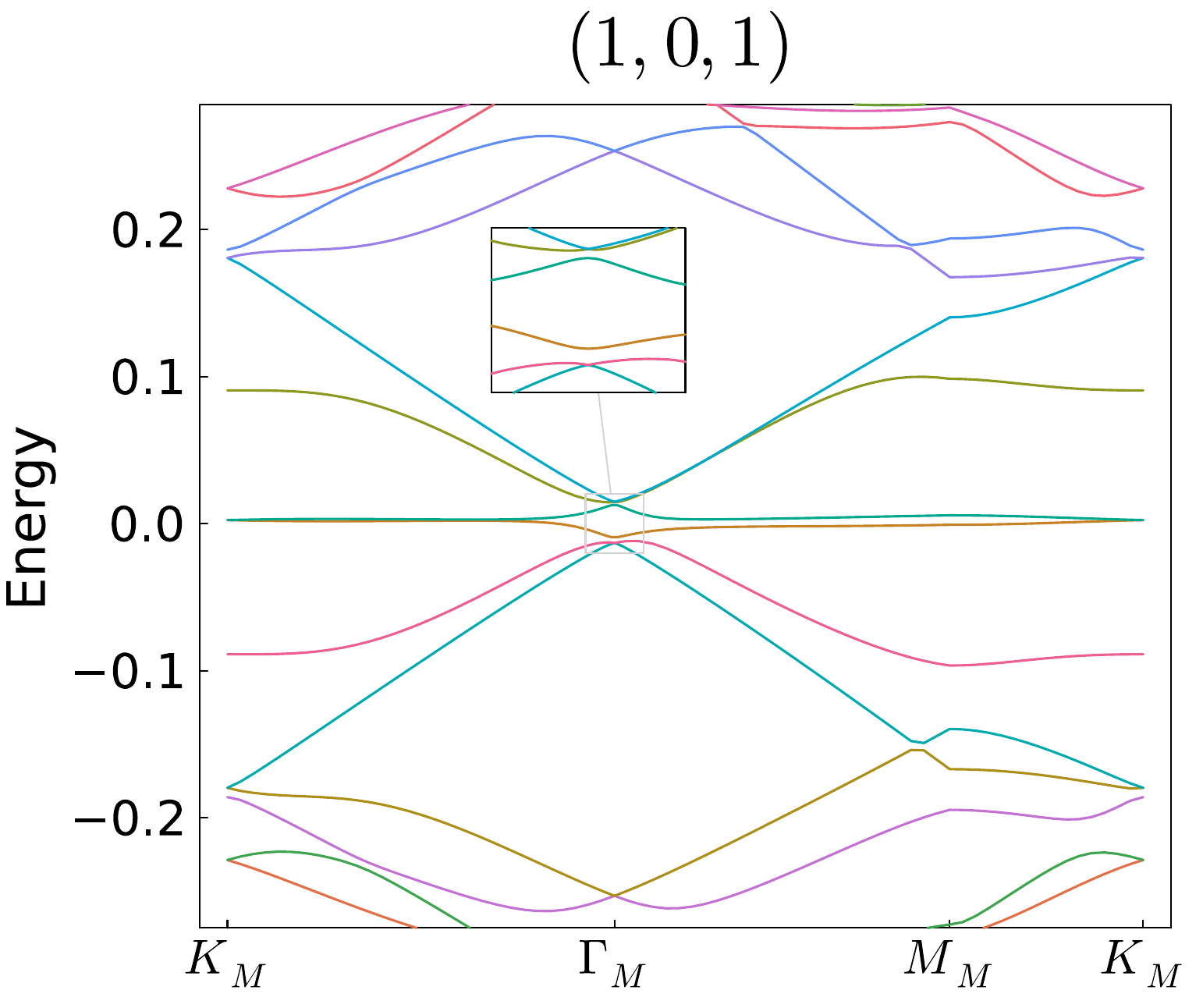} &
\includegraphics[width=4.8cm]
{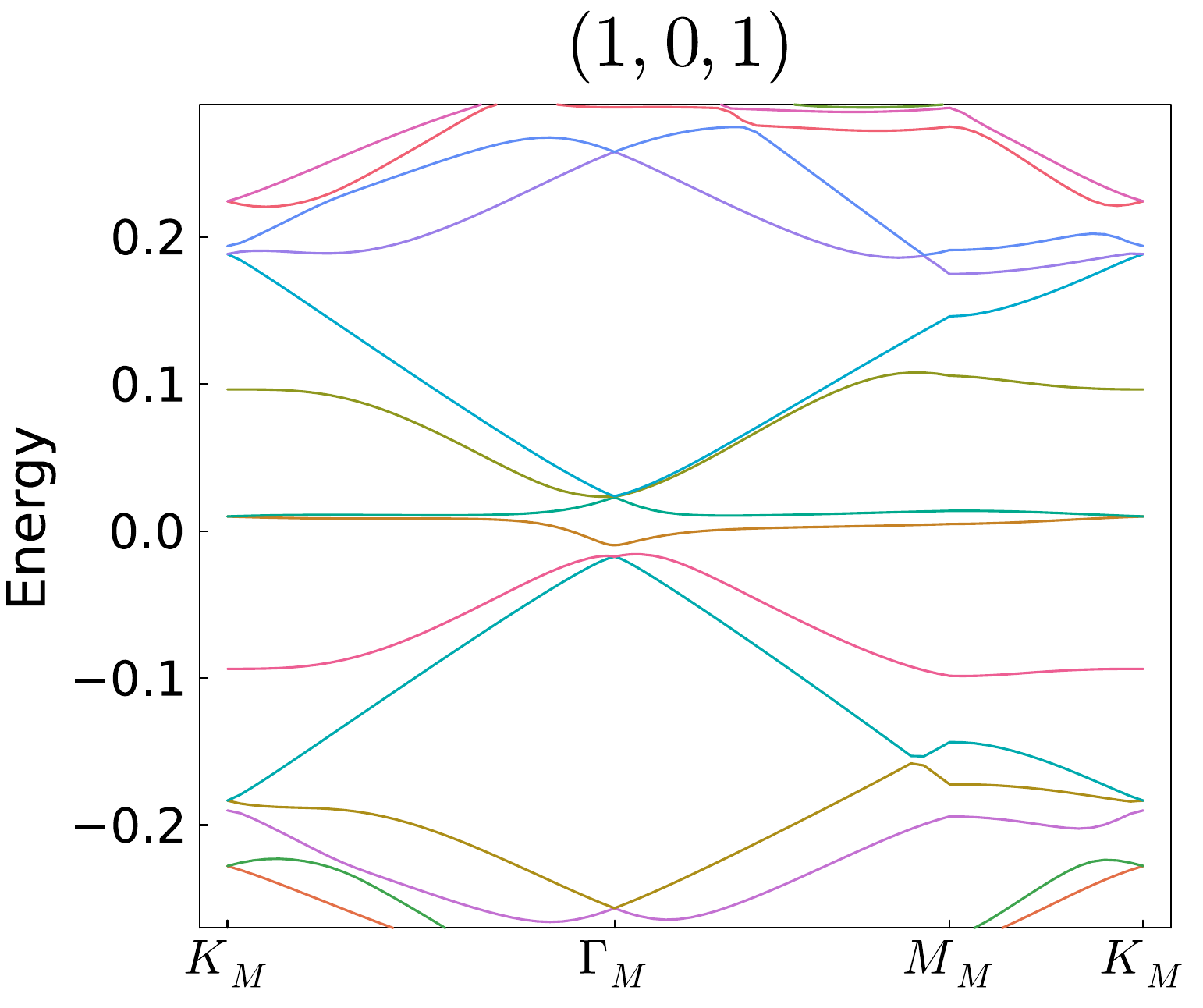} \\
\includegraphics[width=4.8cm]
{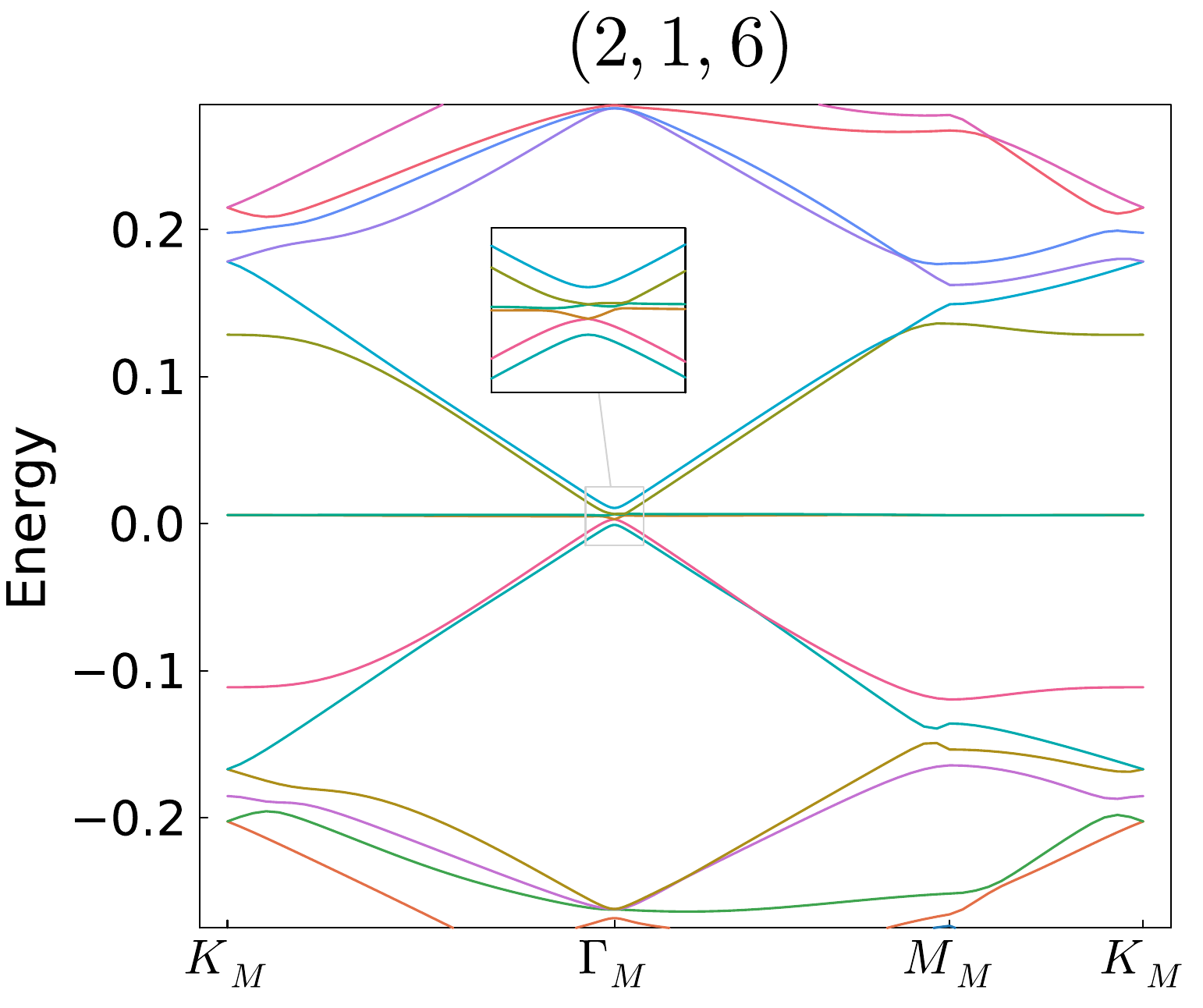} &
\includegraphics[width=4.8cm]
{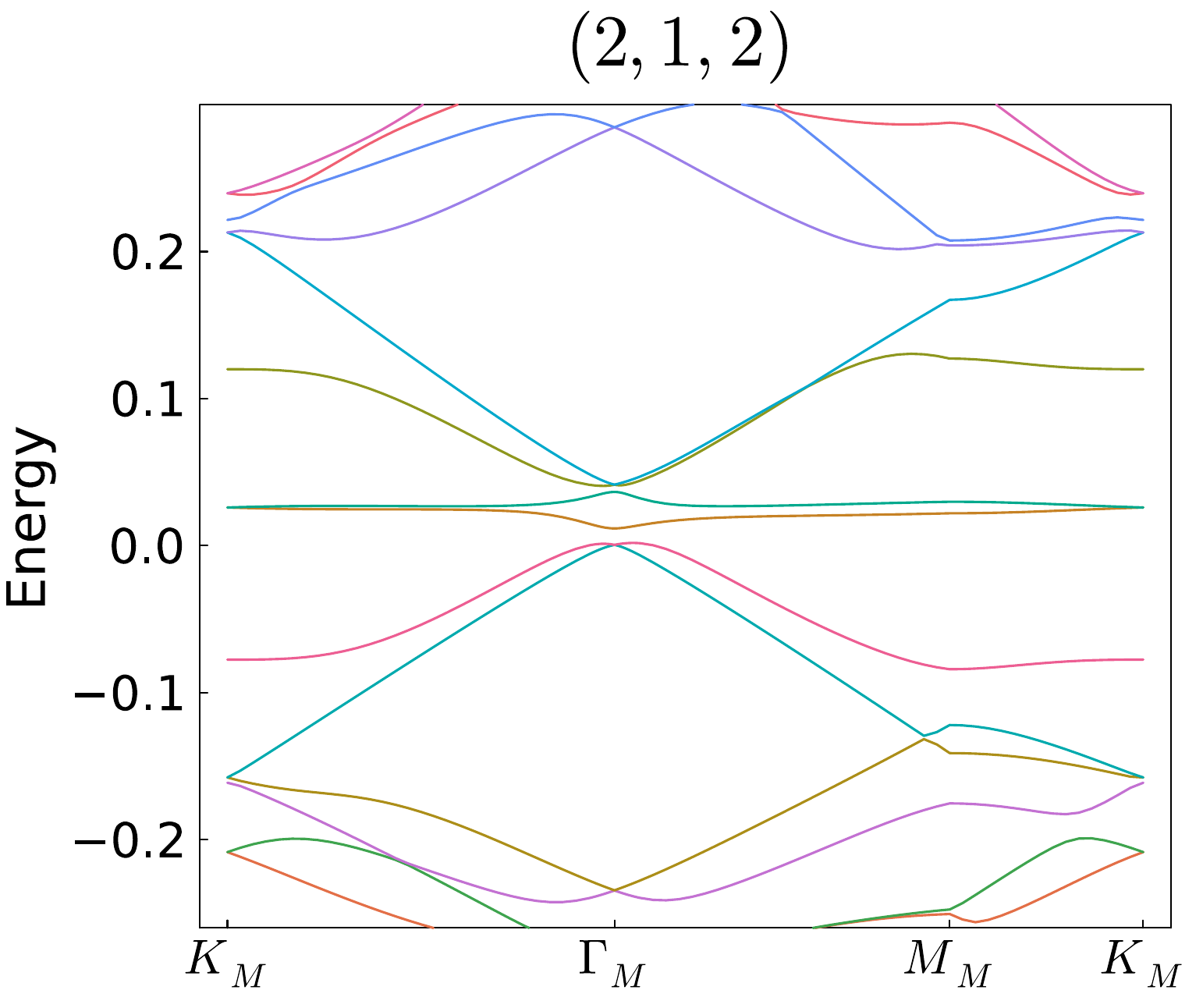} 
\end{tabular}
\setlength{\abovecaptionskip}{0pt} 
\caption{
Magic angle $1.1^\circ$ electronic band structure along $\LS$ for three models,
the momentum space model (top), the BM model (middle), and the high accuracy continuum model (bottom). The first column shows the band structures for the simplified TBG, while the second shows that of the Wannierized TBG. The momentum axes are labeled in terms of the high-symmetry points of the reciprocal lattice of the moir\'{e} supercell, not the monolayer cells.
}
\label{fig:bandstruct}
\end{figure}

\medskip


\medskip

\appendix

\section{Proof of Lemma \ref{lemma:convergence:eigenvalues}}
\label{proof:error:eigs}
Since the Taylor series of the hoppings are
\begin{align*}
    \bthop(\xi) &= \sum_{|\beta|=0}^\infty\frac{D^\beta\bthop(\xi_0)}{\beta !}(\xi-\xi_0)^\beta\\
    &=\sum_{|\beta |=0}^\infty\frac{(\xi-\xi_0)^\beta }{\beta !}\sum_{R_j\in\R_j}(-i(\tau_\sigma-\tau_{\sigma'}+R_j))^\beta e^{-i\xi_0\cdot(\tau_\sigma-\tau_{\sigma'}+R_j)} h^{jj}_{\sigma\sigma'}(R_j),
\end{align*}
and 
\begin{align*}
    \interh(\xi) &= \sum_{|\beta|=0}^\infty\frac{D^\beta\interh(\xi_0)}{\beta !}(\xi-\xi_0)^\beta\\
    &=\sum_{|\beta |=0}^\infty\frac{(\xi-\xi_0)^\beta }{\beta !}\int_{\RR^2}(-ix)^\beta e^{-i x\cdot \xi_0} {h^{12}_{\sigma\sigma'}}(x)\; dx.
\end{align*}
By Assumption \ref{assumption:interhop}, \ref{assumption:intrahop} and the Cauchy–Schwarz inequality, we have 
\begin{align*}
    \big|\bthop(\xi)\big| &\leq\sum_{R_j\in\R_j}|h^{jj}_{\sigma\sigma'}(R_j)|\sum_{|\beta|=0}^\infty\frac{|(\xi-\xi_0)^\beta(\tau_\sigma-\tau_{\sigma'}+R_j)^\beta|}{\beta!}\\
    &\lesssim  \sum_{R_j\in\R_j}e^{(|\xi-\xi_0|-\gamma_j)|R_j|},
\end{align*}
and
\begin{align*}
    \big|\interh(\xi) \big|
    &\leq\int_{\RR^2}\sum_{|\beta|=0}^\infty\frac{|(\xi-\xi_0)^\beta x^\beta|}{\beta!}{|h^{12}_{\sigma\sigma'}}(x)|\; dx\\
    &\lesssim  \int_{\RR^2}e^{(|\xi-\xi_0|-\gamma_{12})|x|}\; dx.
\end{align*}
We can therefore obtain that the Taylor series converge absolutely for 
\[
\big\{q\in\RR^2, G\in\R_2^*: |q-\tK_1+\Gjm(G)|<\min\{\gamma_1,\gamma_{12}\}\big\}
\]
and
\[
\big\{q\in\RR^2, G\in\R_1^*: |q-\tK_2+\Gjm(G)|<\gamma_2\big\}.
\]
Similarly, by the Taylor's theorem, we have the following estimates for the remainders,
\begin{align*}
\sum_{|\beta|=\iip+1}&\frac{1}{\beta!}
\max_{\xi'}\big|D^\beta\bthop(\xi')(\xi-\xi_0)^\beta\big|\\[1ex] 
\lesssim&\,
\sum_{k=\iip+1}^\infty\frac{|\xi-\xi_0|^k}{k!}\sum_{R_j\in\R_j} e^{-\gamma_j|R_j|}{|\tau_\sigma-\tau_{\sigma'}+R_j|^k}
\\[1ex]
\leq&\,\frac{1}{|\Gamma_j|}\sum_{k=\iip+1}^\infty \frac{|\xi-\xi_0|^k}{k!}\int_{\RR^2}|\tau_\sigma-\tau_{\sigma'}+x|^ke^{-\gamma_j|x|}\;dx\\[1ex]
\leq&\,\frac{e^{\gamma_j|\tau_\sigma-\tau_{\sigma'}|}}{|\Gamma_j|}\sum_{k=\iip+1}^\infty \frac{|\xi-\xi_0|^k}{k!}
    \int_0^{2\pi}\int_0^\infty r^{k+1}e^{-\gamma_jr}\;dr\;d\theta\\[1ex]
    =&\,\frac{2\pi e^{\gamma_j|\tau_\sigma-\tau_{\sigma'}|}}{\gamma_j|\Gamma_j|}\sum_{k=\iip+1}^\infty (k+1)\left(\frac{|\xi-\xi_0|}{\gamma_{j}}\right)^k \\[1ex]
    \lesssim&\, (\iip+1)\left(\frac{|\xi-\xi_0|}{\gamma_{j}}\right)^{\iip+1},
\end{align*}
and 
\begin{align*}
    \sum_{|\beta|=\ijp+1}^\infty\frac{1}{\beta!}&
\max_{\xi'}|D^\beta\interh(\xi')(\xi-\xi_0)^\beta|\\[1ex]
\lesssim&\,\sum_{k=\ijp+1}^\infty\frac{|\xi-\xi_0|^k}{k!}\int_{\RR^2}|x|^ke^{-\gamma_{12}|x|}\; dx\\[1ex]
=&\, \sum_{k=\ijp+1}^\infty\frac{|\xi-\xi_0|^k}{k!}\int_0^{2\pi} \int_0^\infty r^{k+1}e^{-\gamma_{12}r}\;dr\;d\theta\\[1ex]
=&\, 
\frac{2\pi}{\gamma_{12}}\sum_{k=\ijp+1}^\infty(k+1)\left(\frac{|\xi-\xi_0|}{\gamma_{12}}\right)^k\\[1ex]
\lesssim &\, (\ijp+1)\left(\frac{|\xi-\xi_0|}{\gamma_{12}}\right)^{\ijp+1} .
\end{align*}
Let $\Delta H:=\trcH(q)-\tpH(q)$, for $q, G$ belonging to the above subsets, we can obtain
\begin{align*}
&|(\Delta H_{jj})_{G\sigma,G\sigma'}| \leq C(\iip+1)(\gamma_j^{-1})^{\iip+1}|q-\tK_j+\Gjm(G)|^{\iip+1},\\[1ex]
&|(\Delta H_{12})_{G\sigma,G'\sigma'}|\leq C(\ijp+1)(\gamma_{12}^{-1})^{\ijp+1}|q-\tK_1+\Gjm(G)|^{\ijp+1},
\end{align*}
where the constant $C$ is independent of $\iip,\ijp, \tau$.

Note that when $q\in \tK + \mBZ$ and $G\sigma\in\Omega_r^*(q)$, we have $|q-\tK_j+\Gjm(G)|\leq r_\Sigma+r=:r'$. 
Then we view the matrix as blocks with respect to reciprocal lattices, we can have
\[
|(\Delta H_{jj})_{G,G}| \leq{C(\iip+1) ({\gamma_j^{-1}}r')^{\iip+1}}\qquad{\rm and}\qquad
|(\Delta H_{12})_{G,G'}|\leq{C(\ijp+1)(\gamma_{12}^{-1} r')^{\ijp+1}}.
\]
Let $R_G$ be the sum of the absolute values of the non-diagonal entries in the 
$G$-th row,
\[
R_G:=\sum_{G'\neq G}|\Delta H_{G,G'}|\leq\sum _{G'}|(\Delta H_{12})_{G,G'}|\delta_{\I(G)+\I(G')\in \B_\tau}\leq {{C\# \B_\tau(\ijp+1)(\gamma_{12}^{-1} r')^{\ijp+1}}}.
\]
By the Gershgorin circle theorem , we can obtain
\begin{align*}
    \|\Delta H\|_{\op}& \leq\max_{G} \big(|\Delta H_{G,G}| + R_G\big) \leq C\Big((\iip+1) (\max_{j={1,2}}{\gamma_j^{-1}}r')^{\iip+1} + \# \B_\tau(\ijp+1)(\gamma_{12}^{-1} r')^{\ijp+1}\Big).
\end{align*}

\section{Proof of Lemma \ref{lemma:continuum model}}
\label{compt:MS:GBM}
By the definition of $\mE_{q,j}$, and the matrix elements \eqref{intra:GBM} and \eqref{inter:GBM}, for $\mG\sigma\in\Omega^*_{1{\rm M}}$, we have
\begin{align*}
[\bmH(q)\mE_q\bw]^1_{\mG\sigma}&=\sum_{\mG'\sigma'\in\Omega^*_{1{\rm M}}}[\widetilde{H}^{(\iip)}_{11}(q)]_{\mG\sigma,\mG'\sigma'}e^{-i\mathfrak{G}_1(\mG')\cdot\tau_{\sigma'}}\bw_{1{\sigma'}}(q-\tK_1+\mG')\\[1ex]
    &\quad+\sum_{\mG'\sigma'\in\Omega^*_{2{\rm M}}}[\widetilde{H}^{(\ijp,\tau)}_{12}(q)]_{\mG\sigma,\mG'\sigma'}e^{-i\mathfrak{G}_2(\mG')\cdot\tau_{\sigma'}}\bw_{2\sigma'}(q-\tK_2-\mG')\\[1ex]
    &=e^{-i\mathfrak{G}_1({\mG})\cdot\tau_{\sigma}}\sum_{\sigma'\in\A_1}\cP^{\iip}_{1,\sigma\sigma'}\big(q-\tK_1+\mG\big)\bw_{1\sigma'}(q-\tK_1+\mG)\\[1ex]
    &\quad+e^{-i\mathfrak{G}_1({\mG})\cdot\tau_{\sigma}}\sum_{\sigma'\in\A_2,\,\I(\GG_{\rm{M}})\in\B_\tau}e^{i\mathfrak{G}_1(\GG_{\rm{M}})\cdot\tau_{\sigma}}e^{-i\mathfrak{G}_2({\GG_{\rm{M}}})\cdot\tau_{\sigma'}}\\[1ex]
    &\qquad  \cU^{\ijp}_{\mathfrak{G}_1(\GG_{\rm M}),\sigma\sigma'}(q-\tK_1+\mG)\bw_{2\sigma'}(q-\tK_1+\mG+\tK_1-\tK_2-\GG_{\rm{M}})\\[1ex]
    &=\mE_{q,1}\Bigg\{\sum_{\sigma'\in\A_1}\cP^{\iip}_{1,\sigma\sigma'}(\cdot)\bw_{1\sigma'}+\sum_{\sigma'\in\A_2,\,\I(\GG_{\rm{M}})\in\B_\tau}e^{i\mathfrak{G}_1(\GG_{\rm{M}})\cdot\tau_{\sigma}}e^{-i\mathfrak{G}_2(\GG_{\rm{M}})\cdot\tau_{\sigma'}} \\
    &\qquad \qquad \cU^{\ijp}_{\mathfrak{G}_1(\GG_{\rm{M}}),\sigma\sigma'}(\cdot)\bw_{2\sigma'}(\cdot+\tK_1-\tK_2-{\GG_{\rm M}})\Bigg\}_{\mG\sigma}.
\end{align*}
Therefore, we can deduce that
\begin{align*}
    &[H_{jj}^{(\iip)}(q)]_{\sigma\sigma'}=\cP_{j,\sigma\sigma'}^\iip(q),\\
    &
    [H_{12}^{(\ijp,\tau)}(q)]_{\sigma\sigma'}=\sum\limits_{\I(\mG)\in\B_\tau}e^{i\mathfrak{G}_1(\mG)\cdot\tau_{\sigma}}e^{-i\mathfrak{G}_2(\mG)\cdot\tau_{\sigma'}} \cU^{\ijp}_{\mathfrak{G}_1(\mG),\sigma\sigma'}(q)T_{\tK_1-\tK_2-\mG}.
\end{align*}

\section{Proof of Theorem \ref{thm:convergence:dos}}
\label{proof:error:dos}
We denote the approximate DoS of $\tpH$ by $D^{(\iip,\ijp,\tau)}_{\varepsilon,r}(E;\tilde{K})$, and then divide the error into three parts:
\begin{align*}
   \Big|D_\varepsilon(E)-\sum_{\tK\in\{K,K'\}}D^{(\iip,\ijp,\tau)}_{\varepsilon}(E;\tK)\Big|
   &\leq
\Big|D_\varepsilon(E)-\sum_{\tK\in\{K,K'\}}D_{\varepsilon,r}^{(\tau)}(E;\tK)\Big|\\
&\quad+ \sum_{\tK\in\{K,K'\}}\Big|D_{\varepsilon,r}^{(\tau)}(E;\tK)-D^{(\iip,\ijp,\tau)}_{\varepsilon,r}(E;\tK)\Big|\\
&\quad+\sum_{\tK\in\{K,K'\}}\Big|D^{(\iip,\ijp,\tau)}_{\varepsilon,r}(E;\tK)-D^{(\iip,\ijp,\tau)}_{\varepsilon}(E;\tK)\Big|\\[1ex]
&=: I_{1}+I_{2}+I_{3}.
\end{align*}
Since the truncation part $I_1$ has been given by Theorem \ref{thm:truncation:dos},
we only need to quantify polynomial approximation error $I_2$ and extension error $I_3$. For simplicity, we define the following notations to distinguish the intralayer part and
interlayer part for a sheet-wise decomposed matrix $A=\begin{pmatrix} A_{11} & A_{12} \\ A_{12}^\dagger & A_{22} \end{pmatrix}$,
\[
A_{\rm intra} := \begin{pmatrix} A_{11} & 0 \\ 0 & A_{22} \end{pmatrix},\qquad A_{\rm inter}:=\begin{pmatrix} 0 & A_{12} \\ A_{12}^\dagger & 0 \end{pmatrix}.
\]

For the polynomial approximation error, we have

\begin{align*}
  I_2&\leq
\nu^*\int_{\tK+\mBZ}\big|{\rm Tr}\;\gauss(E-\trcH(q))-{\rm Tr}\;\gauss(E-\tpH(q))\big|\,dq\\[1ex]
&\leq \nu^*\sum_{i=1}^{\#\Omega_r^*(q)}\int_{\tK+\mBZ} |\gauss(E-\eig)-\gauss(E-\eigtp)|\,  dq,
\end{align*}
where $\eig$ and $\eigtp$ denote the $i$-th eigenvalues of $\trcH(q)$ and $\tpH(q)$ respectively, and sorted in ascending order.
Let $V$ be the $i$-dimensional subspace of $\ell^\infty(\Omega_r^*(q))$. For any $\Psi\in V$, $\|\Psi\|_2 = 1$, we have
\[
\big|\big(\Psi,\trcH\Psi\big)-\big(\Psi,\tpH\Psi\big)\big|\leq\|\Delta H\|_\op,
\]
where $\Delta H:=\trcH-\tpH$. Hence
\[
\big(\Psi,\tpH\Psi\big)-\|\Delta H\|_\op\leq\big(\Psi,\trcH\Psi\big)\leq\big(\Psi,\tpH\Psi\big)+\|\Delta H\|_\op.
\]
By the min-max theorem, we obtain that for any $i\in\{1,\cdots,\#\Omega_r^*(q)\}$ ,
\[
|\eig-\eigtp|\leq\|\Delta H\|_\op.
\]
We thus have 
\begin{align*}
    |\gauss(E-\eig)-\gauss(E-\eigtp)|&\leq\max_{x}\big|\gauss'(x)\big| |\eig-\eigtp|\lesssim \varepsilon^{-2}\|\Delta H\|_\op.
\end{align*}
In addition, we note that some error is also controlled by the Gaussian tail. 
Let $\lambda_i(q)$ be the $i$-th eigenvalue of $\eH^{(\tau)}_{r,{\rm intra}}$, we similarly have 
\begin{equation*}
    |\lambda_i(q)- \eig|\leq \|\trcH-\eH^{(\tau)}_{r,{\rm intra}}\|_{\op}<\frac{\eta}{2},
\end{equation*}
where $\eta$ is definded in \eqref{couple_strength}.
Moreover, we separate the eigenvalues by energy windows. Consider an increasing collection of energy $ E_k = \Sigma+\frac{k+2}{2}\eta,\; k=-1,\cdots ,n+1$, satisfying 
$E_{n-1}<\|\eH^{(\tau)}_{r,{\rm intra}}\|_{\op}\leq E_n$. Then for $\lambda_i(q)\in B_{E_k}\setminus B_{E_{k-1}}, \; 1\leq k \leq n$, we have $\eig\in B_{E_{k+1}}\setminus B_{E_{k-2}}$.
Therefore, for $i$ such that $\lambda_i(q)\in B_{E_k}\setminus B_{E_{k-1}}$, there is a constant $\gamma_g$ such that
\begin{equation}
\label{gauss_tail}
|\gauss(E-\eig)|\lesssim \varepsilon^{-1}e^{-\gamma_g k^2\varepsilon^{-2}}.
\end{equation}
And by the conical structure of the monolayer band structure, we have 
\begin{equation}
\label{window_size}
\#\{i:\lambda_i(q)\in B_{E_k}\setminus B_{E_{k-1}}\} \sim k.
\end{equation}
Then by \eqref{gauss_tail} and \eqref{window_size}, we obtain
\begin{align*}
    \sum_{i,\lambda_i(q) \notin \ER}|\gauss(E-\eig)|&=\sum_{k=1}^n \sum_{i,\lambda_i(q)\in B_{E_k}\setminus B_{E_{k-1}}}|\gauss(E-\eig)|\\
    &\lesssim \varepsilon^{-1}\sum_{k=1}^nke^{-\gamma_g k^2\varepsilon^{-2}}\\
    &\lesssim \varepsilon^{-1}\int_{1}^nke^{-\gamma_g k^2\varepsilon^{-2}}\;dk\\
    &\lesssim \varepsilon e^{-\gamma_g\varepsilon^{-2}}.
\end{align*}
We also note that
\[ \#\{i:\lambda_i(q) \in \ER\}\sim\#\Omega_{0}^*(\tK).
\]
The analysis is similar for $\eigtp$.
Then by Lemma \ref{lemma:convergence:eigenvalues}, we obtain
\[
I_2\lesssim \varepsilon^{-2}\max_{\tK\in\{K,K'\}}\#\Omega_{0}^*(\tK)\Big((\iip+1) \big(\max_{j={1,2}}{\gamma_j^{-1}}(r_\Sigma+r)\big)^{\iip+1} + \#\B_\tau(\ijp+1)\big(\gamma_{12}^{-1} (r_\Sigma+r)\big)^{\ijp+1}\Big) + \varepsilon e^{-\gamma_g\varepsilon^{-2}}.
\]

We now estimate $I_3$. Since $\tpH$ and $\bmH$ have same hopping truncation and expansion orders, we could view $\iip,\ijp$ and $\tau$ as fixed parameters. 
For simplicity of notations, we denote for matrices $B(q)$
\[
T(B)= \nu^*\int_{\tK+\mBZ}B(q)\;dq,
\]
so in particular,
\[
D^{(\iip,\ijp,\tau)}_{\varepsilon}(E;\tK) = \Tr\;T\bigl(\gauss(E-\bmH)\bigr).
\]
Let $\CC$ be a contour around the spectrum $\bmH$ such that $d(\CC,\bmH)\in(\varepsilon,2\varepsilon)$. If the spectrum has gaps larger than $\varepsilon$, then $\CC$ would not be a simple curve in the complex plane but a union of one per ungapped interval of spectrum.
We observe that
\[
I_3=\frac{1}{2\pi}\bigg|\oint_\CC \gauss(E-z)\Big(\Tr\;T\bigl((z-\bmH)^{-1}\bigr)-\Tr\;T\bigl((z-\tpH)^{-1}\bigr)\Big)dz\bigg|.
\]
We use the ring decomposition technique from \cite{Massatt2023} to find the bound of $I_3$. We define a new energy range 
\[
\tilde{\Sigma}:=\Sigma + \|\eH^{(\iip,\ijp,\tau)}_{r,{\rm intra}}- \eH^{(\tau)}_{r,{\rm intra}}\|_{\op},
\]
a new enlarge strength
\[
\tilde{\eta} := (2+\alpha)\|\bmH_{\rm inter}\|_{\op}.
\]
and a new radius $\tilde{r}$ satisfying
$\Gamma_j^*(B_{\tilde{\Sigma}+\tilde{\eta}})+B_{\tilde{r}}(0)\subseteq\Gamma_j^*(\ER)+B_r(0)$ and $\Gamma_j^*(B_{\tilde{\Sigma}+\tilde{\eta}})+B_{\tilde{r}+\delta}(0)\nsubseteq\Gamma_j^*(\ER)+B_r(0)$ for any $\delta>0$.
Note that $\tilde{r}\approx r$ for $\iip,\ijp$ large enough.
Consider an increasing collection of radii $r_0,\cdots, r_n, r_{n+1}, r_{n+2}\cdots$ such that $r_0=0$ and $r_n=\tilde{r}$. For $q\in \tK + \mBZ$, we write $H:=\bmH(q)$, and use \eqref{finite basis}, \eqref{inclusion} to define the following decomposition: 
\begin{align*}
    &U_0 = I\big(\Omega_{r_0}^*(q,B_{\tilde{\Sigma}+\tilde{\eta}})\big),\\
    &U_j = I\big(\Omega_{r_j}^*(q,B_{\tilde{\Sigma}+\tilde{\eta}})\big)\setminus I\big(\Omega_{r_{j-1}}^*(q,B_{\tilde{\Sigma}+\tilde{\eta}})\big), ~j>0,\\
    &J_j=J_{\Omega^*\gets U_j},\\
    &H_{ij}=J_i^*H J_j.
\end{align*}
We choose $n$ and $r_j=j\tilde{r}/n$ such that $H_{ij}=0$ if $|i-j|>1$, then the decomposition gives a ``nearest neighbor" decomposition. Since the sites are coupled in such a way that $\I(G)+\I(G')\in \B_\tau$ or $\I(\mG+\mG')\in \B_\tau$, this can be easily achieved by choosing $n$ such that the index distance between the sites of two neighboring rings is proportional to 
\[
C_\tau:=\max_{\pmb{n}\in\B_\tau}|\pmb{n}|.
\]
So we have the number of entries of each ring 
\begin{equation}
    \label{ring_size}
    \#U_0\sim \#\Omega^*_{0}(\tK),\qquad \#U_j\sim jC_\tau^2,~j>0.
\end{equation}
We observe that the Hamiltonian is
\[
H = \begin{pmatrix}
    H_{00}& H_{01}&0&0&\cdots\\
    H_{10}& H_{11}&H_{12}&0&\cdots\\
     0&H_{21}& H_{22}&H_{23}&\cdots\\
     0&0&H_{32}& H_{33}&\cdots\\
     \vdots&\ddots&\ddots&\ddots&\ddots
\end{pmatrix}.
\]
We let $\Hrl{i}{j}$ be the matrix restricted to the rings $i$ through $j$ for $i<j$, and $\Jrl{i}{j}$ be the corresponding inclusion. We also use the resolvent notations:
\[
\Rrl{i}{j}=(z-\Hrl{i}{j})^{-1},\qquad R_j = (z-H_{jj})^{-1},\qquad R=(z-H)^{-1}.
\]
Then by \eqref{ring_size},
\begin{align}
\label{I3}
    I_3 &\leq \frac{1}{2\pi}\sum_{k=0}^\infty\bigg|\oint_\CC \gauss(E-z)\big(\Tr\;T(J_k^*RJ_k)-\Tr\;T(J_k^*\Rrl{0}{n} J_k)\big)dz\bigg |\\
    \nonumber
    &\lesssim{\#\Omega^*_{0}(\tK)}\bigg\|\oint_\CC \gauss(E-z)\big(T(J_0^*RJ_0)-T(J_0^*\Rrl{0}{n} J_0)\big)\;dz\bigg\|_{\op}\\
    \nonumber
    &\quad+{C_\tau^2}\sum_{k=1}^\infty k\bigg\|\oint_\CC \gauss(E-z)\big(T(J_k^*RJ_k)-T(J_k^*\Rrl{0}{n} J_k)\big)\;dz\bigg\|_{\op}.
\end{align}
It suffices to show that for $\forall q\in\tK+\mBZ$,
\begin{align}
    \label{I3_1}
    \bigg\|\oint_\CC \gauss(E-z)(J_0^*RJ_0-J_0^*\Rrl{0}{n} J_0)\;dz\bigg\|_{\op}\lesssim \varepsilon^{-4}e^{-\gamma_m r}+\varepsilon^{-1} e^{-\gamma_g\varepsilon^{-2}},
\end{align}
\begin{align}
    \label{I3_2}
   \sum_{k=1}^\infty k\bigg\|\oint_\CC \gauss(E-z)(J_k^*RJ_k-J_k^*\Rrl{0}{n} J_k)\;dz\bigg\|_{\op}\lesssim  \varepsilon^{-4}(r+1) e ^{-\gamma_m r} + \varepsilon e^{-\gamma_g \varepsilon^{-2}}.
\end{align}
In the following, we will omit the notation $q$, and consider two cases for $k\in\{0,\cdots,\infty\}$.

{\bf Case 1, $\mathbf{k\leq n/2}$:}
 We divide $\CC$ into two regions,
\begin{equation}
    \label{contour}
    \CC_k^{+}=\Big\{z\in\CC:{\rm Re}(z)\in B_{\Sigma_k+\tilde{\eta}'}\Big\},\qquad\CC_k^{-}\in \CC\setminus\CC_k^{+},
\end{equation}
where 
\begin{align*}
  \Sigma_k:= \Sigma + \|\Jrl{0}{k}^*\bmH_{\rm intra}\Jrl{0}{k}\|_{\op}-\|J_0^*\bmH_{\rm intra}J_0\|_{\op},\qquad\tilde{\eta}' := \frac{\alpha}{2}\|\bmH_{\rm inter}\|_{\op}.
\end{align*}
We have 
\begin{align*}
    & \oint_\CC \gauss(E-z)(J_k^*RJ_k-J_k^*R_{0\leftrightarrow n} J_k)\;dz\\[1ex]
    = &\oint_{\CC_k^{+}} \gauss(E-z)(J_k^*RJ_k-J_k^*R_{0\leftrightarrow n} J_k)dz+\oint_{\CC_k^{-}} \gauss(E-z)(J_k^*RJ_k-J_k^*\Rrl{0}{n} J_k)\;dz.
\end{align*}
We observe that for $z\in\CC_k^{-}$, there is a $\gamma_g>0$ such that 
$$|E-z|\geq \sqrt{2\gamma_g}(k+1),$$
so the second term is bounded by 
\begin{align}
\label{proof:bound:I2:minusC}
    \nonumber\bigg\|\oint_{\CC_k^{-}} \gauss(E-z)(J_k^*RJ_k-J_k^*\Rrl{0}{n} J_k)\;dz\bigg\|_{\op}&\lesssim \varepsilon^{-1}\oint_{\CC_k^{-}}|\gauss(E-z)|\;|dz|\\
    \nonumber&\lesssim \varepsilon^{-2}\int_{|E-t|\geq\sqrt{2\gamma_g}(k+1)} e^{-|E-t|^2/2\varepsilon^2}\;dt\\
    &\lesssim \varepsilon^{-1} e^{-\gamma_g(k+1)^2\varepsilon^{-2}}.
\end{align}
Hence, it suffices to prove for arbitrary $z\in\CC_k^{+}$, there is a $\lambda>0$,
\begin{equation}
    \label{proof:bound:I2:plusC}
    \|J_k^*RJ_k-J_k^*R_{0\leftrightarrow n} J_k\|_{\op}\lesssim \varepsilon^{-3}e^{-\lambda (n-k)}.
\end{equation}
Note that we can directly obtain \eqref{I3_1} by taking $k=0$ for \eqref{proof:bound:I2:minusC} and \eqref{proof:bound:I2:plusC}, and using $n\sim r$.
To obtain \eqref{proof:bound:I2:plusC}, we 
use the Schur complement for the ring decomposition
\begin{align*}
    &\|J_k^*RJ_k-J_k^*\Rrl{0}{n} J_k\|_{\op}\\[1ex]
    =~&\|J_k^*\Rrl{0}{n}\Jrl{0}{n}^*H\Jrl{n+1}{\infty} \Jrl{n+1}{\infty}^*R\Jrl{n+1}{\infty} \Jrl{n+1}{\infty}^*H\Jrl{0}{n}\Rrl{0}{n}J_k\|_\op\\[1ex]
    \lesssim~&\varepsilon^{-2}\|J_k^*\Rrl{0}{n}J_n\|_\op.
\end{align*}
The last inequality is found by noting $\Jrl{0}{n}^*H\Jrl{n+1}{\infty} \Jrl{n+1}{\infty}^*$ only couples ring $n$ on the left to ring $n+1$ on the right.
We then rewrite $J_k^*\Rrl{0}{n}J_n$ using the top right entry in an iterative fashion as follows: 
\begin{align}
\label{schur1}
    \nonumber
    J_k^*\Rrl{0}{n}J_n&=J_k^*\Rrl{0}{n}J_{n-1}H_{n-1,n}R_n\\[1ex]
    \nonumber
    &=J_k^*\Rrl{0}{n}J_{n-2}H_{n-2,n-1}J_{n-1}^*\Rrl{n-1}{n}J_{n-1}H_{n-1,n}R_n\\[1ex]
    \nonumber
    &~\,\vdots\\[1ex]
    &=J_k^*\Rrl{0}{n}J_k\prod_{j=k+1}^nH_{j-1,j}J_j^*\Rrl{j}{n}J_j.
\end{align}
Recalling the definition of $\Jrl{j}{n}$ and $\CC_k^+$, 
\begin{align}
\label{schur1_bound}
\nonumber
&~\|H_{j-1,j}J_j^*\Rrl{j}{n}J_j\|_\op\\[1ex]
\nonumber
    \leq&~\|H_{j-1,j} \|_\op\|(z-\Hrl{j}{n})^{-1}\|_\op\\[1ex]
    \nonumber
\leq&~\|\bmH_{\rm inter}\|_{\op}\bigg(\min|\sigma(z-\Jrl{j}{n}^*(\widetilde{H}^{(\iip,\ijp,\tau)}-\bmH_{\rm inter})\Jrl{j}{n})|-\|\bmH_{\rm inter}\|_\op\bigg)^{-1}\\[1ex]
\nonumber
\leq&~\|\bmH_{\rm inter}\|_{\op}\bigg(\min|\sigma(\Jrl{j}{n}^*\bmH_{\rm intra}\Jrl{j}{n})|-|{\rm Re}(z)|-\|\bmH_{\rm inter}\|_\op\bigg)^{-1}\\[1ex]
\leq&~\frac{1}{1+\alpha/2}.
\end{align}
Then by \eqref{schur1} and \eqref{schur1_bound}, we obtain 
\[
\|J_k^*\Rrl{0}{n}J_n\|_\op\lesssim\varepsilon^{-1}\bigg(\frac{1}{1+\alpha/2}\bigg)^{n-k}=\varepsilon^{-1}e^{-\lambda(n-k)}
\]
for some $\lambda>0$.

{\bf Case 2, $\mathbf{k>n/2}$:} We observe that
\begin{equation}
    \label{rings1}
        \oint_\CC \gauss(E-z)J_k^*R J_k\;dz
   =\oint_\CC \gauss(E-z)J_k^*(R-\Rrl{k/2}{\infty})J_k\;dz + 2\pi iJ_k^*\gauss(E-\Hrl{k/2}{\infty})J_k,
\end{equation}
\begin{equation}
    \label{rings2}
\oint_\CC \gauss(E-z)J_k^*\Rrl{0}{n} J_k\;dz
   = \oint_\CC \gauss(E-z)J_k^*(\Rrl{0}{n}-\Rrl{k/2}{n})J_k\;dz + 2\pi iJ_k^*\gauss(E-\Hrl{k/2}{n})J_k.
\end{equation}
We first prove that the first term of  \eqref{rings1} is bounded by
\begin{align*}
    \bigg\|\oint_\CC \gauss(E-z)J_k^*(R-\Rrl{k/2}{\infty})J_k\;dz\bigg\|_{\op}\lesssim \varepsilon^{-4}e^{-\lambda k/2} + \varepsilon^{-1} e^{-\gamma_g(k+1)^2\varepsilon^{-2}}.
\end{align*}
Likewise, we divide $\CC$ into $\CC_k^+$ and $\CC_k^-$ as $\eqref{contour}$, but with
\[
\Sigma_k := \Sigma + \|\Jrl{0}{k/2-1}^*\bmH_{\rm intra}\Jrl{0}{k/2-1}\|_{\op}-\|J_0^*\bmH_{\rm intra}J_0\|_{\op}.
\]
For the $\CC_k^{-}$ term, we have the same bound as for \eqref{proof:bound:I2:minusC}. Hence, it suffices to show that for $z\in\CC_k^{+}$,
\[
\|J_k^*(R-\Rrl{k/2}{\infty})J_k\|_\op\lesssim\varepsilon^{-3}e^{-\lambda k/2}.
\]
We observe that
\begin{align*}
    &\|J_k^*(R-\Rrl{k/2}{\infty})J_k\|_\op\\[1ex]
    =~&\|J_k^*\Rrl{k/2}{\infty}\Jrl{k/2}{\infty}^* H \Jrl{0}{k/2-1}\Jrl{0}{k/2-1}^*R\Jrl{0}{k/2-1}\Jrl{0}{k/2-1}^*H\Jrl{k/2}{\infty}\Rrl{k/2}{\infty}J_k\|_\op\\[1ex]
    \lesssim~&\varepsilon^{-2}\|J_k^*\Rrl{k/2}{\infty}J_{k/2}\|_{\op}.
\end{align*}
We next rewrite $J_k^*\Rrl{k/2}{\infty}J_{k/2}$ using the bottom left entry in an iterative fashion as follows: 
\begin{align*}
    &~J_k^*\Rrl{k/2}{\infty}J_{k/2}\\[1ex]
    =&~J_k^*\Rrl{k/2+1}{\infty}J_{k/2+1}H_{k/2+1,k/2}J_{k/2}^*\Rrl{k/2}{\infty}J_{k/2}\\[1ex]
    =&~J_k^*\Rrl{k/2+2}{\infty}J_{k/2+2}H_{k/2+2,k/2+1}J_{k/2+1}^*\Rrl{k/2+1}{\infty}J_{k/2+1}H_{k/2+1,k/2}J_{k/2}^*\Rrl{k/2}{\infty}J_{k/2}\\\textbf{}
    ~\vdots~\\[1ex]
    =&~J_k^*\Rrl{k}{\infty}J_k\prod_{j=k/2}^{k-1}H_{j+1,j}J_j^*\Rrl{j}{\infty}J_j.
\end{align*}
By the same argument above
\[
\|J_k^*\Rrl{k/2}{\infty}J_{k/2}\|_{\op}\lesssim\varepsilon^{-1}\bigg(\frac{1}{1+\alpha/2}\bigg)^{k/2}= \varepsilon^{-1}e^{-\lambda k/2}.
\]
We next show that 
$$\|J_k^*\gauss(E-\Hrl{k/2}{\infty})J_k\|_{\op}\lesssim\varepsilon^{-1}e^{-\gamma_g (k+1)^2\varepsilon^{-2}}.$$
We observe 
\[
\min|\sigma(\Hrl{k/2}{\infty})|\geq \min|\sigma(\Jrl{k/2}{\infty}^*\bmH_{\rm intra}\Jrl{k/2}{\infty})|-\|\Jrl{k/2}{\infty}^*\bmH_{\rm inter}\Jrl{k/2}{\infty}\|_{\op}  .
\]
 Here the spectrum of $\Jrl{k/2}{\infty}^*\bmH_{\rm intra}\Jrl{k/2}{\infty}$ can be approximated to increase linearly with $k$. The reason for this is that for large $\iip$, the spectrum of $\Jrl{k/2}{n}^*\bmH_{\rm intra}\Jrl{k/2}{n}$ is a conical structure, and $\Jrl{n+1}{\infty}^*\bmH_{\rm intra}\Jrl{n+1}{\infty}$ only contains 1 order and negligible higher order modifications. We also have $\|\Jrl{k/2}{\infty}^*\bmH_{\rm inter}\Jrl{k/2}{\infty}\|_{\op}$ is bounded, since for large $\ijp$,  
 $\|\Jrl{k/2}{n}^*\bmH_{\rm inter}\Jrl{k/2}{n}\|_{\op}$ is close to $\|\hat{H}_{\rm inter}\|_{\op}$, and $\Jrl{n+1}{\infty}^*\bmH_{\rm inter}\Jrl{n+1}{\infty}$ only contains 0 order and negligible higher order modifications.
We then obtain
\[
\min|\sigma(\Hrl{k/2}{\infty})|\gtrsim k+1,
\]
and hence
\[
\|\gauss(E-\Hrl{k/2}{\infty})\|_{\op}\lesssim \varepsilon^{-1}e^{-\gamma_g (k+1)^2\varepsilon^{-2}}.
\]
Using the same argument above, for $n/2<k\leq n$, \eqref{rings2} is similarly bounded by 
\[
\bigg\|\oint_\CC \gauss(E-z)J_k^*\Rrl{0}{n} J_k\;dz\bigg\|_{\op}\lesssim \varepsilon^{-4}e^{-\lambda k/2} + \varepsilon^{-1} e^{-\gamma_g(k+1)^2\varepsilon^{-2}}.
\]

Therefore, we have
\begin{align*}
    &\sum_{k=1}^\infty k\bigg\|\oint_\CC \gauss(E-z)(J_k^*RJ_k-J_k^*\Rrl{0}{n} J_k)\;dz\bigg\|_{\op}\\[1ex]
    &\qquad\lesssim \sum_{k=1}^{n/2}k\varepsilon^{-4}e^{-\lambda(n-k)} + \sum_{k=n/2+1}^{\infty}k\varepsilon^{-4}e^{-\lambda k/2} + \sum_{k=1}^\infty k\varepsilon^{-1}e^{-\gamma_g (k+1)^2 \varepsilon^{-2}} \\[1ex]
    &\qquad\lesssim \varepsilon^{-4}\bigg(\int_1^{n/2} ke^{-\lambda(n-k)} \;dk+ \int_{n/2+1}^\infty ke^{-\lambda k/2}\;dk \bigg)
    +\varepsilon^{-1} \int_{1}^\infty ke^{-\gamma_g(k+1)^2\varepsilon^{-2}}\; dk\\[1ex]
    &\qquad\sim \varepsilon^{-4} (r+1)e ^{-\lambda r} + \varepsilon e^{-\gamma_g \varepsilon^{-2}},
\end{align*}
where we use $n\sim r$. Combining \eqref{I3}-\eqref{I3_2}, we obtain
\[
I_3\lesssim \big(\#\Omega_{0}^*(\tK) + C_\tau^2(r+1)\big) \varepsilon^{-4}e ^{-\lambda r} + \big(\#\Omega_{0}^*(\tK) + \varepsilon^2C_\tau^2\big)\varepsilon^{-1} e^{-\gamma_g \varepsilon^{-2}}.
\]


\normalem
\bibliographystyle{plain}
\bibliography{main}

\end{document}